\journal{Physica D}
\newcommand{\ii}{\mathrm{i}}
\newcommand{\ee}{\mathrm{e}}
\newcommand{\dd}{\mathrm{d}}
\newcommand{\M}{\mathbf{M}}
\newcommand{\V}{\mathbf{V}}
\newcommand{\oo}{\mathcal{O}}
\newtheorem{rhp}{Riemann-Hilbert Problem}
\newtheorem{theorem}{Theorem}
\newtheorem{lemma}{Lemma}
\newtheorem{prop}{Proposition}
\newtheorem{remark}{Remark}
\newtheorem{assumption}{Assumption}[section]
\titleformat{\section}{\centering\LARGE\bfseries}{\thesection}{1em}{}
\titleformat{\subsection}{\Large\bfseries}{\thesubsection}{1em}{}
\begin{document}

\title{Long-time asymptotics of the coupled nonlinear Schr\"odinger equation 
	in a weighted Sobolev space}
\author[]{Yubin Huang$^{1}$}
\author[]{Liming Ling$^{1}$}
\author[]{Xiaoen Zhang$^{2}$\corref{mycorrespondingauthor}}
\cortext[mycorrespondingauthor]{Corresponding author}
\ead{zhangxiaoen@sdust.edu.cn}
\address{1 School of Mathematics, South China University of Technology, Guangzhou, China, 510641\\
2 College of Mathematics and Systems Science, Shandong University of Science and Technology, Qingdao, China, 266590}

\begin{abstract}
	We study the Cauchy problem for the focusing coupled nonlinear Schr$\ddot{\rm o}$dinger (CNLS) equation with initial data $\mathbf{q}_0$ 
	lying in the weighted Sobolev space and the scattering data having $n$ simple zeros. 
	Based on the corresponding $3\times3$ matrix spectral problem, we deduce the Riemann-Hilbert problem (RHP)
	for CNLS equation through inverse scattering transform. We remove discrete spectrum of initial RHP using Darboux transformations.
	By applying the nonlinear steepest-descent method for RHP introduced by Deift and Zhou, 
	we compute the long-time asymptotic 
	expansion of the solution $\mathbf{q}(x,t)$ to an (optimal) residual error of order $\mathcal{O}\left(t^{-3 / 4+1/(2p)}\right)$ where 
	$2\le p<\infty$. The leading order term in this expansion is a multi-soliton whose parameters are modulated by soliton-soliton and 
	soliton-radiation interactions. Our work strengthens and extends the earlier work regarding long-time 
	asymptotics for solutions of the nonlinear Schr\"odinger equation with a delta potential and even initial data by Deift and Park.\\
	{\bf Keywords:\ }Focusing coupled nonlinear Schr\"odinger; Long time asymptotics; Riemann-Hilbert; Soliton resolution; Darboux transformation.
\end{abstract}

\maketitle
\section{Introduction}
As a cornerstone of nonlinear mathematical physics, the nonlinear Schr$\ddot{\rm o}$dinger (NLS) 
equation plays a crucial role in studying nonlinear optical fibers and Bose-Einstein condensation (BEC). 
Its relevance extends to several other fields, such as fluid mechanics, plasma physics, 
and even financial models \cite{Chiao1964, Zakharov1968, YAN2011}. When interactions involve more 
than two components, the scalar NLS equation becomes insufficient to describe dynamics of the system. 
Consequently, coupled systems have garnered significant attention. Compared to the scalar equation, 
coupled models exhibit more diverse dynamics, including the inelastic collision of bright-bright 
solitons for one component, bright-dark solitons, vector rogue waves, breathers, and non-degenerate 
or multi-hump fundamental solitons \cite{Guo2011, Mu2015}. Therefore, studying multi-component systems 
and exploring their richer dynamics is important. In this paper, we investigate the long-time 
asymptotic behavior of the focusing coupled nonlinear Schr$\ddot{\rm o}$dinger (CNLS) equation with initial data 
lying in the weighted Sobolev space.

The long time behavior of the defocusing NLS equation \eqref{NLS-eq}, 
with $\lambda=-1$, has been thoroughly studied \cite{zakharov1976asymptotic,deift1994long,Deift1993_long}.
\begin{equation}\label{NLS-eq}
	\ii q_t+\frac12 q_{xx}+\lambda|q|^2q=0,\quad\lambda=\pm1,\quad q(x,0)=q_0(x).
\end{equation}
In the defocusing case, it is founded that as $t\to+\infty$, 
\begin{equation}
	q(x,t)=t^{-\frac12}\alpha(z_0)\ee^{\ii x^2/(2t)-\ii v(z_0)\log(4t)}+\varepsilon(x,t),
\end{equation}
where 
\begin{equation}
	v(z)=-\frac{1}{2\pi}\log(1-|r(z)|^2),\quad |\alpha(z)|^2=v(z)^2,
\end{equation}
and 
\begin{equation}
	\arg \left(\alpha(z)\right)=\frac{1}{\pi}\int_{-\infty}^{z}\log(z-s)\dd(\log(1-|r(z)|^2))+
	\frac{\pi}{4}+\arg\left(\Gamma(\ii v(z))\right)-\arg \left(r(z)\right).
\end{equation}
Here $z_0=-\frac{x}{2t}$ is the stationary point of the phase function for NLS equation, 
$\Gamma$ is the Gamma function, and $r$ is the reflection coefficient. 
The error term $\varepsilon(x,t)$ depends on smoothness and decay 
assumptions of the potential function $q_0$. The leading term was first obtained without estimates of 
the error term in \cite{zakharov1976asymptotic}. In 1993, Deift and Zhou introduced
the nonlinear steepest descent method (NSD) for oscillatory Riemann-Hilbert problems (RHPs) \cite{deift1993steepest}.
Later, in 1994, Deift and Zhou studied the long-time behavior of Cauchy problem of NLS equation by 
generalizing NSD and demonstrated that if $q_0$ lies in Schwartz space which implies 
a high degree of smoothness and decay, then $\varepsilon(x,t)=\oo(t^{-1}\log t)$ 
in \cite{deift1994long,deift1994casestudy}. 
The error term $\varepsilon(x,t)$ was subsequently improved to $\oo(t^{-(1/2+\kappa)})$ 
by improved nonlinear steepest descent method (INSD) \cite{deift2002long}, 
for any $0<\kappa<1/4$, under the significantly weaker assumption that $q_0$
belongs to the weighted Sobolev space $H^{1,1}(\mathbb{R})$, where 
\begin{equation}
	H^{i,j}(\mathbb{R})=\{f(x)\in L^2(\mathbb{R}):f^{(i)}(x),x^jf(x)\in L^2(\mathbb{R})\}.
\end{equation}
In the focusing case, Deift and Park \cite{deift2011long} obtained the asymptotic behavior of the 
stationary 1-soliton solution of the focusing nonlinear Schr$\ddot{\rm o}$dinger (fNLS) equation, 
establishing the same estimate of the error term, $\varepsilon(x,t)=\oo(t^{-(1/2+\kappa)})$ for any $0<\kappa<1/4$.

Recently, McLaughlin and Miller \cite{McLaughlin2006,McLaughlin2008} developed the 
$\bar{\partial}$ steepest descent method for asymptotic 
analysis of RHPs based on $\bar{\partial}$ problems, 
rather than relying on the asymptotic analysis of singular integrals along contours. 
The original RHP is transformed into a $\bar{\partial}$-RHP, which is factorized into the product 
of solutions for a solvable RHP and a pure $\bar{\partial}$ problem. 
In 2008, McLaughlin et al. \cite{dieng2008long} improved the error estimates for 
the long-time asymptotic behavior of solutions to the defocusing NLS equation to $\oo(t^{-3/4})$, 
without imposing additional restrictions on the initial data. 
In 2017, McLaughlin et al. \cite{McLaughlin_2018} demonstrated that the error remains $\oo(t^{-3/4})$ 
for the long-time asymptotic expansion of the solution to the focusing NLS equation 
in any fixed space-time cone.

% In 2018, Geng and Liu \cite{Geng_Liu_2017} studied the asymptotic behavior of solutions for the CNLS 
% equation with the potential function 
% $\mathbf{q}_0$ lying in Schwartz space, obtaining an error estimate of $\oo(t^{-1}\log t)$
% by NSD. 
Besides the classical NLS equation, 
NSD have been widely applied to study the long-time asymptotic behavior of other integrable nonlinear evolution equations, 
such as CNLS equation based on $3\times3$ Lax pairs \cite{Geng_Liu_2017}, 
CNLS equation based on $4\times4$ Lax pairs \cite{Geng2022CNLS}, the derivative nonlinear Schr\"odinger (DNLS) equation 
\cite{GengLiu2024DNLS}, the short pulse equation 
\cite{Geng2024short-pulse,Geng2024short-pulse2}, the Boussinesq equation with initial data belonging to the Schwartz class\cite{charlier2023good,charlier2024boussinesq,charlier2024boussinesq-1},  the Boussinesq equation involving
soliton resolution conjecture\cite{charlier2024soliton} and others. 
The $\bar{\partial}$ steepest descent method have been also widely applied to 
study the long-time asymptotic behavior of other integrable nonlinear evolution equations, such as 
the modified Korteweg-de Vries (mKdV) equation \cite{Fan2023mKdV,Fan2023nonload-mKdV,chen2021soliton}, 
the Novikov equation \cite{Fan2023Novikov}, the Hunter-Saxton equation \cite{Fan2024Hunter-Saxton}, 
DNLS equation \cite{Fan2022DNLS}, the defocusing Ablowitz-Ladik system \cite{Fanengui2024Ablowitz-Ladik}, 
the Sasa-Satsuma equation \cite{Fan2022Sasa-Satsuma}, the Camassa-Holm equation \cite{Fan2024Camassa-Holm}, and others.

However, to the best of our knowledge, the long-time asymptotic behavior of the solution to the focusing CNLS equation, with the potential function 
$\mathbf{q}_0$ lying in the weighted Sobolev space, has not yet been addressed in the existing literature. 
It is worth exploring whether the error estimate can be maintained at $\oo(t^{-(1/2+\kappa)}),\,0<\kappa<1/4$,
or even at $\oo(t^{-3/4})$.

In this work, we apply INSD to the inverse scattering transform (IST) for the 
focusing CNLS equation \eqref{CNLS} to obtain the corresponding long-time asymptotic behavior of solutions. We assume 
that the initial data $\mathbf{q}_0$ for 
equation \eqref{CNLS} lies in the weighted Sobolev space 
$H^{1,2}(\mathbb{R})$ and 
that the ZS-AKNS operator associated with equation 
\eqref{CNLS} possesses finite simple discrete spectrum anywhere in 
$\mathbb{C}\setminus\mathbb{R}$. Consequently, the long-time behavior of solutions to the focusing CNLS equation 
is necessarily more intricate than in the defocusing case due to the presence of solitons corresponding to 
the discrete spectrum. 
In the focusing case, the real axis represents the continuous spectrum of the ZS-AKNS operator, 
along which we define a reflection coefficient $\mathbf{R}:\mathbb{R}\to\mathbb{C}^2$.
This is the classical scattering map $S:\mathbf{q}_0\to D$ for the CNLS equation, where $D$
consists of scattering data. 
% , including the reflection coefficient $\mathbf{R}$, simple discrete spectrum 
% $\lambda_i\in\mathbb{C}^+$, and norming constant $c_i$. 
In 2019, it was shown \cite{Liu_2019} that 
the scattering map $S$ of the IST for the focusing CNLS equation is a bijective 
(in fact, bi-Lipschitz) map from $H^{i,j}(\mathbb{R})$ to 
$H^{j,i}(\mathbb{R})$. This provides an important foundation 
for our subsequent proofs. 
The key idea of NSD is to deform the jump matrix $\V(\lambda)$ off the real axis.
By employing the Taylor expansion of reflection coefficient 
$\mathbf{R}(\lambda)$ in Schwartz space, we can decompose 
$\mathbf{R}(\lambda)$ into an analytic part and a non-analytic remainder, 
both of which admit suitable estimates on the corresponding contours.
In this paper, since the reflection coefficient 
$\mathbf{R}(\lambda)\in H^{2,1}(\mathbb{R})$ lacks a high degree of smoothness 
and decay, we utilize $[\mathbf{R}]$ as defined in \cite[pp.1041]{deift2011long} to perform RHP transformation 
instead of decomposing reflection coefficient into an analytic part and 
a non-analytic remainder.

In the focusing case, we use Darboux transformation to construct 
RHP \ref{initial-rhp-without-role} without residual conditions. It is necessary to estimate 
$\M(\lambda;x,t)$ at $\lambda=\lambda_i$ additionally in recovering the solution of equation \eqref{CNLS}. 
A function $\delta$ is typically introduced in the context of classical NSD 
for the NLS equation and can be explicitly solved using the Plemelj formula. 
But the function $\pmb{\delta}$ related to the CNLS equation satisfies a $2 \times 2$ matrix RHP without an explicit expression, 
which presents challenges in estimating $\pmb{\delta}$. 
To address the issues related to the CNLS equation with initial data 
$\mathbf{q}(x,t)\in H^{1,2}(\mathbb{R})$, 
we construct a non-homogeneous matrix RHP to estimate $\pmb{\delta}$ 
through the properties of the available function $\det \pmb{\delta}$, 
as presented in Lemma \ref{estimate-R-delta}. 
This approach allows us to demonstrate 
that the solvable model RHP \ref{rhp-M4}, which is crucial for proving Theorem \ref{main-result}, 
serves as a sufficiently accurate approximation to 
RHP \ref{rhp-M3}.

The framework presented in this paper is applicable to other coupled integrable 
systems, such as vector DNLS equations and vector mKdV equations. 
A key point to emphasize is that for systems with multiple stationary points, 
the constructed approximant $[\mathbf{R}](\lambda)$ must match reflection coefficient 
$\mathbf{R}(\lambda)$ at all stationary points. Furthermore, for non-generic scattering data 
containing higher-order zeros, the initial Riemann-Hilbert problem with higher-order 
singularities can be also reduced to a pole-free one via the construction of 
an appropriate generalized Darboux matrix 
\cite{Bilman-2020} or the limit technique, 
but which requires the further analysis to get 
their long-time asymptotics.

Our main result can be stated as follows:
\begin{theorem}\label{main-result}
	Let $\mathbf{q}(x,t)$, $x\in\mathbb{R}$, $t\ge0$, solve CNLS equation \eqref{CNLS} 
	with $\mathbf{q}_0=\mathbf{q}(x,0)$ satisfying Assumption \ref{assumption-q0}. Then for $2\le p<\infty$, 
	as $t\rightarrow\infty$,
	\begin{equation}\label{expansion-q}
		\mathbf{q}(x,t)
		=\frac{\ee^{\frac{\pi\kappa_{\xi}}{2}-
				\frac{\pi \ii}{4}}\eta^2\kappa_{\xi}
				\Gamma(\ii\kappa_{\xi})\pmb{A}(\xi)}{\sqrt{2\pi t}}
		-2\mathbf{X}_1\mathbf{G}^{-1}
		(\mathbf{X}_2)^\dagger+\oo(t^{-\frac{3}{4}+\frac{1}{2p}}),
	\end{equation}
	where $\xi=-\frac{x}{2t}$, $\Gamma(\cdot)$ is the Gamma function, the vector-valued 
	function $\mathbf{R}(\lambda)$ is defined by equation 
	\eqref{def-R-no-soliton}, the constant $\pmb{A}(\xi)$ 
	is given by equation \eqref{def-A-xi} and 
	\begin{equation*}
		\begin{aligned}
			&\eta=(2\sqrt{t})^{-\ii\kappa_{\xi}}\ee^{\ii t\xi^2}\ee^{-\beta(\xi,\xi)-\ii\kappa_{\xi}},\quad
			\mathbf{X}=\begin{bmatrix}
				\pmb{\varphi}_1,...,\pmb{\varphi}_N
			\end{bmatrix},\\
			&\kappa_\xi
			=-\frac{1}{2\pi}\ln(1+|\mathbf{R}(\xi)|^2),\quad
			\pmb{\varphi}_{i}=\left(\mathbb{I}_3+
			t^{-1/2}\tilde{\mathbf{P}}_i\right)\pmb{\Delta}(\lambda_i)\ee^{-\mathrm{i} 
				\lambda_{i}\left(x+ \lambda_{i} t\right) \pmb{\Lambda}_{3}} \mathbf{v}_{i},\\
			&\mathbf{v}_{i}=\begin{bmatrix}
				\prod_{j=i+1}^{N}\frac{\lambda_i-\lambda_j^*}{\lambda_i-\lambda_j},
				\frac{\mathbf{c}_i^{\top}}{\lambda_i-\lambda_i^*}
				\prod_{j=1}^{i-1}\frac{\lambda_i-\lambda_j}{\lambda_i-\lambda_j^*}
			\end{bmatrix}^{\top},\quad
			\mathbf{G}=\left(\frac{\pmb{\varphi}_j^\dagger \pmb{\varphi}_i}
			{\lambda_i-\lambda_j^*}\right)_{1\le j,i\le n},\\
			&\beta(\xi,\xi)
			=\int_{-\infty}^{\xi-1}\frac{\ln(1+|\mathbf{R}(s)|^2)}{s-\xi}\frac{ds}{2\pi \ii}\\
			&+\int_{\xi-1}^{\xi}
			\frac{\ln(1+|\mathbf{R}(s)|^2)-\ln(1+|\mathbf{R}(\xi)|^2)
				(s-\xi+1)}{s-\xi}\frac{ds}{2\pi i},
		\end{aligned}
	\end{equation*}
	and $\pmb{\Delta}(\cdot)$ is defined by equation \eqref{def-Delta}, $\tilde{\mathbf{P}}_i$ is 
	defined by equation \eqref{def-tilde-P}. Here $\mathbf{X}_1$ and $\mathbf{X}_2$ represent the first row and 
	the last two rows of $\mathbf{X}$, respectively. 
\end{theorem}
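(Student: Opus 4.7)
The plan is to follow the improved nonlinear steepest descent (INSD) framework, adapted to the $3\times 3$ matrix setting of CNLS and the low-regularity $H^{1,1}$ scattering data. Starting from the RHP \ref{initial-rhp-without-role} (already cleaned of discrete spectrum via Darboux transformation), the first step is to introduce the conjugating function $\pmb{\delta}(\lambda;\xi)$ and perform the standard factorization of the jump matrix $\V(\lambda)=(\mathbb{I}+\mathbf{w}^-)^{-1}(\mathbb{I}+\mathbf{w}^+)$ across $\mathbb{R}$, split at the stationary phase point $\xi=-x/(2t)$. After conjugation $\M\mapsto\M\pmb{\delta}^{-\pmb{\Lambda}_3}$, open the lens so that the new jumps on the deformed contours decay exponentially off of $\xi$; this yields an equivalent RHP whose contribution away from a shrinking neighborhood of $\xi$ is exponentially small.

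The key technical difficulty, and the step I expect to be the main obstacle, is the analysis of the conjugating function. Because the CNLS reflection coefficient is vector-valued, $\pmb{\delta}$ solves a $2\times 2$ matrix RHP for which no Plemelj-type closed form exists. Following the strategy flagged in the introduction, I would exploit the fact that $\det\pmb{\delta}$ satisfies a scalar RHP that \emph{is} solvable by Plemelj, then use Lemma \ref{bdd-V3-V4} to transfer $L^\infty$ and $L^p$ bounds from $\det\pmb{\delta}$ to $\pmb{\delta}$ itself. These bounds are what allow the subsequent error estimates to close at the order $\oo(t^{-3/4+1/(2p)})$ under only $H^{1,1}$ regularity, rather than the weaker $\oo(t^{-1/2-\kappa})$ one would get by cruder splitting.

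Next, decompose the reflection coefficient into an analytic part, a rational part, and a small remainder (the standard INSD triple decomposition) so that after conjugation the jumps on the opened lens split into exponentially decaying pieces plus $L^2\cap L^\infty$ remainders controlled by the $H^{1,1}$ norm. Localize around $\xi$ by comparing with a model RHP \ref{rhp-M4} whose jumps are piecewise constant and whose solution is expressible in terms of parabolic cylinder functions; this produces the leading explicit term
\[
\frac{\ee^{\frac{\pi\kappa_\xi}{2}-\frac{\pi\ii}{4}}\eta^2\kappa_\xi\Gamma(\ii\kappa_\xi)\mathbf{R}(\xi)}{\sqrt{2\pi t}}
\]
through the large-$\lambda$ expansion of the model solution. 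Small-norm RHP theory, applied to the error $\mathbf{E}=\M\M_{\mathrm{mod}}^{-1}$, then yields the claimed residual order, the $1/(2p)$ loss arising precisely from the $L^p\to L^\infty$ Sobolev-type trade in the estimates of $\pmb{\delta}$ on the opened contours.

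Finally, undo the Darboux transformation used to strip the discrete spectrum. The soliton part enters through the dressing factor $\mathbb{I}_3+t^{-1/2}\tilde{\mathbf{P}}_i$ acting on the bare eigenvectors $\pmb{\Delta}(\lambda_i)\ee^{-\ii\lambda_i(x+\lambda_i t)\pmb{\Lambda}_3}\mathbf{v}_i$, so that evaluation of $\M(\lambda_i;x,t)$ at the discrete eigenvalues — which, as noted in the introduction, requires the additional estimate at $\lambda=\lambda_i$ beyond what appears in the scalar NLS analysis — produces the multi-soliton contribution $-2\mathbf{X}_1\mathbf{G}^{-1}(\mathbf{X}_2)^\dagger$ with the prescribed Cauchy-matrix structure for $\mathbf{G}$. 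Combining the model term, the soliton term, and the small-norm error gives \eqref{expansion-q}.
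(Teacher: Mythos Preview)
Your overall strategy matches the paper's: conjugate, handle the matrix $\pmb{\delta}$ via $\det\pmb{\delta}$ and Lemma \ref{bdd-V3-V4}, localize to a parabolic-cylinder model, then evaluate at the $\lambda_i$ to rebuild the soliton part through the Darboux dressing. Two points where the paper's implementation differs from your sketch are worth noting. First, the conjugation is not $\M\mapsto\M\pmb{\delta}^{-\pmb{\Lambda}_3}$ (which is ill-defined, $\pmb{\delta}$ being $2\times2$) but $\M^{(1)}=\M\pmb{\Delta}^{-1}$ with the block-diagonal $\pmb{\Delta}=\mathrm{diag}(\det\pmb{\delta},\pmb{\delta}^{-1})$ of \eqref{def-Delta}; the appearance of $\det\pmb{\delta}$ in the scalar block is exactly what makes the later comparison with the explicit model possible. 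Second, the paper does \emph{not} use the analytic/rational/remainder triple decomposition you describe, nor a small-norm ratio problem $\mathbf{E}=\M\M_{\mathrm{mod}}^{-1}$. Instead it replaces $\mathbf{R}$ by the single rational proxy $[\mathbf{R}](\lambda)=\mathbf{R}(\xi)/(1+\ii(\lambda-\xi))^2$ to pass from $\M^{(1)}$ to $\M^{(2)}$, deforms to $\M^{(3)}$ on $\Sigma$, and then controls all errors by direct comparison of Beals--Coifman integrals $\int\hat{\pmb{\mu}}_i\hat{\mathbf{w}}_i$ on the completed contour $\Gamma_\xi$ (using $C^+_{\Gamma_\xi}C^-_{\Gamma_\xi}=0$), culminating in Lemmas \ref{bdd-M1-M2} and \ref{bdd-M3-M4}. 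The $t^{-3/4+1/(2p)}$ loss is produced there, through Lemmas \ref{lemma1-CNLS}--\ref{lemma4-CNLS}, rather than via a Sobolev trade on opened lenses. Your route would likely also close, but the paper's version sidesteps any need for analytic continuation of $\mathbf{R}$ off $\mathbb{R}$, which is the point of the $[\,\cdot\,]$ device under mere $H^{1,1}$ regularity.
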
  

\begin{remark}
    The parameter $p$ in the error estimate 
	originates from the $L^p$-estimates of the Cauchy projection operator 
	used in the Riemann-Hilbert problem within the nonlinear steepest-descent method. 
	In particular, Lemma \ref{bdd-M1-M2}, which controls the difference 
	between $\mathbf{M}^{(1)}(\lambda)$ and $\mathbf{M}^{(2)}(\lambda)$, 
	is valid only for $1 \le p < \infty$. As a consequence, the error term 
	in the asymptotic expansion \eqref{expansion-q} takes the 
	form $\mathcal{O}\bigl(t^{-\frac{3}{4}+\frac{1}{2p}}\bigr)$ with $1 \le p < \infty$. 
	This error estimate is consistent with the result for the scalar NLS equation 
	in \cite{deift2002long}, though it does not reach the sharper order 
	$\mathcal{O}\bigl(t^{-\frac{3}{4}}\bigr)$ that is available 
	via the $\bar{\partial}$-method, as shown in \cite{McLaughlin_2018}.
\end{remark}
% \noindent
% \textbf{Theorem 1.1} 

If $N=1$, then the set $\mathcal{Z}$ consists of only a single point $\lambda_1=\xi_1+\ii\eta_1$. 
Hence, as $t\to+\infty$, the asymptotic expansion \eqref{expansion-q} can be reduced to
% \begin{equation}
	%   \mathbf{q}(x,t)
	% =\frac{2\ee^{-2\ii(\xi_1x-t(\eta_1^2-\xi_1^2))}\ee^{\ii\phi_0}c_1^\dagger\pmb{\delta}(\lambda_1^*)}
	% {\ee^{\phi_1}\ee^{\phi_2}+|\pmb{\delta}^{-1}(\lambda_1)|^2\ee^{-\phi_2}}+\oo(t^{-\frac{1}{2}}),
	% \end{equation}
\begin{equation}
	\mathbf{q}(x,t)
	=2\eta_1\operatorname{sech}(2\eta_1(x-x_1+2t\xi_1))\ee^{-2\ii(\xi_1x-t(\eta_1^2-\xi_1^2))}\ee^{\ii\phi_1}\mathbf{c}_1^\dagger\pmb{\delta}(\lambda_1^*)+\oo(t^{-\frac{1}{2}}),
\end{equation}
where 
\begin{equation*}
	\begin{aligned}
		&x_1=\frac{1}{2}\int_{-\infty}^{-x/2t}\frac{\kappa(s)}{(s-\xi_1)^2-\eta_1^2}\dd s+
		\frac{1}{2\eta_1}\ln|\pmb{\delta}^{-1}(\lambda_1)|,\\
		&\phi_1=-\ii\ln(2\eta_1|\pmb{\delta}^{-1}(\lambda_1)|)
		+\ii\eta_1\int_{-\infty}^{-x/2t}\frac{\kappa(s)}{(s-\xi_1)^2-\eta_1^2}\dd s
		-\int_{-\infty}^{-x/2t}\frac{\kappa(s)(s-\xi_1+\ii\eta_1)}{(s-\xi_1)^2-\eta_1^2}\dd s.
	\end{aligned}
\end{equation*}
% \begin{equation*}
	%   \phi_0=\int_{-\infty}^{-x/2t}\frac{\kappa(s)(s-\xi_1+\ii\eta_1)}{(s-\xi_1)^2-\eta_1^2}\dd s,
	%   \quad \phi_1=-2\eta_1\int_{-\infty}^{-x/2t}\frac{\kappa(s)}{(s-\xi_1)^2-\eta_1^2}\dd s,\quad
	%   \phi_2=2\eta_1(x+2t\xi_1).
	% \end{equation*}

This article is organized as follows. In Section 2, we revisit the direct scattering transform for 
the CNLS equation \eqref{CNLS}, discussing the Lax pair, analyticity, symmetries, and asymptotics of Jost solutions 
and the reflection coefficient. We then perform the IST to derive the reconstruction 
formula by constructing the corresponding RHP. In Section 3, we initially remove all poles from the initial RHP 
\ref{initial-rhp-with-pole} using the Darboux transform. Subsequently, we utilize the corresponding RHP, which 
can be deformed into a solvable RHP model \ref{solvable-RHP-model}, to approximate the initial RHP 
\ref{remove-pole-of-initial-rhp} without poles and analyze the long-time asymptotic behavior of solutions for 
the CNLS equation \eqref{CNLS}.

\begin{remark}
	The superscript $^{*}$ and $^{\dagger}$ in a matrix denote the element-wise complex conjugate and the Hermitian conjugate respectively.
	The superscript $^{\top}$ denotes matrix tranpose.
	$\mathbb{I}_{n}$ indicates $n\times n$ identity matrix. Let a $3\times 3$ matrix A 
	be rewritten as a block form
	\begin{equation*}
		A=\begin{bmatrix}A_{11}&&A_{12}\\A_{21}&&A_{22}\end{bmatrix},
	\end{equation*}
	where $A_{11}$ is scalar.
\end{remark}

\section{Results of scattering theory for focusing CNLS}
%\numberwithin{equation}{section}
We briefly review the IST \cite{manakov1974theory,novikov1984theory} for the CNLS equation.
The focusing CNLS equation reads in the vector form:
\begin{equation}\label{CNLS}
	\ii\mathbf{q}_{t}+\frac{1}{2}\mathbf{q}_{xx}+\mathbf{q}\mathbf{q}^{\dagger}\mathbf{q}=0,
	\quad\mathbf{q}(x,t)=\begin{bmatrix}
		q_{1}(x,t),q_{2}(x,t)
	\end{bmatrix},
\end{equation}
\noindent
where $q_{1}(x,t)$ and $q_{2}(x,t)$ are the wave envelopes, which were first derived by Manakov in 1974 to describe electrical propagation 
\cite{manakov1974theory}, and are therefore known as the Manakov model. 
Subsequently, the focusing CNLS equation \eqref{CNLS} has garnered significant attention 
across various fields, ranging from Bose-Einstein condensates (BEC) \cite{busch2001dark} 
to optical fibers \cite{agrawal2000nonlinear} and biophysics \cite{scott1984launching}. 
From a mathematical perspective, the CNLS equation \eqref{CNLS} is an integrable equation 
that admits the following Lax pair:
\begin{equation}\label{lax-pair-Phi}
	\begin{array}{ll}
		\left(\partial_{x}-\mathcal{L}\right) \mathbf{\mathbf{\mathbf{\Phi}}}=0, 
		& \mathcal{L}=\ii\left(-\lambda\mathbf{\Lambda}_3+\mathbf{Q}\right), \\
		\left(\partial_{t}-\mathcal{B}\right) \mathbf{\mathbf{\Phi}}=0, & \mathcal{B}=\ii\lambda\left(-\lambda\mathbf{\Lambda}_3+\mathbf{Q}\right)
		+\frac{1}{2}\left(\mathbf{Q}_{x}+\ii\mathbf{Q}^{2}\right)\mathbf{\Lambda}_3,
	\end{array}
\end{equation}
\noindent 
where $\lambda \in \mathbb{C}$ is the spectral parameter and
\begin{equation*}
	\mathbf{\Lambda}_3=\mathrm{diag}\left(1,-1,-1\right)\quad,\quad 
	\mathbf{Q}=\begin{pmatrix}0&&\mathbf{q}\\\mathbf{q}^\dagger&&0\end{pmatrix}.
\end{equation*}
The boundary conditions are given by
\begin{equation*}
	\mathbf{q}(x,t)\to0,\quad\mathbf{q}_{x}(x,t)\to0,\quad\mathrm{as}\quad x\to\pm\infty.
\end{equation*}
\noindent
We consider the Jost solutions of spectral problem,
\begin{equation*}
	\mathbf{\mathbf{\Phi}}^{\pm}(\lambda;x,t)\to\mathrm{e}^{-\mathrm{i}
		\lambda(x+\lambda t)
		\mathbf{\Lambda}_{3}},
	\quad x\to\pm\infty.
\end{equation*}
\noindent
If the function $\mathbf{q}(x,t)$ satisfies the focusing CNLS equation, which is equivalent
to the compatibility condition $\mathbf{\Phi}_{xt}(\lambda;x,t)=\mathbf{\Phi}_{tx}(\lambda;x,t)$,
then the matrix function $\mathbf{\Phi}(\lambda;x,t)$ exists and can be determined by the curve integral.
Inserting the ansatz
\begin{equation*}
	\mathbf{\Phi}^{\pm}(\lambda;x,t)=\pmb{\mu}^{\pm}(\lambda;x,t)\ee^{-\ii\lambda(x+\lambda t)\mathbf{\Lambda}_{3}}
\end{equation*}
into the spectral problem, we will obtain the Jost solution
\begin{equation}\label{lax-pair-mu}
	\begin{aligned}
		&\pmb{\mu}^{\pm}_{x}=-\ii\lambda\left[\mathbf{\Lambda}_{3},\pmb{\mu}^{\pm}
		\right]+\ii\mathbf{Q}\pmb{\mu}^{\pm},\,\begin{bmatrix}
			{\bf A,B}
		\end{bmatrix}={\bf AB-BA},\\
		&\pmb{\mu}^{\pm}_{t}=-\ii\lambda^{2}\left[\mathbf{\Lambda}_{3},\pmb{\mu}^{\pm}\right]+
		\left[\ii\lambda \mathbf{Q}+\frac{1}{2}(\mathbf{Q}_{x}+\ii\mathbf{Q}^2)\mathbf{\Lambda}_{3}\right]
		\pmb{\mu}^{\pm},
	\end{aligned}
\end{equation}

\noindent
with the normalization condition
\begin{equation*}
	\pmb{\mu}^{\pm}(\lambda;x,t)\to\mathbb{I}_{3},\quad x\to\pm\infty.
\end{equation*}
These solutions can be expressed as the Volterra type integrals
\begin{multline*}
	\pmb{\mu}^{\pm}(\lambda;x,t)
	=\mathbb{I}+\mathrm{i}\int_{(\pm\infty,t)}^{(x,t)}\mathrm{e}^{-\mathrm{i}\lambda(x-y)\mathbf{\Lambda}_{3}}
	\mathbf{Q}(y,s){\pmb{\mu}}^{\pm}(\lambda;y,s)\ee^{\mathrm{i}\lambda(x-y)\mathbf{\Lambda}_{3}}\mathrm{d}y  \\
	+\int_{(\pm\infty,0)}^{(\pm\infty,t)}\ee^{-\ii\lambda^{2}(t-s)\mathbf{\Lambda}_{3}}
	\left[\mathrm{i}\lambda \mathbf{Q}(y,s)+\frac{1}{2}\left(\mathbf{Q}_{y}(y,s)+
	\mathrm{i}\mathbf{Q}^{2}(y,s)\right)\mathbf{\Lambda}_{3}\right]{\pmb{\mu}}^{\pm}(\lambda;y,s)
	\ee^{\mathrm{i}\lambda^{2}(t-s)\mathbf{\Lambda}_{3}}\mathrm{d}s.
\end{multline*}
\noindent
From the conditions $\mathbf{q}(x,t)\to 0$ and $\mathbf{q}_{x}(x,t)\to 0$ as $x\to\pm\infty$
for arbitrary time $t$ ,the above curve integral can be reduced to
\begin{equation*}
	\pmb{\mu}^{\pm}(\lambda;x,t)
	=\mathbb{I}_{3}+\mathrm{i}\int_{\pm\infty}^{x}\mathrm{e}^{-\mathrm{i}\lambda(x-y)\mathbf{\Lambda}_{3}}
	\mathbf{Q}(y,t){\pmb{\mu}}^{\pm}(\lambda;y,t)\mathbf{e}^{\mathrm{i}\lambda(x-y)\mathbf{\Lambda}_{3}}\mathrm{d}y.
\end{equation*}
Define the following matrices
\begin{equation*}
	\pmb{\mu}_{+}(\lambda;x,t)=(\pmb{\mu}^{-}_{1},\pmb{\mu}^{+}_{2}),
	\quad\pmb{\mu}_{-}(\lambda;x,t)=(\pmb{\mu}^{+}_{1},\pmb{\mu}^{-}_{2}),
\end{equation*}
where $\pmb{\mu}_{1}$ denotes the first column of $\pmb{\mu}$ 
and $\pmb{\mu}_{2}$ denotes the last two columns of $\pmb{\mu}$.
\noindent
Through the analytic properties of Volterra integral equations, we obtain that the matrix 
$\pmb{\mu}_{+}(\lambda;x,t)$ is analytic in the upper half $\lambda$-plane and the matrix 
$\pmb{\mu}_{-}(\lambda;x,t)$ is analytic in the lower half $\lambda$-plane under the condition that 
$\{q_{1}(x,t),q_{2}(x,t)\}\in H^{1,2}(\mathbb{R})$ for the fixed time $t$.
Also, if $\mathbf{\Phi}(\lambda;x,t)$ is the solution of equation \eqref{lax-pair-Phi}, then 
$[\mathbf{\Phi}^\dagger(\lambda^*;x,t)]^{-1}$ can also solve equation \eqref{lax-pair-Phi}. Hence we have the following 
symmetry of the Jost solutions
\begin{equation}\label{symmetry-Phi-CNLS}
	\mathbf{\Phi}(\lambda;x,t)\mathbf{\Phi}^\dagger(\lambda^*;x,t)=\mathbb{I}_{3}.
\end{equation}
Define the scattering matrix
\begin{equation}\label{def-scattering-matrix}
	\mathbf{\Phi}^-(\lambda;x,t)=\mathbf{\Phi}^+(\lambda;x,t)
	\mathbf{S}(\lambda;t),\quad\lambda\in\mathbb{R}
\end{equation}
\noindent
where
\begin{equation*}
	\mathbf{S}(\lambda;t)=
	\begin{pmatrix}\bar{a}(\lambda)&&\mathbf{b}(\lambda)\\
		\overline{\mathbf{b}}(\lambda)&&\mathbf{a}(\lambda)\end{pmatrix},\quad \\ 
	\det(\mathbf{S}(\lambda;t))=1.
\end{equation*}
\noindent
Then we can also obain the symmetric relation of the scattering matrix
\begin{equation*}
	\mathbf{S}(\lambda;t)\mathbf{S}^\dagger(\lambda^*;t)=\mathbb{I}_{3},
\end{equation*}
\noindent
which implies that
\begin{equation*}
	\bar{a}(\lambda)=\det(\mathbf{a}^\dagger(\lambda^*)),\quad 
	\overline{\mathbf{b}}(\lambda)=-{\rm adj} (\mathbf{a}^\dagger(\lambda^*))\mathbf{b}^\dagger(\lambda^*),
\end{equation*}
\noindent
where ${\rm adj}(\mathbf{A})$ denotes the adjoint matrix of $\mathbf{A}$ in the context of linear algebra.

The evaluation of equation \eqref{def-scattering-matrix} at $t=0$ show that
\begin{equation*}
	\mathbf{\mathbf{S}}(\lambda;t)=\lim_{x\to+\infty}\ee^{\ii\lambda 
		x\hat{\mathbf{\Lambda}}_{3}}\pmb{\mu}_{-}(\lambda;x,t),\quad 
	\ee^{\hat{\bf B}}{\bf A}=\ee^{\bf B}{\bf A}\ee^{\bf -B},
\end{equation*}
\noindent
then we have
\begin{equation*}
	\mathbf{a}(\lambda)
	=\mathbb{I}_{2}+\mathrm{i}\int_{-\infty}^{+\infty}
	q^{\dagger}(y,0){\pmb{\mu}}^{-}_{12}(y,0)\mathrm{d}y.
\end{equation*}
Therefore, we can know that $\mathbf{a}(\lambda)$ is analytic in the lower half plane $\mathbb{C}_{-}$.
Moreover, we have $\det(\pmb{\mu}_{-})=\det(\mathbf{a}(\lambda))$. Assume the form 
$\det(\mathbf{a}(\lambda))=\prod_{i=1}^{N}\frac{\lambda-\lambda_{i}^{*}}{\lambda-\lambda_{i}}a_{0}(\lambda)$, 
where $a_{0}(\lambda)$ has no zeros and $\lambda_{i}\in\mathbb{C}_{+}$. 
So define  
$\mathcal{Z}=\left\{\lambda_i|\bar{a}(\lambda_{i})=\det(\mathbf{a}^\dagger(\lambda_i^*))=0\right\}_{i=1}^{N}$. 
Specifically, a norming constant matrix 
$\mathbf{h}_{i}=\overline{\mathbf{b}}(\lambda_i)$ exists such that:
\begin{equation*}
	\mathbf{\Phi}^-_1(\lambda_i;x,t)=\mathbf{\Phi}^+_1(\lambda_i;x,t) 
	\bar{a}(\lambda_i)+\mathbf{\Phi}^+_2(\lambda_i;x,t)
	\overline{\mathbf{b}}(\lambda_i)=\mathbf{\Phi}^+_2(\lambda_i;x,t)\mathbf{h}_i,
\end{equation*}
where $\mathbf{\Phi}^\pm_1$ denotes the first column of $\mathbf{\Phi}^\pm$ 
and $\mathbf{\Phi}^\pm_{2}$ denotes the last two columns of $\mathbf{\Phi}^\pm$.
The symmetry of scattering matrix $\mathbf{S}(\lambda)$ implies that 
\begin{equation*}
	\begin{aligned}
		\mathbf{\Phi}^-_2(\lambda_i^*;x,t){\rm adj}(\mathbf{a}(\lambda_i^*))
		&=\left(\mathbf{\Phi}^+_1(\lambda_i^*;x,t)\mathbf{b}(\lambda_i^*)+
		\mathbf{\Phi}^+_2(\lambda_i^*;x,t)\mathbf{a}(\lambda_i^*)\right)
		{\rm adj}(\mathbf{a}(\lambda_i^*))\\
		&=\mathbf{\Phi}^+_1(\lambda_i^*;x,t)\mathbf{b}(\lambda_i^*)
		{\rm adj}(\mathbf{a}(\lambda_i^*))+
		\mathbf{\Phi}^+_2(\lambda_i^*;x,t){\rm det}(\mathbf{a}(\lambda_i^*))\\
		&=-\mathbf{\Phi}^+_1(\lambda_i^*;x,t)\overline{\mathbf{b}}(\lambda_i)^\dagger
		=-\mathbf{\Phi}^+_1(\lambda_i^*;x,t)\mathbf{h}_i^\dagger.
	\end{aligned}
\end{equation*}

% \lm{\begin{equation*}
		%   \mathbf{\Phi}^-_1(\lambda_i;x,t)=\mathbf{\Phi}^+_1(\lambda_i;x,t) 
		%   \bar{a}(\lambda_i)+\mathbf{\Phi}^+_2(\lambda_i;x,t)
		%   \overline{\mathbf{b}}(\lambda_i)=\mathbf{\Phi}^+_2(\lambda_i;x,t)
		%   \overline{\mathbf{b}}(\lambda_i).
		% \end{equation*}
	% \begin{equation*}
		%   \begin{aligned}
			%     \mathbf{\Phi}^-_2(\lambda_i^*;x,t){\rm adj}(\mathbf{a}(\lambda_i^*))
			%   &=\left(\mathbf{\Phi}^+_1(\lambda_i^*;x,t)\mathbf{b}(\lambda_i^*)+
			%   \mathbf{\Phi}^+_2(\lambda_i^*;x,t)\mathbf{a}(\lambda_i^*)\right)
			%   {\rm adj}(\mathbf{a}(\lambda_i^*))\\
			%   &=\mathbf{\Phi}^+_1(\lambda_i^*;x,t)\mathbf{b}(\lambda_i^*)
			%   {\rm adj}(\mathbf{a}(\lambda_i^*))+
			%   \mathbf{\Phi}^+_2(\lambda_i^*;x,t){\rm det}(\mathbf{a}(\lambda_i^*))\\
			%   &=-\mathbf{\Phi}^+_1(\lambda_i^*;x,t)\overline{\mathbf{b}}(\lambda_i)^\dagger.
			%   \end{aligned}
		% \end{equation*}}

\begin{assumption}\label{assumption-q0}
	The initial data $\mathbf{q}_0(x)=\left(q_{1}(x,0),q_{2}(x,0)\right)$ of the Cauchy problem \eqref{CNLS} for CNLS equation 
	generates generic 
	scattering data in the sense that:
	\begin{itemize}
		\item[(a)] There are no spectral singularities, i.e., there exists a constant $c>0$ such that 
		$|\det(\mathbf{a}^\dagger(\lambda^*))|>c$ for any $\lambda\in\mathbb{R}$.
		\item[(b)] The discrete spectrum is simple, i.e., every zero of $\det(\mathbf{a}^\dagger(\lambda^*))$ in 
		$\mathbb{C}_+$ is simple.
		\item[(c)] $\{q_{1}(x,0),q_{2}(x,0)\}
		\in H^{1,2}(\mathbb{R})$.
	\end{itemize}
\end{assumption}

\noindent
Under Assumption \ref{assumption-q0}, it follows from Proposition 3.1 in \cite{Liu_2019} 
that the reflection coefficient $\hat{\mathbf{R}}(\lambda)=\mathbf{b}(\lambda)\mathbf{a}^{-1}(\lambda)$ associated with  
CNLS equation \eqref{CNLS} lies in 
$H^{2,1}(\mathbb{R})$.

The inverse scattering theory seeks to recover the solution of equation \eqref{CNLS} through the RHP. This is done as follows: form the Jost function 
$\mathbf{\Phi}^\pm(\lambda;x,t)=\pmb{\mu}^\pm(\lambda;x,t)\ee^{-\ii\lambda(x+\lambda t)\mathbf{\Lambda}_3}$ one constructs 
the function
\begin{equation}\label{def-M}
	\hat{\M}(\lambda)=\hat{\M}(\lambda;x,t):=
	\begin{cases}
		\pmb{\mu}_+\begin{pmatrix}
			\frac{1}{\mathrm{det}(\mathbf{a}^\dagger(\lambda^*))}&0\\0&\mathbb{I}_2
		\end{pmatrix}
		,&\lambda\in\mathbb{C}_+,\\
		\pmb{\mu}_-\begin{pmatrix}
			1&0\\0&\mathbf{a}^{-1}(\lambda)
		\end{pmatrix}
		,&\lambda\in\mathbb{C}_-.
	\end{cases}
\end{equation}

For data $\mathbf{q}_0$ satisfying Assumption 1, the matrix $\hat{\M}$ defined by equation \eqref{def-M} is the solution 
of the following RHP.

\begin{rhp}\label{initial-rhp-with-pole}
	Find an analytic function $\hat{\M}$: 
	$\mathbb{C}\setminus(\mathbb{R}\cup\mathcal{Z}\cup\mathcal{Z}^{*})\rightarrow SL_{2}(\mathbb{C})$ 
	with the following properties
	\begin{itemize}
		\item[1.] $\hat{\M}(\lambda)=\mathbb{I}_{3}+\mathcal{O}\left(\lambda^{-1}\right)$ as $\lambda\rightarrow\infty.$
		\item[2.] For each $\lambda\in\mathbb{R}$ (with $\mathbb{R}$ oriented left-to-right),
		$\hat{\M}$ takes continuous boundary values $\hat{\M}_{\pm}(\lambda)$ which satisfy 
		the jump relation: 
		$\hat{\M}_+(\lambda)=\hat{\M}_-(\lambda)\hat{\V}(\lambda)$, where
		\begin{equation}\label{initial-Jump}
			\hat{\V}(\lambda)=
			\begin{pmatrix}
				1+\hat{\mathbf{R}}(\lambda)\hat{\mathbf{R}}^\dagger(\lambda)&-\hat{\mathbf{R}}(\lambda)\ee^{-2\ii t\theta(\lambda)}\\
				-\hat{\mathbf{R}}^\dagger(\lambda)\ee^{2\ii t\theta(\lambda)}&\mathbb{I}_2
			\end{pmatrix},
		\end{equation}
		and
		\begin{equation}\label{def-R}
			\theta=\theta(\lambda;x,t)=\lambda^{2}-2\xi 
			\lambda=(\lambda-\xi)^{2}-\xi^{2},\quad\xi=-x/(2t), 
			\quad\hat{\mathbf{R}}(\lambda)=\mathbf{b}(\lambda)\mathbf{a}(\lambda)^{-1}.
		\end{equation}
		\item[3.] $\hat{\M}(\lambda)$ has simple poles at each $\lambda_i\in\mathcal{Z}$ and $\lambda_i^*\in\mathcal{Z}^*$ 
		at which
		\begin{equation*}
			\begin{array}{ll}
				& \mathop{\rm Res}\limits_{\lambda=\lambda_i}\hat{\M}=
				\mathop{\lim}\limits_{\lambda\to\lambda_i}\hat{\M}
				\begin{pmatrix}0&0\\-\mathbf{c}_i\ee^{2\ii t\theta}&0\end{pmatrix},\\
				& \mathop{\rm Res}\limits_{\lambda=\lambda_i^*}\hat{\M}=
				\mathop{\lim}\limits_{\lambda\to\lambda_i^*}\hat{\M}
				\begin{pmatrix}0&\mathbf{c}^\dagger_i\ee^{-2\ii t\theta}\\0&0\end{pmatrix},
			\end{array}
		\end{equation*}
		where $\mathbf{c}_i=-\frac{\mathbf{h}_i}{k_i}$ and 
		$k_i=\left.\frac{\dd\left({\rm det}(\mathbf{a}^\dagger(\lambda^*))\right)}{\dd \lambda}\right|_{\lambda=\lambda_i}$.
	\end{itemize}	
\end{rhp}
It is a simple consequence of Liouville theorem that if a solution exists, 
it is unique. The existence of solutions of RHP \ref{initial-rhp-with-pole} for all $(x,t)\in\mathbb{R}^2$ 
follows by means of vanishing lemma argument \cite{zhouscatteringtransforms1989} after replacing the poles by 
jumps along small circular contours in a standard way. Expanding this solution 
as $\lambda\rightarrow\infty$, $\hat{\M}=\mathbb{I}_3+\lambda^{-1}\hat{\M}_1+\oo(\lambda^{-2})$ and 
inserting this into equation \eqref{lax-pair-mu} one finds that 
\begin{equation}\label{recover-CNLS}
	\mathbf{q}(x,t)=\lim_{\lambda\rightarrow\infty} (2\lambda \hat{\M}(\lambda;x,t))_{12}.
\end{equation}

\section{The Long-Time Asymptotics}

\noindent
In this section, we make the following basic notations:
\begin{itemize}
	\item[1.] For any matrix $\M$ define $|\M|=(tr\M^\dagger \M)^{\frac{1}{2}}$.  For any matrix 
	function $\mathbf{A}(\cdot)$, the $L^p$ norm and $H^{i,j}$ norm are defined respectively as
	$\|\mathbf{A}(\cdot)\|_{L^p(\mathbb{R})}=\| \left|\mathbf{A}(\cdot) \right|\|_{L^p(\mathbb{R})}$, and 
	$\|\mathbf{A}(\cdot)\|_{H^{i,j}(\mathbb{R})}
	=\|\left|\mathbf{A}(\cdot)\right|\|_{H^{i,j}(\mathbb{R})}$.
	\item[2.] For two quantities A and B define $A\lesssim B$ if there exists a constant $C>0$ such that 
	$|A|\leq CB$.
	\item[3.] For any oriented contour $\sum$, we say that the left side is $``+"$ and the right side is $``-"$.
\end{itemize}

\subsection{Decomposition of continuous spectrum and discrete spectrum}
\noindent
Rather than directly analyzing the asymptotic behavior of the RHP \ref{initial-rhp-with-pole}, it is more convenient 
to examine a RHP without poles. Once the long-time behavior of the RHP without poles 
is established, we can readily derive the asymptotic behavior of the RHP \ref{initial-rhp-with-pole}.
In the following, we employ the Darboux transformation to decompose the sectional analytic matrix into its 
singular and analytic components. The singular part is represented by the Darboux matrix, while the 
left analytic component will satisfy a new RHP without the residue conditions. 
We will provide details for decomposing one discrete spectrum from the RHP \ref{initial-rhp-with-pole} with $N$ poles.

To begin with, we show that the solution $\hat{\M}$ of the initial RHP \ref{initial-rhp-with-pole} has 
the following symmetric property.
\begin{prop}\label{symmetry-M}
	The meromorphic matrix $\hat{\M}(\lambda;x,t)$ satisfies the follow symmetry: 
	\begin{equation}
		\hat{\M}(\lambda;x,t)\hat{\M}^\dagger(\lambda^*;x,t)=\mathbb{I}_{3}.
	\end{equation}
\end{prop}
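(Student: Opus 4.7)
My plan is to show that $\mathbf{K}(\lambda):=\hat{\M}(\lambda;x,t)\hat{\M}^\dagger(\lambda^*;x,t)$ is entire on $\mathbb{C}$ with $\mathbf{K}(\infty)=\mathbb{I}_3$, so that Liouville's theorem forces $\mathbf{K}\equiv\mathbb{I}_3$. An equivalent route is to verify that $(\hat{\M}^\dagger(\lambda^*))^{-1}$ satisfies RHP \ref{initial-rhp-with-pole} and invoke the uniqueness noted immediately after that problem; the Liouville route is slightly more transparent. The normalization $\mathbf{K}(\infty)=\mathbb{I}_3$ is immediate because both $\hat{\M}(\lambda)$ and $\hat{\M}^\dagger(\lambda^*)$ tend to $\mathbb{I}_3$ as $|\lambda|\to\infty$.

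Next I would check that $\mathbf{K}$ has no jump across $\mathbb{R}$. Since $\lambda\to\mathbb{R}$ from $\mathbb{C}_+$ forces $\lambda^*\to\mathbb{R}$ from $\mathbb{C}_-$, the boundary values are $\mathbf{K}_{\pm}(\lambda)=\hat{\M}_{\pm}(\lambda)\hat{\M}_{\mp}^\dagger(\lambda)$. Substituting $\hat{\M}_+=\hat{\M}_-\hat{\V}$ gives $\mathbf{K}_+=\hat{\M}_-\hat{\V}\hat{\M}_-^\dagger$ and $\mathbf{K}_-=\hat{\M}_-\hat{\V}^\dagger\hat{\M}_-^\dagger$, so the no-jump condition reduces to $\hat{\V}^\dagger(\lambda)=\hat{\V}(\lambda)$ on $\mathbb{R}$. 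This is a direct inspection of \eqref{initial-Jump}: the scalar top-left entry and the $\mathbb{I}_2$ bottom-right block are Hermitian, and the two off-diagonal blocks are mutual Hermitian conjugates because $\theta(\lambda)$ is real for real $\lambda$.

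The hard part is the absence of poles at $\mathcal{Z}\cup\mathcal{Z}^*$. Writing $\hat{\M}=[\mathbf{m}_1,\mathbf{m}_2]$ with $\mathbf{m}_1$ the first column and $\mathbf{m}_2$ the last two columns, the residue condition in RHP \ref{initial-rhp-with-pole} places a pole only in $\mathbf{m}_1$, with singular part $(\lambda-\lambda_i)^{-1}\bigl[-\mathbf{m}_2(\lambda_i)\mathbf{c}_i\ee^{2\ii t\theta(\lambda_i)},\,\mathbf{0}\bigr]$ near $\lambda_i$. Since the residue of $\hat{\M}$ at $\lambda_i^*$ lives in $\mathbf{m}_2$ with coefficient $\mathbf{m}_1(\lambda_i^*)\mathbf{c}_i^\dagger\ee^{-2\ii t\theta(\lambda_i^*)}$, the factor $\hat{\M}^\dagger(\lambda^*)$ inherits a simple pole at $\lambda=\lambda_i$ supported on its last two rows, with coefficient $\mathbf{c}_i\ee^{2\ii t\theta(\lambda_i)}\mathbf{m}_1^\dagger(\lambda_i^*)$ (using $\theta(\lambda_i^*)^*=\theta(\lambda_i)$). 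The double-pole piece of the product vanishes automatically, because the only nonzero column of the singular part of $\hat{\M}(\lambda)$ is its first, and it multiplies the identically zero first row of the singular part of $\hat{\M}^\dagger(\lambda^*)$.

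The simple-pole cancellation is the main obstacle. For this I would Laurent-expand both factors to first order at $\lambda_i$ and rely on three ingredients: the definition $\mathbf{c}_i=-\mathbf{h}_i/k_i$ with $\mathbf{h}_i=\overline{\mathbf{b}}(\lambda_i)$; the scattering-matrix symmetries $\bar{a}(\lambda)=\det(\mathbf{a}^\dagger(\lambda^*))$ and $\overline{\mathbf{b}}(\lambda)=-\mathrm{adj}(\mathbf{a}^\dagger(\lambda^*))\mathbf{b}^\dagger(\lambda^*)$ derived earlier; and the Jost-function identity $\pmb{\mu}^{\pm}(\lambda)\pmb{\mu}^{\pm}(\lambda^*)^\dagger=\mathbb{I}_3$ that follows from \eqref{symmetry-Phi-CNLS}. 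Together these should force an exact cancellation of the two competing first-order residue terms; an identical argument at $\lambda=\lambda_i^*$ removes the remaining singularities by complex conjugation symmetry. Once this cancellation is verified, $\mathbf{K}$ is entire and bounded with $\mathbf{K}(\infty)=\mathbb{I}_3$, and Liouville's theorem yields $\mathbf{K}\equiv\mathbb{I}_3$.
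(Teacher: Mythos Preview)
Your overall strategy---define $\mathbf{K}(\lambda)=\hat{\M}(\lambda)\hat{\M}^\dagger(\lambda^*)$, verify no jump on $\mathbb{R}$ via $\hat{\V}^\dagger=\hat{\V}$, remove the potential poles at $\mathcal{Z}\cup\mathcal{Z}^*$, and invoke Liouville---is exactly the paper's approach. The normalization and jump steps are handled correctly.

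Where you diverge from the paper is in the pole analysis, and you are making it much harder than necessary. The paper writes the local Laurent expansion in the factored form
\[
\hat{\M}(\lambda)=\mathbf{C}_i\Bigl[\tfrac{\mathbf{N}_i}{\lambda-\lambda_i}+\mathbb{I}_3\Bigr]+\mathcal{O}(\lambda-\lambda_i),
\qquad
\mathbf{N}_i=\begin{pmatrix}0&0\\-\mathbf{c}_i\ee^{2\ii t\theta(\lambda_i)}&0\end{pmatrix},
\]
which is legitimate precisely because the residue condition reads $\mathrm{Res}_{\lambda_i}\hat{\M}=\mathbf{C}_i\mathbf{N}_i$ with $\mathbf{N}_i$ nilpotent. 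Since the residue condition at $\lambda_i^*$ is governed by $-\mathbf{N}_i^\dagger$, one finds $\hat{\M}^\dagger(\lambda^*)=[\tfrac{-\mathbf{N}_i}{\lambda-\lambda_i}+\mathbb{I}_3](\mathbf{C}_i')^\dagger+\mathcal{O}(\lambda-\lambda_i)$ near $\lambda_i$. Multiplying, the inner product
\[
\Bigl[\tfrac{\mathbf{N}_i}{\lambda-\lambda_i}+\mathbb{I}_3\Bigr]\Bigl[\tfrac{-\mathbf{N}_i}{\lambda-\lambda_i}+\mathbb{I}_3\Bigr]
=\mathbb{I}_3-\tfrac{\mathbf{N}_i^2}{(\lambda-\lambda_i)^2}=\mathbb{I}_3
\]
kills \emph{both} the double pole and the simple pole at once, purely from $\mathbf{N}_i^2=0$. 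No scattering data, no Jost identities, no formula for $\mathbf{c}_i$ are needed: the symmetry is already baked into the RHP residue conditions themselves, because the matrix at $\lambda_i^*$ is exactly $-\mathbf{N}_i^\dagger$.

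Your plan to invoke $\mathbf{c}_i=-\mathbf{h}_i/k_i$, the scattering symmetries for $\bar a$ and $\overline{\mathbf{b}}$, and the Jost identity \eqref{symmetry-Phi-CNLS} is not wrong in principle, but it is a detour: you would essentially be re-deriving from the direct-scattering side a cancellation that the RHP already encodes algebraically. Your remark that the simple-pole step is ``the main obstacle'' is the tell that you have not yet spotted the nilpotent factorization; once you do, that step is a one-line identity rather than a computation.
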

\begin{proof}
	Set $\mathbf{H}(\lambda):=\mathbf{H}(\lambda;x,t)=\hat{\M}(\lambda;x,t)\hat{\M}^\dagger(\lambda^*;x,t)$ and 
	it follows from RHP \ref{initial-rhp-with-pole} that $\mathbf{H}(\lambda)$ approaches identity as $\lambda$ 
	tends to infinity. By the fact that 
	\begin{equation*}
		\det\hat{\V}(\lambda)=1\quad\text{and}\quad
		\hat{\V}(\lambda)=\hat{\V}^\dagger(\lambda), \quad \forall\lambda\in\mathbb{R},
	\end{equation*} 
	we have 
	\begin{equation*}
		\begin{aligned}
			\mathbf{H}_+(\lambda)&=\hat{\M}_+(\lambda)\hat{\M}^\dagger_-(\lambda)=
			\hat{\M}_-(\lambda)\hat{\V}(\lambda)\hat{\V}^\dagger
			(\lambda)^{-1}\hat{\M}^\dagger_+(\lambda)\\
			&=\hat{\M}_-(\lambda)\hat{\M}^\dagger_+(\lambda)=\mathbf{H}_-(\lambda),
		\end{aligned}
	\end{equation*}
	for $\lambda\in\mathbb{R}$. Therefore $\mathbf{H}(\lambda)$ has no jump on the real line. 
	By the residue relation in RHP \ref{initial-rhp-with-pole}, we have the Laurent expansions 
	\begin{equation*}
		\begin{aligned}
			\hat{\M}(\lambda)&=\mathbf{C}_i\left[\frac{\mathbf{N}_i}{\lambda-\lambda_i}+
			\mathbb{I}_3\right]+\oo(\lambda-\lambda_i),\quad 
			\mathbf{N}_i=\begin{pmatrix}0&0\\-\mathbf{c}_i\ee^{2\ii t\theta(\lambda_i)}&0\end{pmatrix}\\
			\hat{\M}(\lambda)&=\mathbf{C}_i^\prime\left[\frac{-\mathbf{N}^\dagger_i}
			{\lambda-\lambda_i^*}+\mathbb{I}_3\right]+\oo(\lambda-\lambda_i^*),
		\end{aligned}
	\end{equation*}
	where $\mathbf{C}_i$ and $\mathbf{C}_i^\prime$ are the constant terms in the Laurent expansions of $\hat{\M}(\lambda)$ 
	at $\lambda_i$ and $\lambda_i^*$ respectively. 
	This implies that as $\lambda\to\lambda_i$, we have
	\begin{equation*}
		\begin{aligned}
			\mathbf{H}(\lambda)&=\hat{\M}(\lambda)\hat{\M}^\dagger(\lambda^*)\\
			&=\left(\mathbf{C}_i\left[\frac{\mathbf{N}_i}{\lambda-\lambda_i}+
			\mathbb{I}_3\right]+\oo(\lambda-\lambda_i)\right)
			\left(\left[\frac{-\mathbf{N}_i}{\lambda-\lambda_i}+
			\mathbb{I}_3\right]\mathbf{C}_i^{\prime\dagger}+\oo(\lambda-\lambda_i)\right)\\
			&=\oo(1),
		\end{aligned}
	\end{equation*}
	since $\mathbf{N}_i$ is a nilpotent matrix, and 
	hence $\mathbf{H}(\lambda)$ is analytic in the whole complex plane.
	Then we conclude 
	that $\mathbf{H}(\lambda)\equiv\mathbb{I}_3$ %and 
	%$\hat{\M}(\lambda;x,t)\hat{\M}^\dagger(\lambda^*;x,t)=\mathbb{I}_{3}$
	by the Liouville theorem and the asymptotics $\mathbf{H}(\lambda)\to\mathbb{I}_3$ as $\lambda\to\infty$.
\end{proof}
%From Proposition \ref{symmetry-M}, we can deduce the symmetry of $\hat{\M}^{[1]}$ which is 
%the solution of RHP \ref{remove-pole-of-initial-rhp} and the following Lemma.
Now we proceed to decompose a single discrete spectrum from the matrix $\hat{\M}$ with the aid of Darboux transformation. 
\begin{lemma}\label{remove-a-pole}
	For the solution $\hat{\M}(\lambda;x,t)$ of 
	RHP \ref{initial-rhp-with-pole}, there exist Darboux matrix $\mathbf{T}_1^{(\pm)}(\lambda; x, t)$ and $\hat{\M}^{[1]}(\lambda;x,t)$ such that 
	\begin{equation}\label{relation-M-M1}
		\begin{cases}
			\hat{\M}(\lambda ; x, t)=\mathbf{T}_1^{(+)}(\lambda ; x, t) \hat{\M}^{[1]}(\lambda ; x, t) 
			\mathrm{diag}\left(\frac{\lambda-\lambda_{1}^{*}}{\lambda-\lambda_{1}}, 1,1\right),
			\quad&{\rm Im} \lambda>0,\\
			\hat{\M}(\lambda ; x, t)=\mathbf{T}_1^{(-)}(\lambda ; x, t) \hat{\M}^{[1]}(\lambda ; x, t) 
			\mathrm{diag}\left(1, \frac{\lambda-\lambda_{1}}{\lambda-\lambda_{1}^{*}},\frac{\lambda-\lambda_{1}}{\lambda-\lambda_{1}^{*}}\right),
			&{\rm Im} \lambda<0,
		\end{cases}
	\end{equation}
	where $\hat{\M}^{[1]}$ satisfies the following RHP:
	\begin{rhp}\label{remove-pole-of-initial-rhp}
		Find an analytic function $\hat{\M}^{[1]}$: 
		$\mathbb{C}\setminus(\mathbb{R}\cup\mathcal{Z}_1\cup\mathcal{Z}_1^{*})\rightarrow SL_{2}(\mathbb{C})$, where $\mathcal{Z}_n=\mathcal{Z}\setminus\{\lambda_i\}_{i=1}^n$
		with the following properties
		\begin{itemize}
			\item[1.] $\hat{\M}^{[1]}(\lambda)=\mathbb{I}_{3}+\mathcal{O}\left(\lambda^{-1}\right)$ as $\lambda\rightarrow\infty.$
			\item[2.] For each $\lambda\in\mathbb{R}$ (with $\mathbb{R}$ oriented left-to-right),
			$\hat{\M}^{[1]}$ takes continuous boundary values $\hat{\M}^{[1]}_{\pm}(\lambda)$ which satisfy 
			the jump relation: $\hat{\M}^{[1]}_+(\lambda)=\hat{\M}^{[1]}_-(\lambda)\hat{\V}^{[1]}(\lambda)$, where
			\begin{equation*}
				\hat{\V}^{[1]}(\lambda)=
				\begin{pmatrix}
					1+\hat{\mathbf{R}}_1(\lambda)\hat{\mathbf{R}}_1^\dagger(\lambda)&-\hat{\mathbf{R}}_1(\lambda)\ee^{-2\ii t\theta(\lambda)}\\
					-\hat{\mathbf{R}}_1^\dagger(\lambda)\ee^{2\ii t\theta(\lambda)}&\mathbb{I}_2
				\end{pmatrix}
			\end{equation*}
			and $\hat{\mathbf{R}}_1(\lambda)=\frac{\lambda-\lambda_{1}^{*}}{\lambda-\lambda_{1}}\mathbf{\hat{R}}(\lambda)$.
			\item[3.] $\hat{\M}^{[1]}(\lambda)$ has simple poles at each $\lambda_i\in\mathcal{Z}_1$ and $\lambda_i^*\in\mathcal{Z}_1^*$, 
			for $i=2\ldots N$,  
			whose residue condition can be represented as
			\begin{equation*}
				\begin{array}{ll}
					& \mathop{\rm Res}\limits_{\lambda=\lambda_i}\hat{\M}^{[1]}=
					\mathop{\lim}\limits_{\lambda\to\lambda_i}\hat{\M}^{[1]}
					\begin{pmatrix}
						0&0\\
						-\mathbf{c}_{1i}e^{2\ii t\theta}&0
					\end{pmatrix},\\
					& \mathop{\rm Res}\limits_{\lambda=\lambda_i^*}\hat{\M}^{[1]}=
					\mathop{\lim}\limits_{\lambda\to\lambda_i^*}\hat{\M}^{[1]}
					\begin{pmatrix}
						0&\mathbf{c}_{1i}^{\dagger}e^{-2\ii t\theta}\\
						0&0
					\end{pmatrix},
				\end{array}
			\end{equation*}
			where $\mathbf{c}_{1i}=\mathbf{c}_i\frac{\lambda_i-\lambda_1}{\lambda_i-\lambda_1^*}$.
		\end{itemize}	
	\end{rhp}
\end{lemma}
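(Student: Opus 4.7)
The approach is a standard one-step (elementary) Darboux dressing that removes the pole pair $\{\lambda_1,\lambda_1^*\}$ from $\hat{\M}$. In the upper half plane I would take an ansatz of the form
\begin{equation*}
\mathbf{T}_1^{(+)}(\lambda;x,t) = \mathbb{I}_3 - \frac{\lambda_1 - \lambda_1^*}{\lambda - \lambda_1^*}\mathbf{P}_1(x,t),
\end{equation*}
where $\mathbf{P}_1$ is a rank-one idempotent constructed from the column vector $\pmb{\varphi}_1(x,t) := \hat{\M}(\lambda_1^*;x,t)\bigl(1,-\mathbf{c}_1^\top \ee^{-2\ii t\theta(\lambda_1^*)}\bigr)^\top$ (so $\mathbf{P}_1=\pmb{\varphi}_1 \pmb{\varphi}_1^\dagger/(\pmb{\varphi}_1^\dagger\pmb{\varphi}_1)$). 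In the lower half plane I would set $\mathbf{T}_1^{(-)}$ to have the same boundary values on $\mathbb{R}$ as $\mathbf{T}_1^{(+)}$ (using Proposition \ref{symmetry-M}); the two then differ only in how their rational structure is distributed across the factors on the right of \eqref{relation-M-M1}. The diagonal factors themselves are dictated by the determinant bookkeeping: $\mathbf{T}_1^{(+)}$ carries determinant $\frac{\lambda-\lambda_1}{\lambda-\lambda_1^*}$, which is cancelled by the $\mathrm{diag}(\frac{\lambda-\lambda_1^*}{\lambda-\lambda_1},1,1)$ factor.

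\textbf{Execution.} With $\mathbf{T}_1^{(\pm)}$ in hand, I would \emph{define} $\hat{\M}^{[1]}$ by inverting \eqref{relation-M-M1} on each half plane and check the properties of RHP \ref{remove-pole-of-initial-rhp} one by one. Normalization at infinity is immediate since $\mathbf{T}_1^{(\pm)}\to\mathbb{I}_3$ and the diagonal factors $\to\mathbb{I}_3$. Analyticity of $\hat{\M}^{[1]}$ at $\lambda_1$: the zero introduced in the first column by the diagonal factor cancels the simple pole of $\hat{\M}^{(1)}$, provided that the coefficient vector of $1/(\lambda-\lambda_1)$, namely $-\hat{\M}^{(2)}(\lambda_1)\mathbf{c}_1 \ee^{2\ii t\theta(\lambda_1)}$, is compatible with the structure of $\mathbf{T}_1^{(+)}(\lambda_1)^{-1}$. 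Here the nilpotency of the residue matrix $\mathbf{N}_1$ is what makes the matching work. Analyticity at $\lambda_1^*$: the pole of $\mathbf{T}_1^{(+)}$ is designed to cancel the pole of $\hat{\M}$ in its 2nd/3rd columns; the required identity between the two residue patterns follows from Proposition \ref{symmetry-M}.

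\textbf{Jumps, remaining residues, and the main obstacle.} On the real line, pushing the scalar factor $\frac{\lambda-\lambda_1^*}{\lambda-\lambda_1}$ through $\hat{\V}$ multiplies the off-diagonal blocks by this scalar and its complex conjugate, while leaving the $(1+\hat{\mathbf{R}}\hat{\mathbf{R}}^\dagger)$ block invariant because $|\frac{\lambda-\lambda_1^*}{\lambda-\lambda_1}|=1$ for $\lambda\in\mathbb{R}$; this yields exactly the stated $\hat{\mathbf{R}}_1$. For each remaining pole $\lambda_i$, $i\ge 2$, evaluating the diagonal factor at $\lambda_i$ modifies the norming constant by $\frac{\lambda_i-\lambda_1}{\lambda_i-\lambda_1^*}$, producing $\mathbf{c}_{1i}$. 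The main technical obstacle is the consistent construction of $\mathbf{P}_1$: one must verify simultaneously that its image and kernel are aligned with the residue directions at both $\lambda_1$ and $\lambda_1^*$, and that $\mathbf{T}_1^{(+)}|_\mathbb{R}=\mathbf{T}_1^{(-)}|_\mathbb{R}$ so that $\hat{\M}^{[1]}$ inherits no spurious jump on the real axis. The symmetry $\hat{\M}(\lambda)\hat{\M}^\dagger(\lambda^*)=\mathbb{I}_3$ from Proposition \ref{symmetry-M} is the essential ingredient that makes a single vector $\pmb{\varphi}_1$ handle both poles at once; once this compatibility is pinned down, the remaining verifications are algebraic.
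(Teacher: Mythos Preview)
Your overall strategy—build the rank-one projector $\mathbf{P}_1$ from $\hat{\M}$, invert \eqref{relation-M-M1} to \emph{define} $\hat{\M}^{[1]}$, and then verify RHP~\ref{remove-pole-of-initial-rhp}—is a legitimate ``forward'' Darboux construction, and the heuristics you list (determinant bookkeeping for the diagonal factors, nilpotency of the residue matrix, the $|\tfrac{\lambda-\lambda_1^*}{\lambda-\lambda_1}|=1$ argument for the jump, rescaling of the remaining norming constants) are all correct. The paper, however, runs the argument in the \emph{opposite} direction: it first takes $\hat{\M}^{[1]}$ to be the (unique) solution of RHP~\ref{remove-pole-of-initial-rhp}, defines the Darboux vector as $\mathbf{\Phi}_1=\hat{\M}^{[1]}(\lambda_1)\ee^{-\ii\lambda_1(x+\lambda_1 t)\pmb{\Lambda}_3}\mathbf{v}_1$ with $\mathbf{v}_1$ undetermined, builds $\mathbf{T}_1^{(\pm)}$ from this, and then checks that the right-hand side of \eqref{relation-M-M1} satisfies RHP~\ref{initial-rhp-with-pole}; matching residues fixes $\mathbf{v}_1$ and $\mathbf{c}_{1i}$ a posteriori. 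The advantage of the paper's route is that $\hat{\M}^{[1]}(\lambda_1)$ is manifestly regular, so $\mathbf{\Phi}_1$ is well-defined from the start.

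This is exactly where your proposal has a gap. Your formula $\pmb{\varphi}_1=\hat{\M}(\lambda_1^*)\bigl(1,-\mathbf{c}_1^\top\ee^{-2\ii t\theta(\lambda_1^*)}\bigr)^\top$ evaluates $\hat{\M}$ at its pole $\lambda_1^*$, and the direction you apply is \emph{not} in the kernel of the residue matrix $-\mathbf{N}_1^\dagger=\begin{pmatrix}0&\mathbf{c}_1^\dagger\ee^{-2\ii t\theta(\lambda_1^*)}\\0&0\end{pmatrix}$: a quick check gives $-\mathbf{N}_1^\dagger\bigl(1,-\mathbf{c}_1^\top\ee^{-2\ii t\theta(\lambda_1^*)}\bigr)^\top\neq 0$, so $\pmb{\varphi}_1$ as written is ill-defined. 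Relatedly, $\mathbf{T}_1^{(+)}$ and $\mathbf{T}_1^{(-)}$ do \emph{not} agree on $\mathbb{R}$; the paper sets $\mathbf{T}_1^{(-)}(\lambda)=\tfrac{\lambda-\lambda_1^*}{\lambda-\lambda_1}\mathbf{T}_1^{(+)}(\lambda)$, and it is the interplay of this scalar with the two diagonal factors in \eqref{relation-M-M1} that makes $\hat{\M}^{[1]}$ jump-continuous across $\mathbb{R}$, not equality of $\mathbf{T}_1^{(\pm)}$. Both issues are repairable in a forward argument (one option is to build $\pmb{\varphi}_1$ from the regular columns $[\hat{\M}_2,\hat{\M}_3](\lambda_1)$ and the residue relation, then invoke Proposition~\ref{symmetry-M} to handle $\lambda_1^*$), but as stated your construction of $\mathbf{P}_1$ does not go through.
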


\begin{proof}
	The elementary Darboux transformation $\mathbf{T}_1^{(\pm)}(\lambda; x, t)$ can be constructed 
	as the following from \cite{Backlundtransformations20001,lingzhang2024}:
	\begin{equation}
		\mathbf{T}_1^{(+)}(\lambda ; x, t)=\mathbb{I}_3-\frac{\lambda_{1}-\lambda_{1}^*}{\lambda-\lambda_{1}^*} 
		\mathbf{P}_{1}(x, t), \quad \mathbf{P}_{1}(x, t)=\frac{\mathbf{\Phi}_{1} \mathbf{\Phi}_{1}^{\dagger}}{\mathbf{\Phi}_{1}^{\dagger} \mathbf{\Phi}_{1}},
	\end{equation}
	and
	\begin{equation}\label{relation-T+-T-}
		\mathbf{T}_1^{(-)}(\lambda ; x, t)=
		\frac{\lambda-\lambda_{1}^*}
		{\lambda-\lambda_{1}} \mathbf{T}_1^{(+)}(\lambda ; x, t)=\mathbb{I}_3-\frac{\lambda_{1}^{*}-\lambda_{1}}
		{\lambda-\lambda_{1}}\left(\mathbb{I}_3-\mathbf{P}_{1}(x, t)\right),
	\end{equation}
	where
	\begin{equation*}
		\mathbf{\Phi}_{1}=\hat{\M}_+^{[1]}(\lambda_{1} ; x, t) \ee^{-\mathrm{i} 
			\lambda_{1}\left(x+\lambda_{1} t\right) \pmb{\Lambda}_{3}} 
		\mathbf{v}_{1},\quad
		\mathbf{v}_{1}=[1,b_{11},b_{12}]^\dagger,
	\end{equation*}
	and $\mathbf{v}_{1}$ is determined below.
	The Darboux matrix $\mathbf{T}_1^{(\pm)}(\lambda; x, t)$ is analytic in the upper (lower) half plane. 
	In virtue of Darboux transformation, we can establish the relation between $\hat{\M}(\lambda ; x, t)$ and $\hat{\M}^{[1]}(\lambda ; x, t)$
	\begin{equation}\label{relation-Mhat-M+}
		\hat{\M}_+(\lambda ; x, t)=\mathbf{T}_1^{(+)}(\lambda ; x, t) 
		\hat{\M}_+^{[1]}(\lambda ; x, t) 
		\mathrm{diag}\left(\frac{\lambda-\lambda_{1}^{*}}{\lambda-\lambda_{1}}, 1,1\right),
		\quad {\rm Im}\lambda>0,
	\end{equation}
	and 
	\begin{equation}\label{relation-Mhat-M-}
		\hat{\M}_-(\lambda ; x, t)=\mathbf{T}_1^{(-)}(\lambda ; x, t) 
		\hat{\M}_-^{[1]}(\lambda ; x, t) 
		\mathrm{diag}\left(1, \frac{\lambda-\lambda_{1}}{\lambda-\lambda_{1}^{*}},
		\frac{\lambda-\lambda_{1}}{\lambda-\lambda_{1}^{*}}\right),
		\quad {\rm Im}\lambda<0.
	\end{equation}
	Note the fact that $\mathbf{P}_{1}=\mathbf{P}_{1}^\dagger$ and $\mathbf{P}_{1}^2=\mathbf{P}_{1}$, and then 
	we have 
	\begin{equation}\label{symmetry-T+}
		\mathbf{T}_1^{(+)}(\lambda ; x, t)\mathbf{T}_1^{(+)}(\lambda^* ; x, t)^\dagger=\mathbb{I}_3.
	\end{equation}
	Together with Proposition \ref{symmetry-M} and equation \eqref{relation-T+-T-}, we have 
	\begin{equation}
		\begin{aligned}
			\hat{\M}_+^{[1]}(\lambda ; x, t)\hat{\M}_-^{[1]}(\lambda^* ; x, t)^\dagger
			&=\frac{\lambda-\lambda_1}{\lambda-\lambda_1^*}\mathbf{T}_1^{(+)}(\lambda ; x, t)^{-1}
			\hat{\M}_+(\lambda;x,t)\hat{\M}_-^\dagger(\lambda^*;x,t)
			\left(\mathbf{T}_1^{(-)}(\lambda^* ; x, t)^\dagger\right)^{-1}\\
			&=\frac{\lambda-\lambda_1}{\lambda-\lambda_1^*}
			\left(\frac{\lambda-\lambda_1}{\lambda-\lambda_1^*}
			\mathbf{T}_1^{(+)}(\lambda ; x, t)\mathbf{T}_1^{(+)}
			(\lambda^* ; x, t)^\dagger\right)^{-1}=\mathbb{I}_3.
		\end{aligned}
	\end{equation}
	Now we consider the kernel of above
	elementary Darboux matrix $\mathbf{T}_1^{(\pm)}(\lambda; x, t)$:
	\begin{equation}\label{kernel-T+}
		\left.\mathbf{T}_1^{(+)}(\lambda ; x, t) \hat{\M}_+^{[1]}(\lambda ; x, t) 
		\mathrm{e}^{-\mathrm{i} \lambda(x+\lambda t) \pmb{\Lambda}_{3}}
		\right|_{\lambda=\lambda_{1}} \mathbf{v}_{1}=
		\left(\mathbb{I}_3-\mathbf{P}_{1}(x, t)\right)\mathbf{P}_{1}(x, t)=0,
	\end{equation}
	and
	\begin{equation}\label{kernel-T-}
		\begin{aligned}
			&\quad\left.\mathbf{T}_1^{(-)}(\lambda ; x, t) \hat{\M}_-^{[1]}(\lambda ; x, t) \mathrm{e}^{-\mathrm{i} 
				\lambda(x+\lambda t) \boldsymbol{\pmb{\Lambda}}_{3}}
			\right|_{\lambda=\lambda_{1}^*}\begin{bmatrix}
				\mathbf{w}_{11}, \mathbf{w}_{12}
			\end{bmatrix}\\
			&=\mathbf{P}_{1}(x, t)\left(\hat{\M}^{[1]}_+(\lambda_1 ; x, t)^\dagger\right)^{-1}
			\mathrm{e}^{-\mathrm{i} 
				\lambda_1^*(x+\lambda_1^* t) \boldsymbol{\pmb{\Lambda}}_{3}}\begin{bmatrix}
				\mathbf{w}_{11}, \mathbf{w}_{12}
			\end{bmatrix}\\
			&=\frac{\mathbf{\Phi}_{1} }{\mathbf{\Phi}_{1}^{\dagger} \mathbf{\Phi}_{1}}
			\mathbf{v}_{1}^\dagger
			\mathrm{e}^{\mathrm{i} 
				\lambda_1^*(x+\lambda_1^* t) \boldsymbol{\pmb{\Lambda}}_{3}}
			\hat{\M}^{[1]}_+(\lambda_1 ; x, t)^\dagger\left(\hat{\M}^{[1]}_+(\lambda_1 ; x, t)^\dagger\right)^{-1}
			\mathrm{e}^{-\mathrm{i} 
				\lambda_1^*(x+\lambda_1^* t) \boldsymbol{\pmb{\Lambda}}_{3}}\begin{bmatrix}
				\mathbf{w}_{11}, \mathbf{w}_{12}
			\end{bmatrix}\\
			&=\frac{\mathbf{\Phi}_{1} }{\mathbf{\Phi}_{1}^{\dagger} \mathbf{\Phi}_{1}}
			\mathbf{v}_{1}^\dagger\begin{bmatrix}
				\mathbf{w}_{11}, \mathbf{w}_{12}
			\end{bmatrix}=0,
		\end{aligned}
	\end{equation}
	where $\mathbf{w}_{11}=[-b_{11},1,0]^{\top}$ and 
	$\mathbf{w}_{12}=[-b_{12},0,1]^{\top}$.
	
	%By equation \eqref{relation-T+-T-}, we have
	%\begin{equation*}
	%  \hat{\M}_-(\lambda ; x, t)=\mathbf{T}_1^{(+)}(\lambda ; x, t) 
	%  \hat{\M}_-^{[1]}(\lambda ; x, t) 
	%  \mathrm{diag}\left(\frac{\lambda-\lambda_{1}^{*}}{\lambda-\lambda_{1}}, 
	%  1,1\right),\quad {\rm Im}\lambda<0,
	%\end{equation*}
	Then we can obtain the jump condition 
	of $\hat{\M}(\lambda;x,t)$ on 
	$\lambda\in\mathbb{R}$ as follows:
	\begin{equation}
		\begin{aligned}
			\hat{\M}_+(\lambda ; x, t)&=\mathbf{T}_1^{(+)}(\lambda ; x, t) \hat{\M}^{[1]}_{-}(\lambda ; x, t)\hat{\V}^{[1]}(\lambda)
			\mathrm{diag}\left(\frac{\lambda-\lambda_{1}^{*}}{\lambda-\lambda_{1}}, 1,1\right)\\
			&=\hat{\M}_-(\lambda ; x, t)\mathrm{diag}\left(\frac{\lambda-\lambda_{1}}{\lambda-\lambda_{1}^{*}}, 1,1\right)\hat{\V}^{[1]}(\lambda)
			\mathrm{diag}\left(\frac{\lambda-\lambda_{1}^{*}}{\lambda-\lambda_{1}}, 1,1\right)\\
			&=\hat{\M}_-(\lambda ; x, t)
			\begin{pmatrix}
				1+\hat{\mathbf{R}}_1(\lambda)\hat{\mathbf{R}}_1^\dagger(\lambda)&-\frac{\lambda-\lambda_{1}}{\lambda-\lambda_{1}^{*}}\hat{\mathbf{R}}_1(\lambda)\ee^{-2\ii t\theta(\lambda)}\\
				-\frac{\lambda-\lambda_{1}^{*}}{\lambda-\lambda_{1}}\hat{\mathbf{R}}_1^\dagger(\lambda)\ee^{2\ii t\theta(\lambda)}&\mathbb{I}_2
			\end{pmatrix}.
		\end{aligned}
	\end{equation}
	Therefore, it follows from equation \eqref{initial-Jump} necessarily that 
	$\mathbf{\hat{R}}(\lambda)=\frac{\lambda-\lambda_{1}}{\lambda-\lambda_{1}^{*}}\hat{\mathbf{R}}_1(\lambda)$.
	Moreover, by equations \eqref{relation-Mhat-M+}, \eqref{kernel-T+} and \eqref{kernel-T-}, we 
	obtain the residue conditions for $\hat{\M}$:
	\begin{equation*}
		\begin{aligned}
			\mathop{\rm Res}\limits_{\lambda=\lambda_1}\hat{\M}&=
			\begin{pmatrix}
				(\lambda_1-\lambda_1^*)[\mathbf{T}_1^{(+)}(\lambda_1;x,t)\hat{\M}^{[1]}(\lambda_1;x,t)]_1&0
			\end{pmatrix}\\
			&=\begin{pmatrix}
				-(\lambda_1-\lambda_1^*)\ee^{2\ii t\theta(\lambda_1)}[\mathbf{T}_1^{(+)}(\lambda_1;x,t)\hat{\M}^{[1]}(\lambda_1;x,t)]_2
				\begin{pmatrix}
					b_{11}^{*}\\b_{12}^{*}
				\end{pmatrix}&0
			\end{pmatrix}\\
			&=\mathop{\lim}\limits_{\lambda\to\lambda_1}
			\hat{\M}\begin{pmatrix}
				0&0\\
				-(\lambda_1-\lambda_1^*)\begin{pmatrix}
					b_{11}^*\\b_{12}^*
				\end{pmatrix}\ee^{2\ii t\theta(\lambda_1)}&0
			\end{pmatrix},
		\end{aligned}    
	\end{equation*}
	and 
	\begin{equation*}
		\begin{aligned}
			\mathop{\rm Res}\limits_{\lambda=\lambda_1^*}\hat{\M}&=
			\begin{pmatrix}
				0&(\lambda_1^*-\lambda_1)[\mathbf{T}_1^{(-)}(\lambda_1;x,t)\hat{\M}^{[1]}_{-}(\lambda_1;x,t)]_2
			\end{pmatrix}\\
			&=\begin{pmatrix}
				0&(\lambda_1^*-\lambda_1)\ee^{-2\ii t\theta(\lambda_1^*)}[\mathbf{T}_1^{(-)}(\lambda_1;x,t)\hat{\M}^{[1]}_{-}(\lambda_1;x,t)]_1
				\begin{pmatrix}
					b_{11}&b_{12}
				\end{pmatrix}
			\end{pmatrix}\\
			&=\mathop{\lim}\limits_{\lambda\to\lambda_1^*}
			\hat{\M}\begin{pmatrix}
				0&(\lambda_1^*-\lambda_1)\ee^{-2\ii t\theta(\lambda_1^*)}
				\begin{pmatrix}
					b_{11}&b_{12}
				\end{pmatrix}\\
				0&0
			\end{pmatrix}
		\end{aligned}
	\end{equation*}
	where $[\M]_1$ denotes the first column of matrix $\M$ 
and $[\M]_2$ denotes the last two columns of $\M$.
	By comparing the RHP  \ref{initial-rhp-with-pole} and RHP \ref{remove-pole-of-initial-rhp}, a straightforward calculation shows that 
	\begin{equation}
		\mathbf{c}_1=(\lambda_1-\lambda_1^*)\begin{pmatrix}
			b_{11}^*\\b_{12}^*
		\end{pmatrix}.
	\end{equation}
	
	Similarly, for $\lambda_j$, $j=2,\dots,N$, we have the residue conditions
	\begin{equation*}
		\begin{aligned}
			\mathop{\rm Res}\limits_{\lambda=\lambda_j}\hat{\M}&=
			\mathbf{T}_1^{(+)}(\lambda_j;x,t)
			\mathop{\lim}\limits_{\lambda\to\lambda_j}
			\hat{\M}^{[1]}\begin{pmatrix}
				0&0\\
				-\mathbf{c}_{1j}\ee^{2\ii t\theta(\lambda_j)}&0
			\end{pmatrix}
			\mathrm{diag}\left(\frac{\lambda_j-\lambda_1^*}{\lambda_j-\lambda_1},1,1\right)\\
			&=\mathop{\lim}\limits_{\lambda\to\lambda_j}
			\hat{\M}\begin{pmatrix}
				0&0\\
				-\mathbf{c}_{1j}\frac{\lambda_j-\lambda_1^*}{\lambda_j-\lambda_1}\ee^{2\ii t\theta(\lambda_j)}&0
			\end{pmatrix},
		\end{aligned}    
	\end{equation*}
	and 
	\begin{equation*}
		\begin{aligned}
			\mathop{\rm Res}\limits_{\lambda=\lambda_j^*}\hat{\M}&=
			\mathbf{T}_1^{(-)}(\lambda_j;x,t)
			\mathop{\lim}\limits_{\lambda\to\lambda_j^*}
			\hat{\M}^{[1]}\begin{pmatrix}
				0&\mathbf{c}^{\dagger}_{1j}\ee^{-2\ii t\theta(\lambda_j^*)}\\
				0&0
			\end{pmatrix}
			\mathrm{diag}\left(1,\frac{\lambda_j^*-\lambda_1}{\lambda_j^*-\lambda_1^*},
			\frac{\lambda_j^*-\lambda_1}{\lambda_j^*-\lambda_1^*}\right)\\
			&=\mathop{\lim}\limits_{\lambda\to\lambda_j^*}
			\hat{\M}\begin{pmatrix}
				0&\mathbf{c}^{\dagger}_{1j}\frac{\lambda_j^*-\lambda_1}{\lambda_j^*-\lambda_1^*}\ee^{-2\ii t\theta(\lambda_j^*)}\\
				0&0
			\end{pmatrix}.
		\end{aligned}    
	\end{equation*}
	The above residue conditions imply that 
	\begin{equation*}
		\mathbf{c}_j=\mathbf{c}_{1j}\frac{\lambda_j-\lambda_1^*}{\lambda_j-\lambda_1}.
	\end{equation*}
	Then we complete the proof.
\end{proof}

From Lemma \ref{remove-a-pole}, the solution $\hat{\M}(\lambda;x,t)$ of RHP \ref{initial-rhp-with-pole} 
can be decomposed to three parts as equation \eqref{relation-M-M1} and hence 
the recovery formula \eqref{recover-CNLS} can be rewritten as 
% It follows from equation \eqref{recover-CNLS} and equation \eqref{relation-Mhat-M+} that 
\begin{equation*}
	\begin{aligned}
		\mathbf{q}(x,t)&=2\left(\left(\mathbf{T}_1^{(+)}\right)_1+\hat{\M}^{[1]}_1\right)_{12}\\
		&=\lim_{\lambda\rightarrow\infty} 2\lambda \left(\hat{\M}^{[1]}(\lambda;x,t)\right)_{12}-
		2(\lambda_1-\lambda_1^*)\left(\mathbf{P}_1(x,t)\right)_{12},
	\end{aligned}
\end{equation*}
where $\left(\mathbf{T}_1^{(+)}\right)_1$ and $\hat{\M}^{[1]}_1$ are the residues of 
$\mathbf{T}^{(+)}(\lambda;x,t)$ and $\hat{\M}^{[1]}(\lambda;x,t)$ at $\lambda=\infty$ respectively.
In order to recover $\mathbf{q}(x,t)$ by the solution of RHP without any pole, 
we should decompose $N$ single discrete spectrums from the matrix $\hat{\M}$. In fact, we only need to replace 
the procedure in Lemma \ref{remove-a-pole} $N$ times and then obtain the following Theorem \ref{remove-N-poles}.
Before that, we will introduce some necessary notations:
\begin{equation}\label{def-Ti}
	\begin{aligned}
		&\mathbf{T}_i^{(+)}(\lambda ; x, t)=\mathbb{I}_3-\frac{\lambda_{i}-\lambda_{i}^*}
		{\lambda-\lambda_{i}^*} 
		\mathbf{P}_{i}(x, t), \quad \mathbf{P}_{i}(x, t)=\frac{\mathbf{\Phi}_{i} 
			\mathbf{\Phi}_{i}^{\dagger}}{\mathbf{\Phi}_{i}^{\dagger} \mathbf{\Phi}_{i}},\\
		&\mathbf{T}_i^{(-)}(\lambda ; x, t)=
		\frac{\lambda-\lambda_{i}^*}
		{\lambda-\lambda_{i}} \mathbf{T}_i^{(+)}(\lambda ; x, t),\quad
		\mathbf{\Phi}_{i}=\hat{\M}_+^{[i]}(\lambda_{i} ; x, t) \ee^{-\mathrm{i} 
			\lambda_{i}\left(x+\lambda_{i} t\right) \pmb{\Lambda}_{3}} 
		\mathbf{v}_{i},\\
		&\mathbf{v}_{i}=\begin{bmatrix}
			1,b_{i1},b_{i2}
		\end{bmatrix}^\dagger, \quad
		\begin{bmatrix}
			b_{i1},b_{i2}
		\end{bmatrix}^\dagger=\frac{\mathbf{c}_i}{\lambda_i-\lambda_i^*}
		\prod_{j=1}^{i-1}\frac{\lambda_i-\lambda_j}{\lambda_i-\lambda_j^*}.
	\end{aligned}
\end{equation}
where $\hat{\M}^{[i]}$, for $1\le i\le N$, is the solution of the following RHP 
\begin{rhp}
	Find an analytic function $\hat{\M}^{[i]}$: 
	$\mathbb{C}\setminus(\mathbb{R}\cup\mathcal{Z}_i\cup\mathcal{Z}_i^{*})\rightarrow SL_{2}(\mathbb{C})$, 
	where $1\le i\le N-1$ and $\mathcal{Z}_n=\mathcal{Z}\setminus\{\lambda_j\}_{j=1}^n$, 
	with the following properties
	\begin{itemize}
		\item[1.] $\hat{\M}^{[i]}(\lambda)=\mathbb{I}_{3}+\mathcal{O}\left(\lambda^{-1}\right)$ as $\lambda\rightarrow\infty.$
		\item[2.] For each $\lambda\in\mathbb{R}$ (with $\mathbb{R}$ oriented left-to-right),
		$\hat{\M}^{[i]}$ takes continuous boundary values $\hat{\M}^{[i]}_{\pm}(\lambda)$ which satisfy 
		the jump relation: $\hat{\M}^{[i]}_+(\lambda)=\hat{\M}^{[i]}_-(\lambda)\hat{\V}^{[i]}(\lambda)$, where
		\begin{equation*}
			\hat{\V}^{[i]}(\lambda)=
			\begin{pmatrix}
				1+\hat{\mathbf{R}}_i(\lambda)\hat{\mathbf{R}}_i^\dagger(\lambda)&-\hat{\mathbf{R}}_i(\lambda)\ee^{-2\ii t\theta(\lambda)}\\
				-\hat{\mathbf{R}}_i^\dagger(\lambda)\ee^{2\ii t\theta(\lambda)}&\mathbb{I}_2
			\end{pmatrix}
		\end{equation*}
		and $\mathbf{R}_i(\lambda)=\prod_{k=1}^{i}\frac{\lambda-\lambda_k^*}{\lambda-\lambda_k}\hat{\mathbf{R}}(\lambda)$.
		\item[3.] $\hat{\M}^{[i]}(\lambda)$ has simple poles at each $\lambda_j\in\mathcal{Z}_i$ 
		and $\lambda_j^*\in\mathcal{Z}_i^*$
		at which
		\begin{equation*}
			\begin{array}{ll}
				& \mathop{\rm Res}\limits_{\lambda=\lambda_j}\hat{\M}^{[i]}=
				\mathop{\lim}\limits_{\lambda\to\lambda_j}\hat{\M}^{[i]}
				\begin{pmatrix}
					0&0\\
					-\mathbf{c}_{ij}e^{2\ii t\theta}&0
				\end{pmatrix},\\
				& \mathop{\rm Res}\limits_{\lambda=\lambda_j^*}\hat{\M}^{[i]}=
				\mathop{\lim}\limits_{\lambda\to\lambda_j^*}\hat{\M}^{[i]}
				\begin{pmatrix}
					0&\mathbf{c}_{ij}^{\dagger}e^{-2\ii t\theta}\\
					0&0
				\end{pmatrix},
			\end{array}
		\end{equation*}
		where $\mathbf{c}_{ij}=\mathbf{c}_j\prod_{k=1}^{i}\frac{\lambda_j-\lambda_k}{\lambda_j-\lambda_k^*}$.
	\end{itemize}	
\end{rhp}
Let $\M\equiv\hat{\M}^{[N]}$ be the solution 
of a pole-free RHP.
\begin{rhp}\label{initial-rhp-without-role}
	Find an analytic function $\M$: 
	$\mathbb{C}\setminus\mathbb{R}\rightarrow SL_{2}(\mathbb{C})$ 
	with the following properties
	\begin{itemize}
		\item[1.] $\M(\lambda)=\mathbb{I}_{3}+\mathcal{O}\left(\lambda^{-1}\right)$ as $\lambda\rightarrow\infty.$
		\item[2.] For each $\lambda\in\mathbb{R}$ (with $\mathbb{R}$ oriented left-to-right),
		$\M$ takes continuous boundary values $\M_{\pm}(\lambda)$ which satisfy 
		the jump relation: $\M_+(\lambda)=\M_-(\lambda)\V(\lambda)$ where
		\begin{equation}\label{def-R-no-soliton}
			\V(\lambda)=
			\begin{pmatrix}
				1+\mathbf{R}(\lambda)\mathbf{R}^\dagger(\lambda)&-\mathbf{R}(\lambda)\ee^{-2\ii t\theta(\lambda)}\\
				-\mathbf{R}^\dagger(\lambda)\ee^{2\ii t\theta(\lambda)}&\mathbb{I}_2
			\end{pmatrix},
		\end{equation}
		where $\mathbf{R}(\lambda)=\prod_{i=1}^{N}\frac{\lambda-\lambda_i^*}{\lambda-\lambda_i}\hat{\mathbf{R}}(\lambda)$.
	\end{itemize}
\end{rhp}

\begin{theorem}\label{remove-N-poles}
	For $\hat{\M}$ which is the solution of RHP \ref{initial-rhp-with-pole} with N poles, 
	there exist Darboux matrix $\mathbf{T}^{(\pm)}=\prod_{i=1}^{N}\mathbf{T}^{(\pm)}_i$ and $\M$ such that 
	\begin{equation}\label{relation-hatM-M}
		\begin{cases}
			\hat{\M}(\lambda ; x, t)=\mathbf{T}^{(+)}(\lambda ; x, t) \M(\lambda ; x, t) 
			\mathrm{diag}\left(\prod_{i=1}^{N}\frac{\lambda-\lambda_{i}^{*}}{\lambda-\lambda_{i}}, 1,1\right),
			\quad&{\rm Im}\lambda>0,\\[4pt]
			\hat{\M}(\lambda ; x, t)=\mathbf{T}^{(-)}(\lambda ; x, t) \M(\lambda ; x, t) 
			\mathrm{diag}\left(1, \prod_{i=1}^{N}\frac{\lambda-\lambda_{i}}{\lambda-\lambda_{i}^{*}},
			\prod_{i=1}^{N}\frac{\lambda-\lambda_{i}}{\lambda-\lambda_{i}^{*}}\right),
			&{\rm Im}\lambda<0.
		\end{cases}
	\end{equation}
	where $\M$ is the solution of RHP \ref{initial-rhp-without-role} without any poles.
	
	In particular, the solution $\mathbf{q}(x,t)$ of CNLS equation \eqref{CNLS} can be recovered by a 
	better formula
	\begin{equation}\label{matrix-form-T1}
		\mathbf{q}(x,t)=\lim_{\lambda\rightarrow\infty} \left(2\lambda \M(\lambda;x,t)\right)_{12}-
		2\mathbf{X}_1\mathbf{G}^{-1}(\mathbf{X}_2)^\dagger,
	\end{equation}
	where 
	\begin{equation*}
		\begin{aligned}
			&\mathbf{X}=\begin{bmatrix}
				|x_1\rangle,\cdots,|x_N\rangle
			\end{bmatrix},\quad \mathbf{G}=\left(\mathbf{G}_{ji}\right)_{1\le j,i\le N},\quad
			\mathbf{G}_{ji}=\frac{\langle x_j|x_i\rangle}{\lambda_i-\lambda_j^*},\\
			&|x_i\rangle=\M(\lambda_i; x, t)\mathrm{diag}\left(\prod_{j=i+1}^{N}\frac{\lambda_i-\lambda_{j}^{*}}{\lambda_i-\lambda_{j}}, 1,1\right)
			\ee^{-\ii\lambda_i(x+\lambda_i t)\pmb{\Lambda}_3}\mathbf{v}_i,
			\quad\langle x_j|=|x_j\rangle^\dagger,
		\end{aligned}
	\end{equation*}
	and $\mathbf{X}_1$ represents the first row of 
	$\mathbf{X}$ and $\mathbf{X}_2$ represents the second and third row of $\mathbf{X}$.
\end{theorem}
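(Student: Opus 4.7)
The proof naturally splits into two parts: establishing the dressing identity \eqref{relation-hatM-M}, and extracting the reconstruction formula \eqref{matrix-form-T1}. For the first part, I induct on $i = 1, \ldots, N$, treating Lemma \ref{remove-a-pole} as the base case. Assuming $\hat{\M}^{[i-1]}$ has already been produced with residue data governed by $\mathbf{c}_{i-1,j} = \mathbf{c}_j \prod_{k=1}^{i-1}(\lambda_j-\lambda_k)/(\lambda_j-\lambda_k^*)$, I apply Lemma \ref{remove-a-pole} verbatim to peel off the pole at $\lambda_i$. The Darboux vector $\mathbf{v}_i$ prescribed in \eqref{def-Ti} is exactly the one the lemma requires, because $[b_{i1}, b_{i2}]^\dagger = \mathbf{c}_{i-1,i}/(\lambda_i - \lambda_i^*)$ matches the normalization produced in the proof of Lemma \ref{remove-a-pole}. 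Each iteration appends a new $\mathbf{T}_i^{(\pm)}$ on the left, multiplies the diagonal factor by $(\lambda-\lambda_i^*)/(\lambda-\lambda_i)$, and shifts the reflection coefficient by the same factor. After $N$ steps, $\hat{\M}^{[N]} =: \M$ solves RHP \ref{initial-rhp-without-role}, giving \eqref{relation-hatM-M} with $\mathbf{T}^{(\pm)} = \prod_{i=1}^N \mathbf{T}_i^{(\pm)}$.

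For the reconstruction formula the main ingredient is the closed-form representation
\begin{equation*}
\mathbf{T}^{(+)}(\lambda) = \mathbb{I}_3 - \sum_{i,j=1}^N |x_i\rangle (\mathbf{G}^{-1})_{ij}\frac{\langle x_j|}{\lambda - \lambda_j^*},
\end{equation*}
with $|x_i\rangle$ as in the theorem statement. I verify this by checking that both sides share the same data as rational matrix-valued functions: each is $\mathbb{I}_3 + O(\lambda^{-1})$ at infinity; each has simple poles only at $\lambda_j^*$; and each has determinant $\prod_j (\lambda-\lambda_j)/(\lambda-\lambda_j^*)$, since every $\mathbf{P}_i$ is a rank-one orthogonal projector making $\det \mathbf{T}_i^{(+)}(\lambda) = (\lambda-\lambda_i)/(\lambda-\lambda_i^*)$. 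The remaining characterizing data are the kernels: $\mathrm{ker}\,\mathbf{T}^{(+)}(\lambda_i) \ni |x_i\rangle$. For the closed form, a direct computation using $\mathbf{G}_{ji} = \langle x_j|x_i\rangle/(\lambda_i-\lambda_j^*)$ gives $|x_i\rangle - \sum_{k,j}|x_k\rangle(\mathbf{G}^{-1})_{kj}\mathbf{G}_{ji} = 0$. For the product form, I use \eqref{relation-hatM-M} at the intermediate level $\hat{\M}^{[i]}$ to rewrite $\mathbf{\Phi}_i = \prod_{k=i+1}^N \mathbf{T}_k^{(+)}(\lambda_i)\,|x_i\rangle$, so that $\mathbf{T}^{(+)}(\lambda_i)|x_i\rangle$ factors through the single-step kernel identity \eqref{kernel-T+}. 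A standard matrix-rational uniqueness argument applied to $\mathbf{T}_{\mathrm{prod}}\mathbf{T}_{\mathrm{closed}}^{-1}$ then forces the two expressions to coincide.

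Expanding the closed form as $\lambda \to \infty$ yields $\mathbf{T}^{(+)}(\lambda) = \mathbb{I}_3 - \lambda^{-1}\mathbf{X}\mathbf{G}^{-1}\mathbf{X}^\dagger + O(\lambda^{-2})$. Inserting this into \eqref{relation-hatM-M} together with $\M(\lambda) = \mathbb{I}_3 + \lambda^{-1}\M_1 + O(\lambda^{-2})$ and noting that the $O(\lambda^{-1})$ correction coming from the diagonal factor is itself diagonal (hence inert for the $(1,2)$-block), the $(1,2)$-part of the $\lambda^{-1}$ coefficient of $\hat{\M}$ becomes $\M_{1,12} - \mathbf{X}_1\mathbf{G}^{-1}\mathbf{X}_2^\dagger$. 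Plugging into \eqref{recover-CNLS} delivers \eqref{matrix-form-T1}. The main obstacle throughout is identifying the implicit quantity $\mathbf{\Phi}_i$, which is defined via the intermediate RHP $\hat{\M}^{[i]}$, with an expression built purely from the final pole-free $\M$; the telescoping of \eqref{relation-hatM-M} down the chain of intermediate problems, together with the commutation of the diagonal factor with $\exp(-\ii\lambda_i(x+\lambda_i t)\pmb{\Lambda}_3)$, is precisely what makes this rewriting possible.
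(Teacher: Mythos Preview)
Your proof is correct and follows essentially the same route as the paper: iterate Lemma \ref{remove-a-pole} to obtain \eqref{relation-hatM-M}, identify the kernel vectors $|x_i\rangle$ of $\mathbf{T}^{(+)}(\lambda_i)$ by telescoping the intermediate dressing relations down to $\M$, express $\mathbf{T}^{(+)}$ in the closed form $\mathbb{I}_3-\mathbf{X}\mathbf{G}^{-1}(\lambda\mathbb{I}-\mathbf{D})^{-1}\mathbf{X}^\dagger$, and read off the $(1,2)$-block of the $\lambda^{-1}$ coefficient. The only cosmetic difference is that the paper obtains the closed form constructively from the partial-fraction ansatz \eqref{eq:ndt} together with the symmetry $\mathbf{T}^{(+)}(\lambda)\mathbf{T}^{(+)}(\lambda^*)^\dagger=\mathbb{I}_3$, whereas you verify it a posteriori via a Liouville argument on $\mathbf{T}_{\mathrm{prod}}\mathbf{T}_{\mathrm{closed}}^{-1}$; note that your uniqueness step at the poles $\lambda_j^*$ tacitly uses exactly this symmetry (to convert the kernel data at $\lambda_j$ into left-kernel data at $\lambda_j^*$), so it is worth making that explicit.
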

\begin{proof}
	Repeated the procedure in Lemma \ref{remove-a-pole} $N$ times, we can obtain the 
	relation \eqref{relation-hatM-M} between $\hat{\M}(\lambda)$ and $\M(\lambda)$. 
	Moreover, $\M(\lambda)$ would satisfy RHP \ref{initial-rhp-without-role} without 
	residue conditions.
	
	Then we only need to rewrite 
	$\mathbf{T}^{(\pm)}=\prod_{i=1}^{N}\mathbf{T}^{(\pm)}_i$ as another 
	matrix form \eqref{matrix-form-T1}.
	Now we consider the kernel of $\mathbf{T}_i^{(+)}(\lambda ; x, t)$ at $\lambda=\lambda_i$
	\begin{equation*}
		\mathbf{T}_i^{(+)}(\lambda_i; x, t)\mathbf{\Phi}_{i}=(\mathbb{I}_3-\mathbf{P}_i)\mathbf{\Phi}_{i}=0,\qquad i=1,2,\cdots, N,
	\end{equation*}
	and hence we deduce the kernel of $\mathbf{T}^{(+)}(\lambda ; x, t)$ at $\lambda=\lambda_N$
	\begin{equation*}
		\mathbf{T}^{(+)}(\lambda; x, t)\mathbf{\Phi}_{N}|_{\lambda=\lambda_N}=0.
	\end{equation*}
	Then we will determine the kernel of $\mathbf{T}^{(+)}(\lambda; x, t)$ 
	at $\lambda=\lambda_i$, for $i=1,2,\cdots, N-1$,  i.e. $\mathbf{T}^{(+)}(\lambda; x, t)\pmb{\varphi}_{i}|_{\lambda=\lambda_i}=0$. And the kernel $\pmb{\varphi}_{i}$ can be determined by 
	\begin{multline*}
		\prod_{j=i+1}^{N}\mathbf{T}_j^{(+)}(\lambda ; x, t)\pmb{\varphi}_i\big|_{\lambda=\lambda_i}=\mathbf{\Phi}_i
		=\hat{\M}^{[i]}(\lambda_i)\ee^{-\ii\lambda_i(x+\lambda_i t)\pmb{\Lambda}_3}\mathbf{v}_i\\
		=\prod_{j=i+1}^{N}\mathbf{T}_j^{(+)}(\lambda ; x, t)\M(\lambda ; x, t)\big|_{\lambda=\lambda_i}
		\mathrm{diag}\left(\prod_{j=i+1}^{N}\frac{\lambda_i-\lambda_{j}^{*}}{\lambda_i-\lambda_{j}}, 1,1\right)
		\ee^{-\ii\lambda_i(x+\lambda_i t)\pmb{\Lambda}_3}\mathbf{v}_i,
	\end{multline*}
	which implies that
	\begin{equation*}
		\pmb{\varphi}_{i}=\M(\lambda_i; x, t)\mathrm{diag}\left(\prod_{j=i+1}^{N}\frac{\lambda_i-\lambda_{j}^{*}}{\lambda_i-\lambda_{j}}, 1,1\right)
		\ee^{-\ii\lambda_i(x+\lambda_i t)\pmb{\Lambda}_3}\mathbf{v}_i.
	\end{equation*}
	
	By equation \eqref{def-Ti}, $\mathbf{T}^{(+)}$ can be written in a fractional form
	\begin{equation}\label{eq:ndt}
		\mathbf{T}^{(+)}(\lambda;x,t)=\mathbb{I}_3-
		\sum_{i=1}^{N}\frac{\,|y_i\rangle \langle x_i|\,}{\lambda-\lambda_i^*},
	\end{equation}
	where $|y_i\rangle$ and $\langle x_i|$ have been not determined yet and denote $\langle y_i|=|y_i\rangle^\dagger$ and 
	$|x_i\rangle=\langle x_i|^\dagger$.
	By the symmetry of $\mathbf{T}_i^{(+)}$, we have $\mathbf{T}^{(+)}(\lambda)(\mathbf{T}^{(+)}(\lambda^*))^\dagger=\mathbb{I}_3$ and hence 
	\begin{equation*}
		0=\mathop{\rm Res}\limits_{\lambda=\lambda_i}\mathbf{T}^{(+)}(\lambda)(\mathbf{T}^{(+)}(\lambda^*))^\dagger
		=-\mathbf{T}^{(+)}(\lambda_i)|x_i\rangle \langle y_i|.
	\end{equation*}
	Since $\langle y_i|\neq0$, the matrix $\mathbf{T}^{(+)}(\lambda_i)|x_i\rangle=0$. 
	Then we can choose $|x_i\rangle$ as $\pmb{\varphi}_i$ and further we have
	\begin{equation*}
		|x_i\rangle=\sum_{j=1}^{N}\frac{\,|y_j\rangle \langle x_j|\,}
		{\lambda_i-\lambda_j^*}|x_i\rangle.
	\end{equation*}
	The above equation can be represented in the matrix form so that $\mathbf{X}=\mathbf{Y}\mathbf{G}$, where
	\begin{equation*}
		\mathbf{X}=\begin{bmatrix}
			|x_1\rangle,\cdots,|x_N\rangle
		\end{bmatrix}=\begin{bmatrix}
			\pmb{\varphi}_1,\cdots,\pmb{\varphi}_N
		\end{bmatrix},\quad \mathbf{Y}=\begin{bmatrix}
			|y_1\rangle,\cdots,|y_N\rangle
		\end{bmatrix}\quad\text{and}\quad
		\mathbf{G}_{ji}=\frac{\langle x_j|x_i\rangle}{\lambda_i-\lambda_j^*},
	\end{equation*}
	$\mathbf{G}_{ji}$s denote the $(j,i)$ element of matrix $\mathbf{G}$.
	Hence, the Darboux matrix $\mathbf{T}^{(+)}(\lambda;x,t)$ can be rewritten
	\begin{equation}\label{matrix-form-T+}
		\mathbf{T}^{(+)}(\lambda;x,t)=\mathbb{I}_3-\mathbf{X}\mathbf{G}^{-1}(\lambda\mathbb{I}_3-\mathbf{D})^{-1}\mathbf{X}^\dagger,
	\end{equation}
	where $\mathbf{D}=\mathrm{diag}(\lambda_1^*,\cdots,\lambda_N^*)$.
	By equation \eqref{relation-hatM-M}, the recovery formula \eqref{recover-CNLS} can be rewritten as
	\begin{equation*}
		\mathbf{q}(x,t)=
		\lim_{\lambda\rightarrow\infty} \left(2\lambda \M(\lambda;x,t)\right)_{12}
		+\left(2\lambda \mathbf{T}^{(+)}(\lambda;x,t)\right)_{12}.
	\end{equation*}
	Together with equation \eqref{matrix-form-T+}, we can obtain the formula \eqref{matrix-form-T1}. 
	This completes the proof.
\end{proof}

By Theorem \ref{remove-N-poles}, the solution $\mathbf{q}(x,t)$ in equation \eqref{matrix-form-T1} can 
be recovered by the matrix 
$\M(\lambda;x,t)$ at $\lambda=\infty$ and  $\lambda=\lambda_i\in\mathcal{Z}$. 
To obtain the long-time behavior of the solution $\mathbf{q}(x,t)$ for the CNLS equation \eqref{CNLS}, 
it suffices to analyze the long-time behavior of the matrix 
$\M(\lambda;x,t)$ at infinity and $\mathcal{Z}$ with respect to the spectral parameter $\lambda_i$. 
In the next subsection, we apply INSD to obtain the long-time asymptotic behavior of solution $\M(\lambda;x,t)$ at $\lambda\in\mathcal{Z}\cup\{\infty\}$.

\subsection{Factorization of the jump matrix}

\noindent

An analogue of the classical NSD for RHPs was developed by Deift and 
Zhou \cite{deift1993steepest}. The key idea is to deform the jump matrix $\V(\lambda)$, which involves the oscillatory 
factor $\ee^{\pm 2\ii\theta}$ on the real axis, 
onto new contours in the plane where $\ee^{\pm 2\ii\theta}$
decay, as illustrated in figure \ref{region-decay-e}. And we can note that the jump matrix $\V(\lambda)$ has 
two distinct factorizations, which are needed in the contour deformation described in subsection \ref{contour-deformation},
\begin{equation*}
	\V(\lambda)=
	\begin{cases}
		\begin{pmatrix}
			1&-\mathrm{e}^{-2\ii t\theta}\mathbf{R}(\lambda)\\
			0&\mathbb{I}_2\end{pmatrix}
		\begin{pmatrix}1&0\\
			-\mathrm{e}^{2\ii t\theta}\mathbf{R}^\dagger(\lambda^*)&\mathbb{I}_2
		\end{pmatrix},\\
		\begin{pmatrix}1&0\\
			-\frac{\mathrm{e}^{2\ii t\theta}\mathbf{R}^\dagger(\lambda^*)}
			{1+\mathbf{R}
				(\lambda)\mathbf{R}^\dagger(\lambda^*)}&\mathbb{I}_2
		\end{pmatrix}
		\begin{pmatrix}
			1+\mathbf{R}(\lambda)\mathbf{R}^\dagger(\lambda^*)&0\\
			0&(\mathbb{I}_2+\mathbf{R}^\dagger(\lambda^*)\mathbf{R}(\lambda))^{-1}
		\end{pmatrix}
		\begin{pmatrix}1&-\frac{\mathrm{e}^{-2\ii t\theta}\mathbf{R}(\lambda)}
			{1+\mathbf{R}(\lambda)\mathbf{R}^\dagger(\lambda^*)}\\
			0&\mathbb{I}_2
		\end{pmatrix}.
	\end{cases}
\end{equation*}
We can see that the second factorization involves a diagonal matrix. 
Then we introduce a matrix function $\pmb{\delta}(\lambda)$ to conjugate the matrix $\M$, leading 
to a new RHP 
which has lower/upper factorization for the jump matrix on the real axis without the diagonal matrix.
Let $\pmb{\delta}(\lambda)$ be the solution of the matrix RHP
\begin{equation}\label{RHP-delta}
	\begin{cases}
		\pmb{\delta}_+(\lambda)=(\mathbb{I}_2+\mathbf{R}^\dagger\mathbf{R})\pmb{\delta}_-(\lambda),\quad&\lambda<\xi,\\
		\pmb{\delta}(\lambda)\rightarrow\mathbb{I}_2, &\lambda\rightarrow\infty,
	\end{cases}
\end{equation}
where $\xi=-\frac{x}{2t}$ is the stationary point of phase function $\theta(\lambda)$. 
Then the determinant of $\pmb{\delta}(\lambda)$ satisfies the following RHP:
\begin{equation}\label{RHP-det-delta}
	\begin{cases}
		\det\pmb{\delta}_+(\lambda)=\det\pmb{\delta}_-(\lambda)(1+|\mathbf{R}(\lambda)|^2),\quad  &\lambda<\xi,\\
		\det\pmb{\delta}(\lambda)\rightarrow1, &\lambda\rightarrow\infty.
	\end{cases}
\end{equation}

\begin{figure}[h]
	\centering
	\begin{tikzpicture}
		\fill[pink] (0,0) rectangle (2,2);\fill[pink] (-2,-2) rectangle (0,0);
		\fill[cyan!30] (-2,0) rectangle (0,2);\fill[cyan!30] (0,-2) rectangle (2,0);
		
		\draw[->,thick] (-2, 0) -- (-1, 0) ;
		\draw[thick] (-1, 0) -- (0, 0) ;
		\draw[->,thick] (0, 0) -- (1, 0) ;
		\draw[thick] (1, 0) -- (2, 0) ;
		
		\draw[dashed] (0, 1) -- (0, 0) ;
		\draw[dashed] (0, 2) -- (0, 1) ;
		\draw[dashed] (0, -2) -- (0, -1) ;
		\draw[dashed] (0, -1) -- (0, 0) ;
		\node at (0.3, -0.3)  {\small $\xi$};
		
		\node at (1, 1)  {\fontsize{7}{24}\selectfont $\ee^{2\ii t\theta}\ll1$};\node at (-1, -1)  {\fontsize{7}{24}\selectfont $\ee^{2\ii t\theta}\ll1$};
		\node at (-1, 1) {\fontsize{7}{24}\selectfont $\ee^{-2\ii t\theta}\ll1$};\node at (1, -1)  {\fontsize{7}{24}\selectfont $\ee^{-2\ii t\theta}\ll1$};
	\end{tikzpicture}
	\caption{\small{The regions of decay of the exponential factor $\ee^{\pm2\ii t\theta}$ for large $t>0$.}}
	\label{region-decay-e}
\end{figure}
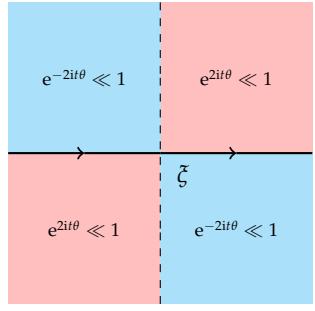
In terms of the positive definiteness of the jump matrix $\mathbb{I}_2+\mathbf{R}^\dagger\mathbf{R}(\lambda)$ 
and the vanishing lemma \cite{ablowitz2003complex}, it follows that $\pmb{\delta}$ exists and is unique. 
Additionally, $\det\pmb{\delta}$ can be determined by using the Plemelj formula \cite{ablowitz2003complex}
\begin{equation}\label{def-kappa}
	\begin{aligned}
		&\det\pmb{\delta}(\lambda)=\exp\left(\ii\int_{-\infty}^{\xi}\frac{\kappa(s)}{s-\lambda} \dd s\right),\\
		&\kappa(s)=-\frac{1}{2\pi}\ln\left(1+|\mathbf{R}(s)|^2\right).
	\end{aligned}
\end{equation}
% Indeed, the above integral is singular as $\lambda\rightarrow\xi$. In fact, we write the integral in the form
% \begin{equation}
	%   \left\{
	%   \begin{aligned}
		%     &\det\pmb{\delta}(\lambda)=(\lambda-\xi)^{i\kappa(\xi)}\exp\{i\beta(\lambda,\xi)\},\\
		%     &\beta(\lambda,\xi)=-\kappa(\xi)ln(\lambda-\xi+1)+\int_{-\infty}^{\xi}\frac{\kappa(s)-\chi(s)\kappa(\xi)}{s-\lambda}ds.
		%   \end{aligned}\right.
	% \end{equation}
% where $\chi(s)$ is the characteristic function of the interval $(\xi-1,\xi)$ and the logarithm is principally branched along $(-\infty,\xi-1]$.
Let $\mathbf{g}(\lambda)=[\pmb{\delta}^\dagger(\lambda^*)]^{-1}$, for $\lambda\in(-\infty,\xi)$, we have, 
\begin{equation*}
	\begin{aligned}
		\mathbf{g}_+(\lambda)
		&=\lim_{\varepsilon\rightarrow0_+}\mathbf{g}(\lambda+\ii\varepsilon)
		=\lim_{\varepsilon\rightarrow0_+}\left(\pmb{\delta}^\dagger(\lambda-\ii\varepsilon)\right)^{-1}\\
		&=\lim_{\varepsilon\rightarrow0_+}\left(\pmb{\delta}^\dagger(\lambda-\ii\varepsilon)\right)^{-1}
		=\left(\pmb{\delta}_-(\lambda)^\dagger\right)^{-1}\\
		&=\left(\mathbb{I}_2+\mathbf{R}^\dagger\mathbf{R}\right)(\pmb{\delta}_+^\dagger(\lambda))^{-1}
		=\left(\mathbb{I}_2+\mathbf{R}^\dagger\mathbf{R}\right)
		\lim_{\varepsilon\rightarrow0_+}\left(\pmb{\delta}^\dagger(\lambda+\ii\varepsilon)\right)^{-1}\\
		&=\left(\mathbb{I}_2+\mathbf{R}^\dagger\mathbf{R}\right)\lim_{\varepsilon\rightarrow0_+}\mathbf{g}(\lambda-\ii\varepsilon)
		=\left(\mathbb{I}_2+\mathbf{R}^\dagger\mathbf{R}\right)\mathbf{g}_-(\lambda).
	\end{aligned}
\end{equation*}
Noticing the uniqueness of the solution for the matrix RHP \eqref{RHP-delta}, we have
\begin{equation}\label{symmetry-delta}
	\pmb{\delta}(\lambda)=[\pmb{\delta}^\dagger(\lambda^*)]^{-1}.
\end{equation}
Inserting equation \eqref{symmetry-delta} into equation \eqref{RHP-delta}, we obtain
\begin{equation*}
	\begin{aligned}
		&
		|\pmb{\delta}_-(\lambda)|^2= 
		\begin{cases}
			2-\frac{|\mathbf{R}(\lambda)|^2}{1+|\mathbf{R}(\lambda)|^2},\\
			2,
		\end{cases} 
		&
		\begin{aligned}
			&\lambda<\xi,\\
			&\lambda>\xi,
		\end{aligned}\\
		&
		|\pmb{\delta}_+(\lambda)|^2= 
		\begin{cases}
			2+|\mathbf{R}(\lambda)|^2,\\
			2, 
		\end{cases}
		&
		\begin{aligned}
			&\lambda<\xi,\\
			&\lambda>\xi,
		\end{aligned}
	\end{aligned}
\end{equation*}
for $\lambda\in\mathbb{R}$. Similarly, we have
\begin{equation*}
	\det\pmb{\delta}(\lambda)(\det\pmb{\delta}(\lambda^*))^*=1,
\end{equation*}
then for real $\lambda$,
\begin{equation*}
	|\det\pmb{\delta}_+(\lambda)|=|\det\pmb{\delta}_-^{-1}(\lambda)|
	=\left(1+|\mathbf{R}(\lambda)|^2\right)^{\frac{1}{2}},
	\quad\lambda<\xi.
\end{equation*}
Hence, by the maximum principle, we obtain the boundedness of $\det\pmb{\delta}$ and $\pmb{\delta}$
\begin{equation}\label{bdd-delta}
	|\left(\det\pmb{\delta}(\lambda)\right)^{\pm1}|\leq const<\infty,
	\quad|\pmb{\delta}^{\pm1}(\lambda)|\leq const<\infty
\end{equation}
for any $\lambda\in\mathbb{C}$.

Now we utilize $\det\pmb{\delta}$ and $\pmb{\delta}$ to define a new function $\M^{(1)}(\lambda)$
\begin{equation}\label{def-Delta}
	\M^{(1)}(\lambda)= \M(\lambda)\pmb{\Delta}^{-1}(\lambda),
	\quad\pmb{\Delta}(\lambda)=
	\begin{pmatrix}
		\det\pmb{\delta}(\lambda)&0\\
		0&\pmb{\delta}^{-1}(\lambda)
	\end{pmatrix},
\end{equation}
and $\M^{(1)}(\lambda)$ satisfies the following RHP.

\begin{rhp}\label{rhp-M1}
	Find an analytic function $\M^{(1)}$: 
	$\mathbb{C}\setminus\mathbb{R}\rightarrow SL_{2}(\mathbb{C})$ 
	with the following properties
	\begin{itemize}
		\item[1.] $\M^{(1)}(\lambda)=\mathbb{I}_{3}+\mathcal{O}\left(\lambda^{-1}\right)$ as $\lambda\rightarrow\infty.$
		\item[2.] For each $\lambda\in\mathbb{R}$ (with $\mathbb{R}$ oriented left-to-right),
		$\M^{(1)}$ takes continuous boundary values $\M^{(1)}_{\pm}(\lambda)$ which satisfy 
		the jump relation: $\M_+^{(1)}(\lambda)=\M_-^{(1)}(\lambda)\V^{(1)}(\lambda)$ where
		\begin{equation*}
			\V^{(1)}(\lambda)=
			\begin{cases}
				\begin{pmatrix}
					1 & -\det\pmb{\delta}\ee^{-2\ii t\theta}\mathbf{R}(\lambda)\pmb{\delta}\\
					0 & \mathbb{I}_2
				\end{pmatrix}
				\begin{pmatrix}
					1 & 0\\
					-(\det\pmb{\delta})^{-1}\ee^{2\ii t\theta}\pmb{\delta}^{-1}\mathbf{R}^\dagger(\lambda) & \mathbb{I}_2
				\end{pmatrix},
				&\lambda\in(\xi,+\infty),\\
				\begin{pmatrix}
					1 & 0\\
					-(\det\pmb{\delta}_-)^{-1}\ee^{2\ii t\theta}\pmb{\delta}_-^{-1}
					\frac{\mathbf{R}^\dagger(\lambda)}{1+\mathbf{R}\mathbf{R}^\dagger} & \mathbb{I}_2
				\end{pmatrix}
				\begin{pmatrix}
					1 & -\det\pmb{\delta}_+\ee^{-2\ii t\theta}
					\frac{\mathbf{R}(\lambda)}{1+\mathbf{R}\mathbf{R}^\dagger}\pmb{\delta}_+\\
					0 & \mathbb{I}_2
				\end{pmatrix},
				&\lambda\in(-\infty,\xi).
			\end{cases}
		\end{equation*}
	\end{itemize}
\end{rhp}
% If $\M_1(x,t)$ and $\pmb{\Delta}_1$ are the residues of 
% $\M_(\lambda,x,t)$ and $\pmb{\Delta}(\lambda)$ at $\lambda=\infty$, 
% respectively, 
% \begin{equation*}
	%   \begin{aligned}
		%     &\M(\lambda;x,t)=\mathbb{I}_{3}+\frac{\M_1(x,t)}{\lambda}+\oo(\frac{1}{\lambda}),\\
		%     &\pmb{\Delta}(\lambda)=\mathbb{I}_{3}+\frac{\pmb{\Delta}_1}{\lambda}+\oo(\frac{1}{\lambda}),
		%   \end{aligned}
	% \end{equation*}
% then we find that the residue $\M_1^{(1)}(x,t)$ of $\M^{(1)}(x,t)$ 
% is given by $\M_1^{(1)}(x,t)=\M_1(x,t)-\pmb{\Delta}_1$.
% Since $\pmb{\Delta}_1$ is diagonal, we have
% \begin{equation*}
	%   (\M_1(\lambda;x,t))_{12}=(\M_1^{(1)}(\lambda;x,t))_{12}.
	% \end{equation*}
Note that the jump matrix $\V^{(1)}$ admits the following factorization 
\begin{equation}
	\V^{(1)}=(\mathbb{I}_{3}-\mathbf{w}_1^-)^{-1}(\mathbb{I}_{3}+\mathbf{w}_1^+)
\end{equation}
where 
\begin{equation}
	\begin{aligned}
		\mathbf{w}_1
		&=(\mathbf{w}^-_1,\mathbf{w}^+_1)\\
		&=\begin{cases}
			\left(
			\begin{pmatrix}
				0&-\det\pmb{\delta}\ee^{-2\ii t\theta}
				\mathbf{R}(\lambda)\pmb{\delta}\\
				0&0\end{pmatrix},
			\begin{pmatrix}0&0\\
				-(\det\pmb{\delta})^{-1}\ee^{2\ii t\theta}\pmb{\delta}^{-1}\mathbf{R}^\dagger(\lambda)&0
			\end{pmatrix}
			\right),&\lambda>\xi ,\\
			\left(
			\begin{pmatrix}
				0&0\\
				-(\det\pmb{\delta}_-)^{-1}\ee^{2\ii t\theta}\pmb{\delta}_-^{-1}
				\frac{\mathbf{R}^\dagger(\lambda)}{1+\mathbf{R}\mathbf{R}^\dagger}&0
			\end{pmatrix},
			\begin{pmatrix}
				0&-\det\pmb{\delta}_+\ee^{-2\ii t\theta}
				\frac{\mathbf{R}(\lambda)}{1+\mathbf{R}\mathbf{R}^\dagger}\pmb{\delta}_+\\
				0&0
			\end{pmatrix}\right),&\lambda<\xi .
		\end{cases}
	\end{aligned}
\end{equation}
Then it is well known that the solvability of the RHP \ref{rhp-M1} is equivalent to 
the solvability of the following Beals-Coifman integral equation 
\begin{equation}\label{Beals-Coifman-mu1}
	\pmb{\mu}_1=\mathbb{I}_{3}+C_{\mathbf{w}_1}\pmb{\mu}_1,\qquad 
	C_{\mathbf{w}_1}\pmb{\mu}_1\equiv C_{\mathbb{R}}^+\pmb{\mu}_1{\mathbf{w}_1^-}+C_{\mathbb{R}}^-\pmb{\mu}_1{\mathbf{w}_1^+}
\end{equation}
% It follows then by Proposition 2.6 in \cite{deift2002long} that RHP \ref{rhp-M1} has a 
% unique solution and 
% \begin{equation}
	%   \M^{(1)}(\lambda;x,t)=\mathbb{I}_{3}+C(\pmb{\mu}_1(\mathbf{w}^-_1 + \mathbf{w}^+_1)),
	% \end{equation}
where 
$C_{\mathbb{R}}^{\pm}$ is defined by
\begin{equation}
	(C_{\mathbb{R}}^\pm h)(z)=\lim_{z'\rightarrow z^\pm}(C_{\mathbb{R}} h)(z'),
\end{equation} 
and the Cauchy operator
\begin{equation}
	(C_{\mathbb{R}} h)(z)=\int\limits_{\mathbb{R}}\frac{h(s)}{s-z}\frac{ds}{2\pi \ii} 
	,\quad z\in\mathbb{C}\setminus\mathbb{R}.
\end{equation}

Now, we have the following lemmas to show 
that the operator $(1-C_{\mathbf{w}_1})^{-1}$ exists and 
is bounded in $L^2(\mathbb{R})$, which guarantees the existence of solutions $\M^{(1)}$. 
% Moreover, the residue $\M_1^{(1)}$ of $\M^{(1)}$ is given by
% \begin{equation}
	%   \M_1^{(1)}(x,t)=-\frac{1}{2\pi \ii}\int_{\mathbb{R}}\pmb{\mu}_1 \mathbf{w}_1
	%   =-\frac{1}{2\pi \ii}\int_{\mathbb{R}}\pmb{\mu}_1 (\mathbf{w}_1^-+\mathbf{w}_1^+)\dd s.
	% \end{equation}

\begin{lemma}\label{bdd-w}
	For any factorization 
	$\V=(\V^-)^{-1}\V^+,\V^\pm,(\V^\pm)^{-1}\in L^\infty(\mathbb{R})$, the operator $1-C_{\mathbf{w}}$ with 
	$\mathbf{w}=(\mathbf{w}^-,\mathbf{w}^+)=(\mathbb{I}_3-\V^-,\V^+-\mathbb{I}_3)$ has a bounded inverse in $L^2(\mathbb{R})$ and 
	\begin{equation*}
		\|(1-C_\mathbf{w})^{-1}\|_{L^2(\mathbb{R})}\lesssim 1.
	\end{equation*}
\end{lemma}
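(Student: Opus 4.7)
The plan is to prove the lemma in the classical Beals-Coifman framework via three steps: boundedness of $C_{\mathbf{w}}$, injectivity through a vanishing lemma, and bijectivity through Fredholm index arguments plus the open mapping theorem.

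First, boundedness of $C_{\mathbf{w}}$ on $L^2(\mathbb{R})$ follows from the M.~Riesz theorem (which yields $L^2$-boundedness of $C^{\pm}_{\mathbb{R}}$) together with the boundedness of multiplication by $\mathbf{w}^{\pm}\in L^{\infty}(\mathbb{R})$. This gives the standard estimate
\begin{equation*}
\|C_{\mathbf{w}}h\|_{L^2}\lesssim \bigl(\|\mathbf{w}^-\|_{L^\infty}+\|\mathbf{w}^+\|_{L^\infty}\bigr)\|h\|_{L^2},
\end{equation*}
which in particular shows $C_{\mathbf{w}}$ extends to a bounded operator on $L^2(\mathbb{R})$.

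Second, injectivity of $1-C_{\mathbf{w}}$ is equivalent, via the Beals-Coifman correspondence, to uniqueness for the RHP with jump $\V$ and trivial asymptotics at infinity. For the $\V$ of interest in this paper, the original jump matrix in RHP \ref{initial-rhp-without-role} has the Hermitian-positive structure $\V+\V^{\dagger}\succ 0$ on $\mathbb{R}$, visible directly from its block form with positive diagonal $1+\mathbf{R}\mathbf{R}^{\dagger}\ge 1$ and $\mathbb{I}_2$. This positivity is preserved under conjugation by $\pmb{\Delta}$ thanks to the symmetry \eqref{symmetry-delta} and the uniform bound \eqref{bdd-delta}. The vanishing lemma of Zhou, proved by integrating $\M^{\circ}(\lambda)\M^{\circ}(\lambda^*)^{\dagger}$ around a large semicircle and exploiting the positivity on the real axis, then forces the only homogeneous solution to be identically zero. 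Third, bijectivity is obtained by combining injectivity with the fact that $1-C_{\mathbf{w}}$ has Fredholm index zero on $L^2(\mathbb{R})$, a standard consequence of Gohberg-Krupnik theory for singular integral operators whose symbol is invertible in $L^{\infty}$. Surjectivity follows, and the open mapping theorem delivers a bounded inverse.

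The principal obstacle is not mere invertibility but the \emph{uniformity} of the estimate $\|(1-C_{\mathbf{w}})^{-1}\|_{L^2}\lesssim 1$ independently of the parameters $(x,t)$ and of the particular factorization chosen. This uniformity demands that the constants produced by the Fredholm step be tracked carefully and that $\|\V^{\pm}\|_{L^\infty}$ and $\|(\V^{\pm})^{-1}\|_{L^\infty}$ be controlled uniformly. In the downstream application to RHP \ref{rhp-M1}, this uniform control is supplied precisely by \eqref{bdd-delta} together with the embedding $H^{1,1}(\mathbb{R})\hookrightarrow L^{\infty}(\mathbb{R})$ that yields $\mathbf{R}\in L^{\infty}(\mathbb{R})$, and this is where the low-regularity assumption on $\mathbf{q}_0$ makes itself felt.
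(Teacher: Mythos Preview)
Your three-step outline (boundedness, vanishing lemma for injectivity, Fredholm index zero for bijectivity) is sound and matches the paper's route to \emph{invertibility}. The paper carries this out more concretely---it shows Fredholmness by approximating $\V-\mathbb{I}_3$ with rational functions so that the relevant commutator is a norm limit of finite-rank operators, and it obtains index zero by the homotopy $\V_\alpha$ with $\alpha\in[0,1]$---but your appeal to Gohberg--Krupnik would accomplish the same thing.

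The genuine gap is in the uniform bound. You correctly flag that ``the principal obstacle is \ldots\ the uniformity of the estimate,'' but you do not close it: the open mapping theorem yields only a bounded inverse for each fixed $(x,t)$, and ``tracking constants through the Fredholm step'' is neither carried out nor, in fact, the right mechanism. Fredholm/index arguments are qualitative and do not produce explicit operator bounds that survive the parameter dependence. The paper does not extract uniformity from the Fredholm machinery at all. Instead, it uses a separate \emph{quantitative} a~priori estimate: given the inhomogeneous problem $\M_+=\M_-\V+\mathbf{f}$ (Proposition~2.6 of \cite{deift2002long}), integrating $\M_+\M_-^\dagger$ along $\mathbb{R}$ and using the strict Hermitian positivity of $\V$ gives
\[
\lambda_0\,\|\M_-\|_{L^2}^2 \le \Bigl|\int_{\mathbb{R}}\M_-\V\M_-^\dagger\Bigr| = \Bigl|\int_{\mathbb{R}}\mathbf{f}\M_-^\dagger\Bigr| \le \|\mathbf{f}\|_{L^2}\|\M_-\|_{L^2},
\]
where $\lambda_0>0$ is the smallest eigenvalue of $\V$. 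For the specific jump \eqref{def-R-no-soliton} one checks $\lambda_0 \ge (2+\|\mathbf{R}\|_{L^\infty}^2)^{-1}$ uniformly in $\lambda$, $x$, $t$, and this directly yields $\|(1-C_{\mathbf{w}})^{-1}\|_{L^2}\lesssim 1$. Your vanishing-lemma computation is exactly the homogeneous version of this estimate; the missing step is to run the same positivity argument on the \emph{inhomogeneous} problem to get the quantitative bound, rather than invoking the open mapping theorem.
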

A proof of Lemma \ref{bdd-w} is detailed in Appendix A for the sake of completeness.

\begin{lemma}\label{bbd-w1}
	The operator $(1-C_{\mathbf{w}_1})^{-1}$ exists in $L^2(\mathbb{R})$ and 
	the bound on $(1-C_{\mathbf{w}_1})^{-1}$ is equivalent to the bound on $(1-C_{\mathbf{w}})^{-1}$.
\end{lemma}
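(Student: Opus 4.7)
The plan is to reduce the proof of Lemma \ref{bbd-w1} to Lemma \ref{bdd-w} applied to the factorization $\V^{(1)} = (\mathbb{I}_3 - \mathbf{w}_1^-)^{-1}(\mathbb{I}_3 + \mathbf{w}_1^+)$ of the jump matrix of RHP \ref{rhp-M1} that is already displayed in the excerpt. Since each $\mathbf{w}_1^\pm$ is strictly triangular (and hence nilpotent), the factors $\mathbb{I}_3 \pm \mathbf{w}_1^\pm$ and their inverses can be written out explicitly, so the only nontrivial verification will be that all these factors lie in $L^\infty(\mathbb{R})$.

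First, I would inspect the ingredients appearing in $\mathbf{w}_1^\pm$ on each piece of $\mathbb{R} \setminus \{\xi\}$: the matrix $\pmb{\delta}_\pm$ and its inverse, the scalar $\det\pmb{\delta}_\pm$ and its inverse, the reflection vector $\mathbf{R}(\lambda)$ (together with the bounded rational expression $\mathbf{R}/(1+|\mathbf{R}|^2)$ that appears on $(-\infty,\xi)$), and the unimodular factor $\ee^{\pm 2\ii t\theta(\lambda)}$. Each is uniformly bounded on $\mathbb{R}$: $\pmb{\delta}^{\pm 1}$ and $(\det\pmb{\delta})^{\pm 1}$ by the uniform estimate \eqref{bdd-delta}; $\mathbf{R}$ by the Sobolev embedding $H^{1,1}(\mathbb{R}) \hookrightarrow L^\infty(\mathbb{R})$; and $\ee^{\pm 2\ii t\theta}$ has modulus one on the real line. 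This yields $\mathbb{I}_3 \pm \mathbf{w}_1^\pm,\, (\mathbb{I}_3 \pm \mathbf{w}_1^\pm)^{-1} \in L^\infty(\mathbb{R})$, so Lemma \ref{bdd-w} applies and produces a bounded inverse $(1 - C_{\mathbf{w}_1})^{-1}$ on $L^2(\mathbb{R})$.

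For the equivalence of bounds, I would observe that the factors of $\V^{(1)}$ differ from the corresponding factors of $\V$ (in the natural factorization used to define $\mathbf{w}$) by left/right multiplication by $\pmb{\delta}_\pm^{\pm 1}$ and by $(\det\pmb{\delta}_\pm)^{\pm 1}$. Consequently, the $L^\infty$ norms of the factors for $\mathbf{w}_1$ and those for $\mathbf{w}$ each control the other up to constants depending only on the uniform estimate \eqref{bdd-delta}. Since the proof of Lemma \ref{bdd-w} in Appendix A produces an operator-norm bound that depends only on these $L^\infty$ norms, the bounds on $(1 - C_{\mathbf{w}_1})^{-1}$ and $(1 - C_{\mathbf{w}})^{-1}$ are comparable. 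The main subtlety I anticipate is that on $(-\infty, \xi)$ the factorizations of $\V$ and $\V^{(1)}$ have structurally different forms --- $\V$ requires an intermediate diagonal factor that is exactly what $\pmb{\delta}$ was introduced to remove --- so the equivalence has to be interpreted as uniform comparability of the $L^\infty$ data entering the two factorizations, rather than as a literal conjugation identity between the two Beals--Coifman operators.
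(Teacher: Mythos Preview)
Your first step contains a genuine gap. Lemma \ref{bdd-w}, as stated and proved in Appendix A, is not a general statement about arbitrary jump matrices with bounded triangular factorizations: its proof uses in an essential way that the specific jump $\V$ of RHP \ref{initial-rhp-without-role} is Hermitian and strictly positive definite (this enters both in the vanishing step $\int_{\mathbb{R}}{\bf m}_-\V{\bf m}_-^\dagger=0\Rightarrow{\bf m}_-=0$ and in the quantitative bound via the smallest eigenvalue $\lambda_0$). Your claim that ``the proof of Lemma \ref{bdd-w} produces an operator-norm bound that depends only on these $L^\infty$ norms'' is therefore incorrect, and you cannot simply invoke the lemma for $\V^{(1)}$. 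The gap can be filled: from the symmetry \eqref{symmetry-delta} one has $\pmb{\Delta}_+^{-1}=\pmb{\Delta}_-^\dagger$ on $\mathbb{R}$, so $\V^{(1)}=\pmb{\Delta}_-\V\pmb{\Delta}_-^\dagger$ is again Hermitian positive definite with smallest eigenvalue bounded below by $\lambda_0/\|\pmb{\Delta}_-^{-1}\|_{L^\infty}^2$. Once you add this observation, your rerun of the Appendix A argument does yield both existence and a bound for $(1-C_{\mathbf{w}_1})^{-1}$ that is comparable to the one for $(1-C_{\mathbf{w}})^{-1}$, with comparison constants depending only on \eqref{bdd-delta}.

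The paper takes a different and more structural route. Rather than re-proving positivity for $\V^{(1)}$, it cites \cite[pp.~1042]{deift2002long} and uses only that $\pmb{\Delta}$ is analytic in $\mathbb{C}\setminus\mathbb{R}$ with $\pmb{\Delta},\pmb{\Delta}^{-1}\in L^\infty$ and $\pmb{\Delta}\to\mathbb{I}_3$ at infinity. Because $\M^{(1)}=\M\pmb{\Delta}^{-1}$, the two RHPs are related by an explicit bounded analytic conjugation, and the Deift--Zhou machinery converts this directly into an equivalence of the associated Beals--Coifman operators (solutions of one inhomogeneous equation map to solutions of the other with norms controlled by $\|\pmb{\Delta}^{\pm1}\|_{L^\infty}$). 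This bypasses entirely the ``main subtlety'' you raise about the structurally different factorizations on $(-\infty,\xi)$: that discrepancy is precisely what the conjugation by $\pmb{\Delta}$ absorbs. Your approach, once patched, is more self-contained but re-derives from scratch what the conjugation gives for free.
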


\begin{proof}
	By \cite[pp.1042]{deift2002long}, results follow from the analyticity and boundedness properties of 
	$\det\pmb{\delta}$ and $\pmb{\delta}$.
\end{proof}
It follows then by Proposition 2.6 in \cite{deift2002long} that RHP \ref{rhp-M1} has a 
unique solution and 
\begin{equation}
	\M^{(1)}(\lambda;x,t)=\mathbb{I}_{3}+C_{\mathbb{R}}(\pmb{\mu}_1(\mathbf{w}^-_1 + \mathbf{w}^+_1)),
\end{equation}
where $\pmb{\mu}$ is the unique solution 
of Beals-Coifman integral equation \eqref{Beals-Coifman-mu1}.
% and
% \begin{equation*}
	%   C_{\mathbf{w}_1}\mathbf{f}=C^+\mathbf{f}{\mathbf{w}_1^-}+C^-\mathbf{f}{\mathbf{w}_1^+}.
	% \end{equation*}
Moreover, we can obtain the expansion of $\M^{(1)}$ 
\begin{equation}
	\M^{(1)}(\lambda;x,t)=\mathbb{I}_3+\frac{\M_1^{(1)}(x,t)}{\lambda}+\oo(\lambda^{-2}),
	\quad \lambda\rightarrow\infty,
\end{equation}
where
\begin{equation}
	\M_1^{(1)}(x,t)=-\frac{1}{2\pi \ii}\int_{\mathbb{R}}\pmb{\mu}_1 \mathbf{w}_1
	=-\frac{1}{2\pi \ii}\int_{\mathbb{R}}\pmb{\mu}_1 (\mathbf{w}_1^-+\mathbf{w}_1^+).
\end{equation}
% so we have $\mathbf{q}(x,t)=(\frac{\ii}{\pi}\int_{\mathbb{R}}\pmb{\mu}_1 w_1)_{12}$.

Next, we want to construct a new RHP whose solution $\M^{(2)}$ is a suitable approximation 
to solution $\M^{(1)}$. For a function $f$ defined on the real axis, we introduce the notation 
\begin{equation}
	[f](\lambda)=\frac{f(\xi)}{(1+\ii(\lambda-\xi))^2} ,\quad \lambda\in\mathbb{R},
\end{equation}
where $\xi=-x/(2t)$ denotes the stationary phase point for $\theta$.
Clearly, $[f](\xi)=f(\xi)$. Then we replace
$\mathbf{R}$ by $[\mathbf{R}]$ in RHP \ref{rhp-M1}, obtaining the following RHP 
which allows us to deform jump contour from the real axis into a new contour as described in next subsection.

\begin{rhp}\label{rhp-M2}
	Find an analytic function $\M^{(2)}$: 
	$\mathbb{C}\setminus\mathbb{R}\rightarrow SL_{2}(\mathbb{C})$ 
	with the following properties
	\begin{itemize}
		\item[1.] $\M^{(2)}(\lambda)=\mathbb{I}_{3}+\mathcal{O}\left(\lambda^{-1}\right)$ as $\lambda\rightarrow\infty.$
		\item[2.] For each $\lambda\in\mathbb{R}$ (with $\mathbb{R}$ oriented left-to-right),
		$\M^{(2)}$ takes continuous boundary values $\M^{(2)}_{\pm}(\lambda)$ which satisfy 
		the jump relation: $\M_+^{(2)}(\lambda)=\M_-^{(2)}(\lambda)\V^{(2)}(\lambda)$ where
		\begin{equation*}
			\V^{(2)}(\lambda)=
			\begin{cases}
				\begin{pmatrix}
					1 & -\det\pmb{\delta}\ee^{-2\ii t\theta}
					[\mathbf{R}](\lambda)\pmb{\delta}\\
					0 & \mathbb{I}_2
				\end{pmatrix}
				\begin{pmatrix}
					1 & 0\\
					-(\det\pmb{\delta})^{-1}\ee^{2\ii t\theta}
					\pmb{\delta}^{-1}[\mathbf{R}^\dagger](\lambda) & \mathbb{I}_2
				\end{pmatrix},
				&\lambda\in(\xi,+\infty),\\[6pt]
				\begin{pmatrix}
					1 & 0\\
					-(\det\pmb{\delta}_-)^{-1}\ee^{2\ii t\theta}\pmb{\delta}_-^{-1}
					[\frac{\mathbf{R}^\dagger(\lambda)}{1+\mathbf{R}\mathbf{R}^\dagger}] & \mathbb{I}_2
				\end{pmatrix}
				\begin{pmatrix}
					1 & -\det\pmb{\delta}_+\ee^{-2\ii t\theta}
					[\frac{\mathbf{R}(\lambda)}{1+\mathbf{R}\mathbf{R}^\dagger}]\pmb{\delta}_+\\
					0 & \mathbb{I}_2
				\end{pmatrix},
				&\lambda\in(-\infty,\xi).
			\end{cases}
		\end{equation*}
	\end{itemize}
\end{rhp}
Then we have a similar factorization for $\V^{(2)}$, 
$\V^{(2)}=(\mathbb{I}_{3}-\mathbf{w}_2^-)^{-1}(\mathbb{I}_{3}+\mathbf{w}_2^+)$, where 
\begin{equation}
	\begin{aligned}
		\mathbf{w}_2
		&=(\mathbf{w}^-_2,\mathbf{w}^+_2)\\
		&=\begin{cases}
			\left(
			\begin{pmatrix}
				0&-\det\pmb{\delta}\ee^{-2\ii t\theta}[\mathbf{R}]\pmb{\delta}\\
				0&0\end{pmatrix},
			\begin{pmatrix}0&0\\
				-(\det\pmb{\delta})^{-1}\ee^{2\ii t\theta}\pmb{\delta}^{-1}[\mathbf{R}^\dagger]&0
			\end{pmatrix}
			\right),&\lambda>\xi ,\\
			\left(
			\begin{pmatrix}
				0&0\\
				-(\det\pmb{\delta}_-)^{-1}\ee^{2\ii t\theta}\pmb{\delta}_-^{-1}
				[\frac{\mathbf{R}^\dagger}{1+\mathbf{R}\mathbf{R}^\dagger}]&0
			\end{pmatrix},
			\begin{pmatrix}
				0&-\det\pmb{\delta}_+\ee^{-2\ii t\theta}
				[\frac{\mathbf{R}}{1+\mathbf{R}\mathbf{R}^\dagger}]\pmb{\delta}_+\\
				0&0
			\end{pmatrix}\right),&\lambda<\xi .
		\end{cases}
	\end{aligned}
\end{equation}
% By Proposition 3.1 in \cite{Liu_2019} and Assumption 1, 
% we obtain $\mathbf{R}(\lambda)\in H^{1,1}(\mathbb{R})$, which guarantees solution $\M^(2)$ is 
% a suitable approximate to solution $\M^{(1)}$.

\subsection{Contour deformation}\label{contour-deformation}
We now perform contour deformation on RHP \ref{rhp-M2}. Define the contours
\begin{equation*}
	\Sigma=\cup_{k=1}^4\Sigma_k,
	\quad\Sigma_k=\xi+\ee^{\frac{(2k-1)\pi\ii}{4}}\mathbb{R}_+,\quad k=1,2,3,4,
\end{equation*}
oriented with increasing real part and denote the six open sectors in $\mathbb{C}$, separated by $\mathbb{R}$ and the collection of $\Sigma_k$,
$k = 1,...,4$, by $\Omega_k$, $k = 1,...,6$ starting with the sector $\Omega_1$ between $[\xi,+\infty)$ 
and $\Sigma_1$ and numbered consecutively 
continuing counterclockwise, see figure \ref{region-Omega}. 
We introduce the trivial extension $\V^e$ of $\V^{(2)}$ onto $\Sigma\cup\mathbb{R}$ by setting $\V^e=\V^{(2)}$ 
on $\mathbb{R}$ and $\V^e=\mathbb{I}_3$ on $\Sigma$.
\begin{figure}[h]
	\centering
	\begin{tikzpicture}
		\draw[-] (-3, 0) -- (3, 0) ;
		\draw[->] (-1.5, 1.5) -- (1.5, -1.5) ;
		\draw[->] (-1.5, -1.5) -- (1.5, 1.5) ;
		\draw[->] (-3, 3) -- (-1.5, 1.5) ;
		\draw[->] (-3, -3) -- (-1.5, -1.5) ;
		\draw[-] (1.5, 1.5) -- (3, 3) ;
		\draw[-] (1.5, -1.5) -- (3, -3) ;
		\node at (1, 0.5)  {$\Omega_1$};
		\node at (0, 1)  {$\Omega_2$};
		\node at (-1, 0.5)  {$\Omega_3$};
		\node at (-1, -0.5)  {$\Omega_4$};
		\node at (0, -1)  {$\Omega_5$};
		\node at (1, -0.5)  {$\Omega_6$};
		\node at (3, 2.5)  {$\Sigma_1$};
		\node at (3, -2.5)  {$\Sigma_4$};
		\node at (-3, 2.5)  {$\Sigma_2$};
		\node at (-3, -2.5)  {$\Sigma_3$};
		
		\vspace{0.5cm} 
	\end{tikzpicture}
	\caption{\small{The contours $\Sigma_k$ and regions $\Omega_k$.}}
	\label{region-Omega}
\end{figure}
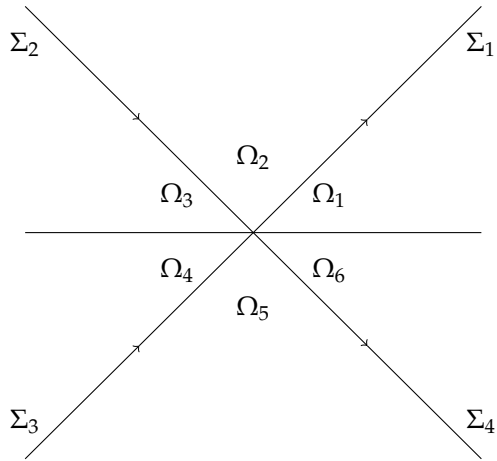

\begin{lemma}\label{bdd-w2-equ-bdd-we}
	(cf. \cite[pp.1023-1044]{deift2002long}) 
	Suppose $\V^e=(1-\mathbf{w}_e^-)^{-1}(1+\mathbf{w}_e^+)$, then $(1-C_{\mathbf{w}_2})^{-1}$ is bounded in $L^2(\mathbb{R})$ if and 
	only if $(1-C_{\mathbf{w}_e})^{-1}$ is bounded in $L^2(\mathbb{R}\cup\Sigma)$.
\end{lemma}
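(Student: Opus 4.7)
The plan is to exploit the fact that $\V^e = \mathbb{I}_3$ on $\Sigma$, so that $\mathbf{w}_e^\pm$ vanish on $\Sigma$ while coinciding with $\mathbf{w}_2^\pm$ on $\mathbb{R}$. This should force $C_{\mathbf{w}_e}$ on $L^2(\mathbb{R}\cup\Sigma)$ to be block triangular with respect to the orthogonal decomposition $L^2(\mathbb{R}\cup\Sigma) = L^2(\mathbb{R}) \oplus L^2(\Sigma)$, with $C_{\mathbf{w}_2}$ sitting as the $\mathbb{R}\to\mathbb{R}$ block. From this, inverting $1 - C_{\mathbf{w}_e}$ will reduce to inverting $1 - C_{\mathbf{w}_2}$, up to a bounded correction.

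First I would split any $\pmb{\mu}\in L^2(\mathbb{R}\cup\Sigma)$ as $\pmb{\mu} = \pmb{\mu}_\mathbb{R}\oplus\pmb{\mu}_\Sigma$. Since $\mathbf{w}_e^\pm$ vanishes on $\Sigma$, the product $\pmb{\mu}\mathbf{w}_e^\pm$ is supported on $\mathbb{R}$ and depends only on $\pmb{\mu}_\mathbb{R}$. Hence the Cauchy integral $C_{\mathbb{R}\cup\Sigma}(\pmb{\mu}\mathbf{w}_e^\pm)$ reduces to an integral over $\mathbb{R}$ alone. Evaluating its boundary values, I would observe that for $z\in\mathbb{R}$ one recovers exactly $C_{\mathbf{w}_2}\pmb{\mu}_\mathbb{R}$, while for $z\in\Sigma$ (off the real line, since $\Sigma\cap\mathbb{R} = \{\xi\}$ has measure zero) the Cauchy transform defines a bounded operator $K\colon L^2(\mathbb{R})\to L^2(\Sigma)$. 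Thus, in block form,
\begin{equation*}
1 - C_{\mathbf{w}_e} = \begin{pmatrix} 1 - C_{\mathbf{w}_2} & 0 \\ -K & 1 \end{pmatrix},
\end{equation*}
whose inverse (when it exists) is
\begin{equation*}
\begin{pmatrix} (1 - C_{\mathbf{w}_2})^{-1} & 0 \\ K(1 - C_{\mathbf{w}_2})^{-1} & 1 \end{pmatrix}.
\end{equation*}

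The equivalence of invertibility and of the operator-norm bounds then follows immediately, since $K$ is bounded from $L^2(\mathbb{R})$ to $L^2(\Sigma)$ with norm controlled by $\|\mathbf{w}_2\|_{L^\infty}$ times the norm of the off-contour Cauchy evaluation (elementary, since the distance from $\Sigma$ to $\mathbb{R}\setminus\{\xi\}$ is positive along compact pieces, and at infinity both $\mathbf{w}_2^\pm$ decay).

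The main obstacle I anticipate is making rigorous the handling of the intersection point $\xi\in\mathbb{R}\cap\Sigma$. To address this I would invoke the standard theory of Cauchy operators on piecewise smooth contours with finitely many self-intersections, using that the intersection is a single point and hence of measure zero so that $L^2$ decomposition is unaffected; alternatively, one may deform each $\Sigma_k$ slightly away from $\xi$ to avoid the crossing, carry out the block computation cleanly, and then pass to the limit using the continuity of $C_\pm$ under such contour perturbations, which is precisely the structure developed in \cite{deift2002long} and which justifies the cited reference.
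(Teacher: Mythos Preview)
Your approach is correct and matches the standard argument from the cited reference \cite{deift2002long}; the paper itself gives no independent proof but simply defers to that source, and your block--lower--triangular decomposition of $1-C_{\mathbf{w}_e}$ with respect to $L^2(\mathbb{R})\oplus L^2(\Sigma)$ is exactly what is done there.

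One small refinement: your justification that $K$ is bounded via ``the distance from $\Sigma$ to $\mathbb{R}\setminus\{\xi\}$ is positive along compact pieces'' does not quite work, since points on $\Sigma$ near $\xi$ are arbitrarily close to $\mathbb{R}$. The cleanest route is the one you already gesture at: boundedness of $K$ follows immediately from the $L^2(\mathbb{R}\cup\Sigma)$--boundedness of $C^\pm_{\mathbb{R}\cup\Sigma}$ (a standard fact for finite unions of transversal lines) composed with multiplication by $\mathbf{w}_e^\pm\in L^\infty$ and the orthogonal projection onto $L^2(\Sigma)$; no separate distance or decay argument is needed.
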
 
We define a piecewise analytic function $\mathbf{\Xi}$ 
\begin{equation}\label{analysis-expand-Phi}
	\mathbf{\Xi}(\lambda)=
	\begin{cases}
		\begin{pmatrix}
			1 & 0\\
			(\det\pmb{\delta})^{-1}\ee^{2\ii t\theta}\pmb{\delta}^{-1}[\mathbf{R}^\dagger](\lambda) & \mathbb{I}_2
		\end{pmatrix},  
		&\lambda \in\Omega_1,\\
		\begin{pmatrix}
			1 & \det\pmb{\delta} \ee^{-2\ii t\theta}
			[\frac{\mathbf{R}(\lambda)}{1+\mathbf{R}\mathbf{R}^\dagger}]\pmb{\delta}\\
			0 & \mathbb{I}_2
		\end{pmatrix},
		&\lambda \in\Omega_3,\\
		\begin{pmatrix}
			1 & 0\\
			-(\det\pmb{\delta})^{-1}\ee^{2\ii t\theta}\pmb{\delta}^{-1}
			[\frac{\mathbf{R}^\dagger(\lambda)}{1+\mathbf{R}\mathbf{R}^\dagger}] & \mathbb{I}_2
		\end{pmatrix},
		&\lambda \in \Omega_4,\\
		\begin{pmatrix}
			1 & -\det\pmb{\delta} \ee^{-2\ii t\theta}[\mathbf{R}](\lambda)\pmb{\delta}\\
			0 & \mathbb{I}_2
		\end{pmatrix}  ,
		&\lambda \in\Omega_6,\\
		\mathbb{I}_3,
		&\lambda \in\Omega_2\cup\Omega_5,
	\end{cases} 
\end{equation}
and we set $\M^{(3)}=\M^{(2)}\mathbf{\Xi}$. 
Direct calculation shows that $\M^{(3)}$ satisfies the following RHP.

\begin{rhp}\label{rhp-M3}
	Find an analytic function $\M^{(3)}$: 
	$\mathbb{C}\setminus\Sigma\rightarrow SL_{2}(\mathbb{C})$ 
	with the following properties
	\begin{itemize}
		\item[1.] $\M^{(3)}(\lambda)=\mathbb{I}_{3}+\mathcal{O}\left(\lambda^{-1}\right)$ as $\lambda\rightarrow\infty.$
		\item[2.] For each $\lambda\in\Sigma$ ,
		$\M^{(3)}$ takes continuous boundary values $\M^{(3)}_{\pm}(\lambda)$ which satisfy 
		the jump relation: $\M_+^{(3)}(\lambda)=\M_-^{(3)}(\lambda)\V^{(3)}(\lambda)$ where
		\begin{equation*}
			\V^{(3)}(\lambda)=
			\begin{cases}
				\begin{pmatrix}
					1 & 0\\
					-(\det\pmb{\delta})^{-1}\ee^{2\ii t\theta}\pmb{\delta}^{-1}[\mathbf{R}^\dagger](\lambda) & \mathbb{I}_2
				\end{pmatrix},  
				&\lambda \in\Sigma_1,\\
				\begin{pmatrix}
					1 & -\det\pmb{\delta} \ee^{-2\ii t\theta}
					[\frac{\mathbf{R}(\lambda)}{1+\mathbf{R}\mathbf{R}^\dagger}]\pmb{\delta}\\
					0 & \mathbb{I}_2
				\end{pmatrix},
				&\lambda \in\Sigma_2,\\
				\begin{pmatrix}
					1 & 0\\
					-(\det\pmb{\delta})^{-1}\ee^{2\ii t\theta}
					[\frac{\mathbf{R}^\dagger(\lambda)}{1+\mathbf{R}\mathbf{R}^\dagger}] & \mathbb{I}_2
				\end{pmatrix},
				&\lambda \in \Sigma_3,\\
				\begin{pmatrix}
					1 & -\det\pmb{\delta} \ee^{-2\ii t\theta}[\mathbf{R}](\lambda)\pmb{\delta}\\
					0 & \mathbb{I}_2
				\end{pmatrix},
				&\lambda \in\Sigma_4.
			\end{cases}
		\end{equation*}
	\end{itemize}
\end{rhp}
Because $\mathbf{\Xi}$ is analytic and uniformly bounded in 
$\mathbb{C}\setminus(\mathbb{R}\cup\Sigma)$ for all $x\in\mathbb{R}$ and $t>0$, we can obtain the 
following lemma.

\begin{lemma}(cf. \cite[pp.1044-1045]{deift2002long})
	The operator $(1-C_{\mathbf{w}_e})^{-1}$ is bounded in $L^2(\mathbb{R}\cup\Sigma)$ if and 
	only if $(1-C_{\mathbf{w}_3})^{-1}$ is bounded in $L^2(\Sigma)$.
\end{lemma}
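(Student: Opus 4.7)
The plan is to use the piecewise analytic matrix $\mathbf{\Xi}$ defined in equation \eqref{analysis-expand-Phi} as a similarity transformation linking the Beals--Coifman integral equation on $\mathbb{R}\cup\Sigma$ (for the factorization $\V^e=(\mathbb{I}_3-\mathbf{w}_e^-)^{-1}(\mathbb{I}_3+\mathbf{w}_e^+)$) to the one on $\Sigma$ (for $\V^{(3)}=(\mathbb{I}_3-\mathbf{w}_3^-)^{-1}(\mathbb{I}_3+\mathbf{w}_3^+)$). The crucial input is the assertion stated just before the lemma: $\mathbf{\Xi}$ and $\mathbf{\Xi}^{-1}$ are analytic in $\mathbb{C}\setminus(\mathbb{R}\cup\Sigma)$, tend to $\mathbb{I}_3$ at infinity, and are uniformly bounded in $\lambda$, $x$, $t$. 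Pointwise multiplication by the boundary values $\mathbf{\Xi}_\pm^{\pm 1}$ therefore defines bounded operators on the relevant $L^2$-spaces, and these will provide the bridge between the two resolvents.

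First I would verify the compatibility of factorizations. By the sectorial construction of $\mathbf{\Xi}$ from the two triangular factorizations of $\V^{(2)}$ used in Subsection 3.2, a direct multiplication yields $\mathbf{\Xi}_-^{-1}\V^e\mathbf{\Xi}_+=\mathbb{I}_3$ on $\mathbb{R}$ (the opposite-triangular factors of $\V^{(2)}$ cancel exactly against the matching blocks of $\mathbf{\Xi}_\pm$ on both $(\xi,\infty)$ and $(-\infty,\xi)$), and $\mathbf{\Xi}_-^{-1}\V^e\mathbf{\Xi}_+=\V^{(3)}$ on $\Sigma$ (since $\V^e=\mathbb{I}_3$ on $\Sigma$ and the jump of $\mathbf{\Xi}$ across each $\Sigma_k$ agrees with $\V^{(3)}|_{\Sigma_k}$ by inspection of the sector definitions). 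Equivalently, $\mathbf{w}_e^\pm$ and $\mathbf{w}_3^\pm$ are carried into one another by bounded pointwise multiplication with the appropriate boundary values of $\mathbf{\Xi}^{\pm 1}$.

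Second I would exhibit the operator equivalence. Using the factorization identities of the previous step, a short calculation shows that $C_{\mathbf{w}_e}$ on $L^2(\mathbb{R}\cup\Sigma)$ is intertwined with $C_{\mathbf{w}_3}$ on $L^2(\Sigma)$ via the bounded multiplication maps induced by $\mathbf{\Xi}_\pm$ together with the trivial extension-by-zero from $\Sigma$ to $\mathbb{R}\cup\Sigma$. Concretely, given $\pmb{\mu}_3\in L^2(\Sigma)$ solving $(1-C_{\mathbf{w}_3})\pmb{\mu}_3=\mathbb{I}_3$, one reconstructs $\M^{(3)}$ via the Cauchy integral of $\pmb{\mu}_3(\mathbf{w}_3^-+\mathbf{w}_3^+)$, sets $\M^{(2)}:=\M^{(3)}\mathbf{\Xi}^{-1}$, and reads off from its boundary values on $\mathbb{R}\cup\Sigma$ a function $\pmb{\mu}_e\in L^2(\mathbb{R}\cup\Sigma)$ solving $(1-C_{\mathbf{w}_e})\pmb{\mu}_e=\mathbb{I}_3$; the assignment $\pmb{\mu}_3\mapsto\pmb{\mu}_e$ is bounded because $\mathbf{\Xi}^{\pm1}$ are. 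The converse direction, multiplying by $\mathbf{\Xi}$ instead of $\mathbf{\Xi}^{-1}$ and restricting from $\mathbb{R}\cup\Sigma$ to $\Sigma$, is entirely symmetric, so bounded invertibility of one resolvent transfers to the other with comparable operator norms.

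The main technical obstacle is the control of the Cauchy operator on the self-intersecting Lipschitz contour $\mathbb{R}\cup\Sigma$, especially the behaviour of the boundary values of $\mathbf{\Xi}$ near the node $\lambda=\xi$, where both triangular factorizations of $\V^{(2)}$ meet four rays of $\Sigma$. The required control comes from the $L^2$-boundedness of the Cauchy projector on finite unions of Lipschitz graphs (Coifman--McIntosh--Meyer) together with the matching identity $[\mathbf{R}](\xi)=\mathbf{R}(\xi)$, which forces the two factorizations to agree at $\xi$ and makes $\mathbf{\Xi}^{\pm 1}$ continuous and bounded in a neighbourhood of the node. This is precisely the content of the argument in \cite[pp.~1044--1045]{deift2002long}, which carries over verbatim to the $3\times 3$ matrix setting of the CNLS equation once the factorization identities above are supplied.
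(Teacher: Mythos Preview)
Your proposal is correct and follows essentially the same approach as the paper: the paper does not give a detailed argument here but simply records that the analyticity and uniform boundedness of $\mathbf{\Xi}$ in $\mathbb{C}\setminus(\mathbb{R}\cup\Sigma)$ allow one to invoke \cite[pp.~1044--1045]{deift2002long}, and your outline is precisely an unpacking of that citation via the intertwining $\M^{(3)}=\M^{(2)}\mathbf{\Xi}$. Your verification that $\mathbf{\Xi}_-^{-1}\V^e\mathbf{\Xi}_+$ equals $\mathbb{I}_3$ on $\mathbb{R}$ and $\V^{(3)}$ on $\Sigma$ is exactly the compatibility check underlying the Deift--Zhou argument, and the remaining steps you sketch (passing between $\pmb{\mu}_e$ and $\pmb{\mu}_3$ via bounded multiplication by $\mathbf{\Xi}_\pm^{\pm1}$) carry over verbatim to the $3\times3$ setting.
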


The key point to note is that RHP \ref{rhp-M2} has been transformed into RHP \ref{rhp-M3} 
with jump matrix $\V^{(3)}$, where all the exponential factors $\ee^{\pm \ii\theta}$ are now 
exponentially decreasing. Then we need to provide some estimates on $\det\pmb{\delta}(\lambda)$. 
By equation \eqref{RHP-det-delta}, the expression of $\det\pmb{\delta}$ can be rewritten as 
\begin{equation}
	\det\pmb{\delta}(\lambda)=
	\exp\left(C_{(-\infty,\xi)}\ln\left(1+\left|\mathbf{R}\right|^2\right)(\lambda)\right),\quad
	\lambda\in\mathbb{C}\setminus(-\infty,\xi). 
\end{equation}
Let $\chi(s)$ denote the characteristics function of the interval 
$(\xi-1,\xi)$. For 
$\lambda\in\mathbb{C}\setminus(-\infty,\xi)$,
\begin{equation*}
	\begin{aligned}
		&\quad C_{(-\infty,\xi)}\ln\left(1+\left|\mathbf{R}\right|^2\right)(\lambda) \\
		&=\int_{-\infty}^{\xi}\left(\ln\left(1+|\mathbf{R}(s)|^2\right)-
		\ln\left(1+|\mathbf{R}(\xi)|^2\right)
		\chi(s)(s-\xi+1)\right)\frac{\dd s}{2\pi \ii(s-\lambda)} \\
		&\quad+\ln\left(1+\left|\mathbf{R}(\xi)\right|^2\right)
		\int_{\xi-1}^{\xi}\frac{s-\xi+1}{s-\lambda}\frac{\dd s}{2\pi \ii} \\
		&=\beta(\lambda,\xi)+\ii\kappa(\xi)(1-(\lambda-\xi+1)\ln(\lambda-\xi+1) \\
		&\quad+(\lambda-\xi)\ln(\lambda-\xi)+\ln(\lambda-\xi)),
	\end{aligned}
\end{equation*}
where $\kappa(\xi)=-\frac{1}{2\pi}\ln\left(1+|\mathbf{R}(\xi)|^2\right)$ as defined in equation \eqref{def-kappa}, and 
\begin{equation*}
	\quad\beta(\lambda,\xi)
	=\int_{-\infty}^{\xi}\left(\ln(1+|\mathbf{R}(s)|^2)-\ln(1+|\mathbf{R}(\xi)|^2)
	\chi(s)(s-\xi+1)\right)\frac{ds}{2\pi \ii(s-\lambda)}.
\end{equation*}
We decompose $\det\pmb{\delta}$ as
\begin{equation*}
	\det\pmb{\delta}=\delta_0\delta_1,
\end{equation*}
where
\begin{equation*}
	\delta_0=\ee^{\beta(\xi,\xi)+\ii\kappa(\xi)}(\lambda-\xi)^{\ii\kappa(\xi)},
	\quad\delta_1=\ee^{\zeta(\lambda,\xi)},
\end{equation*}
and 
\begin{equation*}
	\zeta(\lambda,\xi)=(\beta(\lambda,\xi)-\beta(\xi,\xi))
	+\ii\kappa(\xi)\Big((\lambda-\xi)\ln(\lambda-\xi)-(\lambda-\xi+1)\ln(\lambda-\xi+1)\Big).
\end{equation*}

It follows from the fact 
\begin{equation}\label{boundedness-lamda-xi}
    \left|(\lambda-\xi)^{\ii\kappa(\xi)}\right|\leq 
	\ee^{-\pi\kappa(\xi)}=\sqrt{1+|\mathbf{R}(\xi)|^2}
   \end{equation}
that we can derive the boundedness of $\delta_0(\lambda)$. 
Together with the boundedness \eqref{bdd-delta} of 
$\det\pmb{\delta}(\lambda)$, we can also obtain the 
boundedness of $\delta_1(\lambda)$.
According to \cite[pp.1046-1047]{deift2002long}, we can obtain the following 
estimate for $\det\pmb{\delta}(\lambda)$
\begin{equation}\label{estimate-det-delta}
	\left|\det\pmb{\delta}(\lambda)-\delta_0(\lambda)\right|
	\lesssim \left|\ee^{\zeta(\lambda,\xi)}-1\right|
	=\left|\int_{0}^{1}\frac{\mathrm{d}}{\mathrm{d}s}
	\ee^{s\zeta(\lambda,\xi)}\mathrm{d}s\right|\lesssim 
	\left|\zeta(\lambda,\xi)\right|\max_{0\le s\le1}
	\left|\ee^{s\zeta(\lambda,\xi)}\right|\lesssim 
	\left|\lambda-\xi\right|^{1/2},
\end{equation}
as $\lambda\to\xi$ along the contours $\Sigma_2\cup\Sigma_4$. 
Here we use the $H^1$ property of $\beta(\lambda,\xi)$ and 
the properties of the logarithm to obtain 
\begin{equation}\label{estimate-zeta}
	\left|\zeta(\lambda,\xi)\right|\lesssim\left|\lambda-\xi\right|^{1/2}.
\end{equation}
By equation \eqref{estimate-det-delta}, we can derive the behavior of 
$\mathbf{R}(\xi)\pmb{\delta}(\lambda)$ 
as $\lambda\to\xi$ along the contours $\Sigma_2\cup\Sigma_4$, 
which is summarized in the following lemma.
\begin{lemma}\label{estimate-R-delta}
  As $\lambda\to\xi$ along the contour $\Sigma$, we have the following 
  estimates,
  \begin{equation}
    \begin{aligned}
      &\left|\mathbf{R}(\xi)\pmb{\delta}(\lambda)
    -\delta_0(\lambda)\pmb{A}(\xi)\right|
    \lesssim\left|\lambda-\xi\right|^{1/2},
    &\lambda\in\Sigma_2\cup\Sigma_4,\\
     &\left|\pmb{\delta}^{-1}(\lambda)\mathbf{R}^\dagger(\xi)
    -\delta_0^{-1}(\lambda)\pmb{A}^\dagger(\xi)\right|
    \lesssim\left|\lambda-\xi\right|^{1/2},
    &\lambda\in\Sigma_1\cup\Sigma_3,
    \end{aligned}
  \end{equation}
  where $\pmb{A}(\xi)$ is a constant defined 
  by equation \eqref{def-A-xi}.
\end{lemma}
\begin{proof}
  For the case on the contours $\Sigma_2\cup\Sigma_4$, 
  we first provide the estimate for 
  $\tilde{\pmb{\delta}}(\lambda)=\mathbf{R}(\xi)(\pmb{\delta}(\lambda)-\det\pmb{\delta}(\lambda))$. 
  It follows from equations \eqref{RHP-det-delta} and \eqref{RHP-delta} 
  that $\tilde{\pmb{\delta}}(\lambda)$ satisfies the following RHP
  \begin{equation}
    \begin{cases}\tilde{\pmb{\delta}}_+(\lambda)=\tilde{\pmb{\delta}}_-(\lambda)
      (1+|\mathbf{R}(\lambda)|^2)+\pmb{f}(\lambda)\pmb{\delta}_-(\lambda),
      &\quad \lambda\in(-\infty,\xi),\\
      \tilde{\pmb{\delta}}(\lambda)\to0,
      &\quad \lambda\to\infty,
    \end{cases}
  \end{equation}
  where 
  $\pmb{f}(\lambda)=\mathbf{R}(\xi)(\mathbf{R}^\dagger\mathbf{R}-\mathbf{R}\mathbf{R}^\dagger\mathbb{I}_2)(\lambda)$. 
  The solution $\tilde{\pmb{\delta}}(\lambda)$ for the above non-homogeneous RHP can be formulated as
  \begin{equation}\label{expression-tilde-delta}
    \begin{aligned}
      &\tilde{\pmb{\delta}}(\lambda)=\frac{X(\lambda)}{2\pi \ii}\int_{-\infty}^{\xi}
      \frac{\pmb{f}(s)\pmb{\delta}_-(s)}{X_+(s)(s-\lambda)}\mathrm{d}s,\\
      &X(\lambda)=\exp\left\{\frac1{2\pi \ii}\int_{-\infty}^{\xi}
      \frac{\ln\left(1+|\mathbf{R}(s)|^2\right)}{s-\lambda}\mathrm{d}s\right\}.
    \end{aligned}
  \end{equation}
 Moreover, $\tilde{\pmb{\delta}}(\lambda)$ can be rewritten by Cauchy 
 projection operator 
 \begin{equation}
  \tilde{\pmb{\delta}}(\lambda)=X(\lambda)
  C_{\Gamma}\pmb{g}(\lambda),\ \ 
  \pmb{g}(s)=X^{-1}_+(s)\pmb{f}(s)\pmb{\delta}_-(s).
 \end{equation}
 Here the contour $\Gamma$ denotes $(-\infty,\xi)$ and the 
 Cauchy projection operator $C_{\Gamma}\pmb{f}(\lambda)$ 
 is defined by 
\begin{equation}
  \begin{aligned}
    &C_{\Gamma}\pmb{f}(\lambda)=
  \frac{1}{2\pi \ii}\int_{\Gamma}\frac{\pmb{f}(s)}
  {s-\lambda}\dd s,
  \quad \lambda\in\mathbb{C}\setminus\Gamma,\\
  &C^\pm_{\Gamma}\pmb{f}(\lambda)=\lim\limits_{z\to\lambda_\pm}
    \frac{1}{2\pi \ii}\int_{\Gamma}\frac{\pmb{f}(s)}
    {s-z}\dd s,\quad \lambda\in\Gamma.
  \end{aligned}
\end{equation}
Then we show that $\pmb{g}(s)\in H^{1}(\Gamma)$. 
Since $\mathbf{R}(\lambda)$ lies in the Sobolev space $H^{2,1}(\mathbb{R})$ by Assumption \ref{assumption-q0}, 
$\pmb{f}(\lambda)$ also lies in $H^{2,1}(\mathbb{R})$. Together with 
equation \eqref{bdd-delta}, we can derive 
that $\pmb{g}(s)\in L^{2}(\Gamma)$. Then we consider the 
derivative $\pmb{g}^\prime(s)$ 
\begin{equation}
  \begin{aligned}
    \pmb{g}^\prime(s)&=-X^{-1}_+X^{\prime}_+X^{-1}_+(s)\pmb{f}(s)\pmb{\delta}_-(s)
    +X^{-1}_+(s)\pmb{f}^\prime(s)\pmb{\delta}_-(s)+
    X^{-1}_+(s)\pmb{f}(s)\pmb{\delta}_-^\prime(s)\\
    &=\mathrm{I}^{(1)}+\mathrm{I}^{(2)}+\mathrm{I}^{(3)}.
  \end{aligned}
\end{equation}
For $\mathrm{I}^{(1)}$, it follows from equation 
\eqref{expression-tilde-delta} that $X^{\prime}(s)$ satisfies 
the following RHP 
\begin{equation}
  	\begin{cases}
  	X^{\prime}_+(s)=
  	(1+\mathbf{R}\mathbf{R}^\dagger)X^{\prime}_-(s)
  	+n(s),\quad&s<\xi,\\
  	X^{\prime}(s)=\oo(s^{-2}), &s\rightarrow\infty,
  	\end{cases}
  \end{equation}
  where $n(s)=\left(\mathbf{R}^\prime\mathbf{R}^\dagger(s)+\mathbf{R}\mathbf{R}^{\dagger\prime}(s)\right)X_-(s)$. 
 Then the soluiton $X^{\prime}(s)$ can be determined in the 
 following form  
  \begin{equation}
  	X^{\prime}(s)=J(s)\left(\frac{1}{2\pi\ii}\int_{-\infty}^{\xi}
  	\frac{J_+^{-1}(x)n(x)}{x-s}\mathrm{d}x+
    \frac{A}{s-\xi}\right),
    \end{equation}
    where parameter $A$ will be determined below and 
    $J(s)$ satisfies the following RHP:
    \begin{equation}
  	\begin{cases}
      J_+(s)=
  	(1+\mathbf{R}\mathbf{R}^\dagger(s))J_-(s),\quad&s<\xi,\\
  	J(s)\rightarrow1, &s\rightarrow\infty.
  	\end{cases}
    \end{equation}
    By the uniqueness of the solution for RHP 
    \eqref{RHP-det-delta}, we have 
	$J(s)\equiv X(s)\equiv\det\pmb{\delta}(s)$. 
    Then it follows from 
    the geometric series expansion for $1/(x-s)$ that 
    \begin{equation}
      \frac{1}{2\pi\ii}\int_{-\infty}^{\xi}
  	\frac{J_+^{-1}(x)n(x)}{x-s}\mathrm{d}x
    =-\frac{1}{2\pi\ii}\int_{-\infty}^{\xi}
  	J_+^{-1}(x)n(x)\mathrm{d}x+\oo(s^{-2}),\,\ s\to\infty.
    \end{equation}
    According to the asymptotic behavior 
    $X^{\prime}(s)=\oo(s^{-2})$, we can derive that 
    \begin{equation}
      A=\frac{1}{2\pi\ii}\int_{-\infty}^{\xi}
  	J_+^{-1}(x)n(x)\mathrm{d}x=
    \frac{1}{2\pi\ii}\int_{-\infty}^{\xi}
  	\frac{\mathbf{R}^\prime\mathbf{R}^\dagger(s)+
    \mathbf{R}\mathbf{R}^{\dagger\prime}(s)}
    {1+\mathbf{R}\mathbf{R}^\dagger(s)}\mathrm{d}x=
    \frac{\ln\left(1+\mathbf{R}
    \mathbf{R}^\dagger(\xi)\right)}{2\pi \ii}.
    \end{equation}
    Then $\left\|\mathrm{I}^{(1)}\right\|_{L^2(\mathbb{R})}$ can 
    be decomposed into two parts 
    \begin{equation}
      \left\|\mathrm{I}^{(1)}\right\|_{L^2(\mathbb{R})}
      \lesssim \left\|C^+_{\Gamma} J_+^{-1}n(s)\right\|_{L^2(\mathbb{R})}
      +\left\|\frac{\pmb{f}(s)}{s-\xi}\right\|_{L^2(\mathbb{R})}.
    \end{equation}
    For the first part, we have 
    \begin{equation}
      \left\|C^+_{\Gamma} J_+^{-1}n(s)\right\|_{L^2(\mathbb{R})}
      \lesssim \left\|J_+^{-1}n(s)\right\|_{L^2(\Gamma)}
      \lesssim \left\|\mathbf{R}^\prime
	  \right\|_{L^2(\Gamma)}
    \end{equation}
    by the property of the Cauchy operator $C^\pm$ \cite{deift2002long}
    \begin{equation}\label{property-Cauchy}
      \left\|C^\pm_{\Gamma} h\right\|_{L^2(\mathbb{R})}
      \lesssim \left\| h\right\|_{L^2(\Gamma)}, 
      \quad \forall \ \ h\in L^2(\Gamma).
    \end{equation}
  Since $\pmb{f}(s)\in H^{2,1}(\mathbb{R})$ and 
  $\pmb{f}(\xi)=0$, we 
  can derive that 
  \begin{equation}
    \left|\pmb{f}(s)\right|=\left|\pmb{f}(s)-
    \pmb{f}(\xi)\right|\lesssim \left|s-\xi\right|.
  \end{equation}
  Hence we can estimate the second part by 
  \begin{equation}
    \left\|\frac{\pmb{f}(s)}{s-\xi}\right\|_{L^2(\mathbb{R})}^2
    \lesssim \int_{|s-\xi|<1}\frac{\left|\pmb{f}(s)\right|^2}
    {\left|s-\xi\right|^2}\mathrm{d}s+\int_{|s-\xi|>1}\frac{
      \left|\pmb{f}(s)\right|^2}
    {\left|s-\xi\right|^2}\mathrm{d}s
    \lesssim\int_{|s-\xi|<1}\frac{\left|s-\xi\right|^2}
    {\left|s-\xi\right|^2}\mathrm{d}s+
    \left\|\pmb{f}\right\|_{L^2(\mathbb{R})}^2\lesssim 1.
  \end{equation}
  Then we can conclude that $\mathrm{I}^{(1)}\in L^2(\mathbb{R})$. 
  By the fact that $\pmb{f}(s)\in H^{2,1}(\mathbb{R})$ and the 
  boundedness of $X^{-1}_+(s)\pmb{\delta}_-(s)$, we can also  
  obtain $\mathrm{I}^{(2)}\in L^2(\mathbb{R})$. 
  For $\mathrm{I}^{(3)}$, it follows from 
  equation \eqref{RHP-delta} that $\pmb{\delta}^{\prime}(s)$ 
  satisfies the following RHP:
    \begin{equation}
  	\begin{cases}
  	\pmb{\delta}^{\prime}_+(s)=
  	(\mathbb{I}_2+\mathbf{R}^\dagger\mathbf{R})\pmb{\delta}^{\prime}_-(s)
  	+\pmb{m}(s),\quad&s<\xi,\\
  	\pmb{\delta}^{\prime}(s)=\oo(s^{-2}), &s\rightarrow\infty.
  	\end{cases}
    \end{equation}
    where $\pmb{m}(s)=(\mathbf{R}^\dagger\mathbf{R}^{\prime}+(\mathbf{R}^\dagger)^{\prime}\mathbf{R})\pmb{\delta}_-(s)$.
    Analogous to the analysis for $X^{\prime}(s)$, 
    the solution $\pmb{\delta}^{\prime}(s)$ for 
    the above RHP can be expressed by
    \begin{equation}
  	\pmb{\delta}^{\prime}(s)=\frac{\pmb{Y}(s)}{2\pi\ii}\left(
    \int_{-\infty}^{\xi}
  	\frac{\pmb{Y}_+^{-1}(x)\pmb{m}(x)}{x-s}\mathrm{d}x+
    \frac{1}{s-\xi}\int_{-\infty}^{\xi}
  	\pmb{Y}_+^{-1}(x)\pmb{m}(x)\mathrm{d}x\right),
    \end{equation}
    where $\pmb{Y}(s)$ satisfies the following RHP:
    \begin{equation}
  	\begin{cases}
      \pmb{Y}_+(s)=
  	(\mathbb{I}_2+\mathbf{R}^\dagger\mathbf{R}(s))\pmb{Y}_-(s),\quad&s<\xi,\\
  	\pmb{Y}(s)\rightarrow\mathbb{I}_2, &s\rightarrow\infty.
  	\end{cases}
    \end{equation}
    By the uniqueness of the solution for RHP 
    \eqref{RHP-delta}, we have 
	$\pmb{\delta}(s)\equiv \pmb{Y}(s)$.
    Together with the property \eqref{property-Cauchy} of the Cauchy operator, 
    we can arrive at $\mathrm{I}^{(3)}\in L^2(\mathbb{R})$ by
    \begin{equation}
  	\begin{aligned}
  	  \left\|\mathrm{I}^{(3)}\right\|_{L^2(\mathbb{R})}
  	  &\lesssim \|C^-_{\Gamma}(\pmb{Y}_+^{-1}\pmb{m})(s)\|_{L^2(\mathbb{R})}
      +\left\|\frac{\pmb{f}(s)}{s-\xi}\right\|_{L^2(\mathbb{R})}\\
  	  &\lesssim \|\mathbf{R}^\dagger\mathbf{R}^{\prime}+
  	  (\mathbf{R}^\dagger)^{\prime}\mathbf{R}\|_{L^2(\Gamma)}+
  	  \left\|\pmb{f}\right\|_{L^2(\mathbb{R})}+
      \left(\int_{|s-\xi|<1}\frac{\left|s-\xi\right|^2}
    {\left|s-\xi\right|^2}\mathrm{d}s\right)^{1/2}.
  	\end{aligned}
    \end{equation}
    Consequently, we can derive that 
    $\pmb{g}^\prime(s)\in L^{2}(\mathbb{R})$ 
    and $\pmb{g}(s)\in H^{1}(\mathbb{R})$. 
    By the fact that $\pmb{g}(\xi)=0$ and Lemma 23.3 
	in \cite{beals1988direct}, we obtain 
    \begin{equation}\label{estimate-C-Gamma-g}
      \left|C_\Gamma\pmb{g}(\lambda)-
      C_\Gamma\pmb{g}(\xi)\right|=
      \left|\int_{\xi}^{\lambda}\frac{\mathrm{d}}
      {\mathrm{d}s}C_\Gamma\pmb{g}(s)\right|\lesssim 
      \left|\lambda-\xi\right|^{1/2}\left\|
        C_\Gamma\pmb{g}^\prime(s)\right\|_{L^2(\xi,\lambda)}
        \lesssim\left|\lambda-\xi\right|^{1/2}
        \left\|\pmb{g}^\prime(s)\right\|_{L^2(\Gamma)}.
    \end{equation}
  Then it follows from equation \eqref{estimate-C-Gamma-g} and 
  the estimate \eqref{estimate-det-delta} for 
  $\det\pmb{\delta}(\lambda)$ that 
  \begin{equation}
    \left|\tilde{\pmb{\delta}}(\lambda)-
    \delta_0(\lambda)C_\Gamma\pmb{g}(\xi)
    \right|\lesssim\left|\lambda-\xi\right|^{1/2}
  \end{equation}
  and hence 
  \begin{equation}
    \left|\mathbf{R}(\xi)\pmb{\delta}(\lambda)
    -\delta_0(\lambda)
    \left(\mathbf{R}(\xi)+C_\Gamma\pmb{g}(\xi)\right)\right|
    \lesssim\left|\lambda-\xi\right|^{1/2}.
  \end{equation}
  This implies that 
  \begin{equation}\label{def-A-xi}
	\pmb{A}(\xi)=\mathbf{R}(\xi)+C_\Gamma\pmb{g}(\xi).
  \end{equation}
  Then we can utilize the symmetry of $\pmb{\delta}(\lambda)$ to 
  provide the estimate for 
  $\pmb{\delta}^{-1}(\lambda)\mathbf{R}^\dagger(\xi)$ on the contours 
  $\Sigma_1\cup\Sigma_3$. It follows from equation 
  \eqref{symmetry-delta} that 
  \begin{equation}
    \pmb{\delta}^{-1}(\lambda)\mathbf{R}^\dagger(\xi)
    =\pmb{\delta}^\dagger(\lambda^*)\mathbf{R}^\dagger(\xi)
    =\left[\mathbf{R}(\xi)\pmb{\delta}(\lambda^*)\right]^\dagger
  \end{equation}
  and then 
  \begin{equation}
    \begin{aligned}
      \left|\pmb{\delta}^{-1}(\lambda)\mathbf{R}^\dagger(\xi)
      -\delta_0^{-1}(\lambda)\pmb{A}^\dagger(\xi)\right|
      &=\left|\left[\mathbf{R}(\xi)\pmb{\delta}(\lambda^*)-
      \delta_0(\lambda^*)\pmb{A}(\xi)\right]^\dagger\right|\\
      &\lesssim\left|\lambda-\xi\right|^{1/2},
      \ \ \lambda\in\Sigma_1\cup\Sigma_3.
    \end{aligned}
  \end{equation}
  Consequently we complete the proof.
\end{proof}

It follows form equations 
\eqref{estimate-det-delta} and Lemma 
\ref{estimate-R-delta} that we construct a solvable model RHP \ref{rhp-M4} 
whose solution $\M^{(4)}$ is a suitable approximation to $\M^{(3)}$.
% Define $\M^{(4)}$ satisfies the following RHP.

\begin{rhp}\label{rhp-M4}
	Find an analytic function $\M^{(4)}$: 
	$\mathbb{C}\setminus\Sigma\rightarrow SL_{2}(\mathbb{C})$ 
	with the following properties
	\begin{itemize}
		\item[1.] $\M^{(4)}(\lambda)=\mathbb{I}_{3}+\mathcal{O}\left(\lambda^{-1}\right)$ as $\lambda\rightarrow\infty.$
		\item[2.] For each $\lambda\in\Sigma$ ,
		$\M^{(4)}$ takes continuous boundary values $\M^{(4)}_{\pm}(\lambda)$ which satisfy 
		the jump relation: $\M_+^{(4)}(\lambda)=\M_-^{(4)}(\lambda)\V^{(4)}(\lambda)$ where
		\begin{equation}\label{jump-parabolic-cylinder-model}
			\V^{(4)}(\lambda)=
			\begin{cases}
				\begin{pmatrix}
					1 & 0\\
					-\pmb{\delta}_0^{-2}\ee^{2\ii t\theta}
					\pmb{A}^\dagger(\xi) & \mathbb{I}_2
				\end{pmatrix},
				&\lambda \in\Sigma_1,\\
				\begin{pmatrix}
					1 & -\pmb{\delta}_0^2 \ee^{-2\ii t\theta}
					\frac{\pmb{A}(\xi)}{1+\mathbf{R}
					\mathbf{R}^\dagger(\xi)}\\
					0 & \mathbb{I}_2
				\end{pmatrix},
				&\lambda \in\Sigma_2,\\
				\begin{pmatrix}
					1 & 0\\
					-\pmb{\delta}_0^{-2}\ee^{2\ii t\theta}
					\frac{\pmb{A}^\dagger(\xi)}{1+\mathbf{R}
					\mathbf{R}^\dagger(\xi)} & \mathbb{I}_2
				\end{pmatrix},
				&\lambda \in \Sigma_3,\\
				\begin{pmatrix}
					1 & -\pmb{\delta}_0^2 \ee^{-2\ii t\theta}
					\pmb{A}(\xi)\\
					0 & \mathbb{I}_2
				\end{pmatrix},  
				&\lambda \in\Sigma_4.
			\end{cases}
		\end{equation}
	\end{itemize}
\end{rhp}
Next we will provide a proper estimate between jump matrices $\V^{(3)}(\lambda)$ and 
$\V^{(4)}(\lambda)$. Compared to the estimate in the 
scalar case \cite{deift2002long}, the main challenge lies in providing 
a proper estimate on $\pmb{\delta}$ without a closed form. 
To address this challenge, 
We decompose the estimate between 
$\V^{(3)}(\lambda)$ and $\V^{(4)}(\lambda)$ into two parts, as shown in equation 
\eqref{decompose-V3-V4}. The first part can be estimated 
following arguments similar to the scalar case \cite{deift2002long} and 
the second part is handled by applying Lemma 
\ref{estimate-R-delta}. 

\begin{lemma}\label{bdd-V3-V4}
	For $1\leq p\leq\infty$,
	\begin{equation*}
		\left\|\V^{(3)}-\V^{(4)}\right\|_{L^p(\Sigma)}\leq
		\frac c{t^{1/4+1/(2p)}}\quad\textit{uniformly for }
		t\geq1\mathrm{~and~all~}x\in\mathbb{R}\mathrm{~.}
	\end{equation*}
\end{lemma}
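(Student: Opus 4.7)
The plan is to decompose, on each ray $\Sigma_k$, the block entries of $\V^{(3)}-\V^{(4)}$ into two pieces and to estimate each separately in $L^p(\Sigma_k)$, exploiting the Gaussian decay $|\ee^{\pm 2\ii t\theta(\lambda)}|=\ee^{-2t|\lambda-\xi|^2}$ obtained after contour deformation. Parametrizing $\lambda=\xi+u\,\ee^{\ii(2k-1)\pi/4}$, the change of variables $v=\sqrt{t}\,u$ yields the template
\begin{equation*}
  \bigl\|u^{\alpha}\ee^{-c t u^{2}}\bigr\|_{L^{p}(\mathbb{R}_{+})}\lesssim t^{-\alpha/2-1/(2p)},
\end{equation*}
so the target rate $t^{-1/4-1/(2p)}$ corresponds exactly to Hölder exponent $\alpha=1/2$, which is available from the Sobolev embedding $H^{1,1}(\mathbb{R})\hookrightarrow C^{0,1/2}(\mathbb{R})$ applied to $\mathbf{R}$.

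I would treat a representative block — the $(1,2)$-entry on $\Sigma_{4}$, which equals $\ee^{-2\ii t\theta}\bigl[\det\pmb{\delta}(\lambda)\,\pmb{\delta}(\lambda)\,[\mathbf{R}](\lambda)-\delta_{0}(\lambda)^{2}\mathbf{R}(\xi)\bigr]$ — and insert-and-subtract to split it as
\begin{equation*}
  \underbrace{\det\pmb{\delta}(\lambda)\,\pmb{\delta}(\lambda)\bigl([\mathbf{R}](\lambda)-\mathbf{R}(\xi)\bigr)}_{(A)}+\underbrace{\bigl(\det\pmb{\delta}(\lambda)\,\pmb{\delta}(\lambda)-\delta_{0}(\lambda)^{2}\mathbb{I}_{2}\bigr)\mathbf{R}(\xi)}_{(B)}.
\end{equation*}
All other blocks (on $\Sigma_{1},\Sigma_{2},\Sigma_{3}$ and the $(2,1)$ entries) are handled by the same scheme, using the symmetry \eqref{symmetry-delta} when needed. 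For $(A)$ the rational form of $[\mathbf{R}]$ gives $|[\mathbf{R}](\lambda)-\mathbf{R}(\xi)|\lesssim|\lambda-\xi|$, which combined with the uniform boundedness \eqref{bdd-delta} of $\det\pmb{\delta}$ and $\pmb{\delta}$ yields $|A|\lesssim u\,\ee^{-2tu^{2}}$, whence $\|A\|_{L^{p}(\Sigma_k)}\lesssim t^{-1/2-1/(2p)}$, stronger than needed.

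The real work lies in $(B)$. I would further decompose
\begin{equation*}
  \det\pmb{\delta}\,\pmb{\delta}-\delta_{0}^{2}\mathbb{I}_{2}=\det\pmb{\delta}\,\tilde{\pmb{\delta}}+\delta_{0}^{2}(\delta_{1}^{2}-1)\mathbb{I}_{2},\qquad \tilde{\pmb{\delta}}:=\pmb{\delta}-\det\pmb{\delta}\cdot\mathbb{I}_{2}.
\end{equation*}
For the scalar term, since $\delta_{1}=\ee^{\zeta(\lambda,\xi)}$ is given by an explicit Plemelj integral against $\ln(1+|\mathbf{R}|^{2})\in C^{0,1/2}$, standard Cauchy-integral bounds yield $|\zeta(\lambda,\xi)|\lesssim|\lambda-\xi|^{1/2}$ (the mild $u\ln u$ terms inside $\zeta$ are absorbed), and hence $|\delta_{1}^{2}-1|\lesssim|\lambda-\xi|^{1/2}$. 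For the matrix term $\tilde{\pmb{\delta}}$, a direct computation from \eqref{RHP-delta} and \eqref{RHP-det-delta} shows that
\begin{equation*}
  \tilde{\pmb{\delta}}_{+}(\lambda)=(\mathbb{I}_{2}+\mathbf{R}^{\dagger}\mathbf{R})\tilde{\pmb{\delta}}_{-}(\lambda)+\det\pmb{\delta}_{-}(\lambda)\bigl(\mathbf{R}^{\dagger}\mathbf{R}-|\mathbf{R}|^{2}\mathbb{I}_{2}\bigr),\quad \lambda<\xi,\qquad \tilde{\pmb{\delta}}(\lambda)\to 0\ \text{as}\ \lambda\to\infty,
\end{equation*}
a non-homogeneous matrix RHP whose inhomogeneity is driven by the traceless rank-two matrix $\mathbf{R}^{\dagger}\mathbf{R}-|\mathbf{R}|^{2}\mathbb{I}_{2}$. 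Since the associated homogeneous operator $1-C_{\mathbf{w}}$ is invertible in $L^{2}$ by Lemma \ref{bdd-w}, this RHP is solvable by Plemelj/small-norm techniques, and using the $C^{0,1/2}$ regularity of $\mathbf{R}$ together with \eqref{bdd-delta} one extracts the pointwise bound $|\tilde{\pmb{\delta}}(\lambda)|\lesssim|\lambda-\xi|^{1/2}$ on each $\Sigma_{k}$.

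Assembling these ingredients gives $|B|\lesssim|\lambda-\xi|^{1/2}\,\ee^{-2tu^{2}}$ and hence $\|B\|_{L^{p}(\Sigma_k)}\lesssim t^{-1/4-1/(2p)}$; summing over all rays and blocks produces the stated bound uniformly for $t\ge 1$ and $x\in\mathbb{R}$. The expected main obstacle is precisely the pointwise Hölder-$1/2$ estimate for $\tilde{\pmb{\delta}}$: because $\pmb{\delta}$ has no closed-form expression, one must use the explicit scalar $\det\pmb{\delta}$ as a proxy and exploit the traceless structure of the inhomogeneity so that the Beals–Coifman resolvent produces the correct power of $|\lambda-\xi|$; once this matrix-versus-scalar bootstrap is in place, the remaining estimates are routine analogues of the scalar NLS calculation in \cite{deift2002long}.
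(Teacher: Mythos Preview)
Your decomposition into $(A)$ and $(B)$ and the treatment of $(A)$ and of the scalar piece $\delta_0^2(\delta_1^2-1)$ are fine. The gap is in the matrix piece: the pointwise estimate $|\tilde{\pmb{\delta}}(\lambda)|\lesssim|\lambda-\xi|^{1/2}$ for $\tilde{\pmb{\delta}}=\pmb{\delta}-\det\pmb{\delta}\cdot\mathbb{I}_2$ is not true in general, and your non-homogeneous RHP does not deliver it. The inhomogeneity $\det\pmb{\delta}_-(s)\bigl(\mathbf{R}^\dagger\mathbf{R}-|\mathbf{R}|^2\mathbb{I}_2\bigr)(s)$ does \emph{not} vanish at $s=\xi$ (it is the nonzero traceless matrix $\mathbf{R}^\dagger(\xi)\mathbf{R}(\xi)-|\mathbf{R}(\xi)|^2\mathbb{I}_2$ up to a bounded factor), so the Cauchy transform has no reason to vanish as $\lambda\to\xi$ along $\Sigma_k$; the $C^{0,1/2}$ regularity of $\mathbf{R}$ only controls \emph{increments}, not the value. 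Indeed, near $\xi$ one has heuristically $\pmb{\delta}\approx U\,\mathrm{diag}(\det\pmb{\delta},1)\,U^{-1}$, so $\pmb{\delta}-\det\pmb{\delta}\,\mathbb{I}_2\approx U\,\mathrm{diag}(0,1-\det\pmb{\delta})\,U^{-1}$, which is $O(1)$ because $\det\pmb{\delta}$ oscillates like $(\lambda-\xi)^{\ii\kappa_\xi}$. With only $|\tilde{\pmb{\delta}}|\lesssim 1$ you would get $\|B\|_{L^p}\lesssim t^{-1/(2p)}$, missing the crucial $t^{-1/4}$.

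The paper's fix is to work instead with the row-vector quantity $\tilde{\pmb{\delta}}=[\mathbf{R}]\bigl(\pmb{\delta}-\det\pmb{\delta}\,\mathbb{I}_2\bigr)\ee^{-\ii t\theta}$. Two things happen: (i) left-multiplication by $[\mathbf{R}]$ converts the homogeneous jump to the \emph{scalar} $(1+|\mathbf{R}|^2)$ and produces the inhomogeneity $\mathbf{f}(s)=[\mathbf{R}]\bigl(\mathbf{R}^\dagger\mathbf{R}-|\mathbf{R}|^2\mathbb{I}_2\bigr)(s)$, which now \emph{does} vanish at $s=\xi$ thanks to the algebraic identity $\mathbf{R}(\xi)\mathbf{R}^\dagger(\xi)\mathbf{R}(\xi)=|\mathbf{R}(\xi)|^2\mathbf{R}(\xi)$; (ii) the factor $\ee^{-\ii t\theta(s)}$ is carried inside the Cauchy integral, so away from $\xi$ one can integrate by parts in $s$ against $\dd\ee^{-\ii t\theta}/(-2\ii t(s-\xi))$. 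Splitting the $s$-integral at $\xi-t^{-1/2}$, using $|\mathbf{f}(s)|\lesssim|s-\xi|^{1/2}$ near $\xi$ and the oscillation away from $\xi$, one obtains the \emph{uniform} bound $|\tilde{\pmb{\delta}}(\lambda)|\lesssim t^{-1/4}$ on $\Sigma_k$ (not a $|\lambda-\xi|^{1/2}$ bound), and then $|\V^{(3)}-\V^{(4)}|\lesssim t^{-1/4}\ee^{-t|\lambda-\xi|^2}$ yields the stated $L^p$ rate. Your scheme can be repaired by adopting exactly this row-vector-plus-oscillation formulation.
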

\begin{proof}
	We give the details of the estimate on the contour $\Sigma_4$ only, the situation 
	on the other rays in $\Sigma$ is similar.
	For $\lambda=\xi+ue^{-\ii\pi/4},u\ge0$,
	
	\begin{equation}\label{decompose-V3-V4}
		\begin{aligned}
			&\quad\left|\V^{(3)}(\lambda)-
			\V^{(4)}(\lambda)\right|\\
			&\le\left|\det\pmb{\delta}(\lambda)[\mathbf{R}](\lambda)\pmb{\delta}(\lambda)  
			\ee^{-2\ii t\theta}-\pmb{\delta}_0(\lambda)\mathbf{R}(\xi)\pmb{\delta}(\lambda)
			\ee^{-2\ii t\theta}\right|\\
			&+\left|\pmb{\delta}_0(\lambda)\mathbf{R}(\xi)\pmb{\delta}(\lambda)
			\ee^{-2\ii t\theta}
			-\mathbf{R}(\xi)\pmb{\delta}_0^2(\lambda)\pmb{A}(\xi)\ee^{-2\ii t\theta}\right|
			\equiv \mathrm{I}+\mathrm{II}.
		\end{aligned}
	\end{equation}
    For $\mathrm{I}$, we have 
	\begin{equation}
		\left|\mathrm{I}\right|\lesssim 
		\left|\frac{\pmb{\delta}_1(\lambda)}
		{(1+\ii(\lambda-\xi))^2}-1\right|\ee^{-2
		t\left|\lambda-\xi\right|^2}
		\lesssim \ee^{-2
		t\left|\lambda-\xi\right|^2}
		\left(\left|\pmb{\delta}_1(\lambda)-1\right|
		+\left|(\lambda-\xi)^2-2\ii(\lambda-\xi)\right|
		\right).
	\end{equation}
    It follows from equation 
	\eqref{estimate-zeta} and the fact 
	$\sup_{u>0}\,\ (2u+u^2)\ee^{-u^2/2}<c$ that 
	\begin{equation}
		\left|\mathrm{I}\right|\lesssim 
		\left|\lambda-\xi\right|^{1/2}
		\ee^{-2t\left|\lambda-\xi\right|^2}+t^{-1/2}
		\ee^{-t\left|\lambda-\xi\right|^2}
		\lesssim t^{-1/4}
		\ee^{-t\left|\lambda-\xi\right|^2}.
	\end{equation}
    By Lemma \ref{estimate-R-delta}, we can derive 
	\begin{equation}
		\left|\mathrm{II}\right|\lesssim
		\left|\lambda-\xi\right|^{1/2}
		\ee^{-2t\left|\lambda-\xi\right|^2}
		\lesssim t^{-1/4}
		\ee^{-t\left|\lambda-\xi\right|^2}.
	\end{equation}
	Hence we have 
	\begin{equation}
	\|\V^{(3)}(\lambda)-\V^{(4)}(\lambda)\|_{L^p(\Sigma_4)}
	\lesssim  t^{-1/4}\left(\int_{\Sigma_4}
	\ee^{-2tp\left|\lambda-\xi\right|^2}\mathrm{d}\lambda
	\right)^{1/p}
	\lesssim t^{-1/4-1/2p},\quad 1\le p\le\infty.
	\end{equation}
	Then the result follows immediately and 
	we complete the proof.
\end{proof}

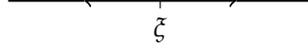
\begin{figure}[h]
	\centering
	\begin{tikzpicture}
		\vspace{0.5cm} % ?��??? ??????��??����?
		% ?��???????ae ?��oe?
		\draw[thick] (-2, 0) -- (-1, 0) ;
		\draw[->,thick] (0, 0) -- (-1, 0) ;
		\draw[->,thick] (0, 0) -- (1, 0) ;
		\draw[thick] (1, 0) -- (2, 0) ;
		
		% ae����?? ��???(R)??1???��???
		\foreach \x/\label in {0/{$\xi$}} 
		\draw (\x, 0) -- (\x, -0.1) node[below] {\label}; % ??? x ��oe????ae����?? ??��???
	\end{tikzpicture}
	\caption{\small{The contours $\mathbb{R}_\xi$.}}
	\label{initial-contour}
\end{figure}
By Lemma \ref{bdd-V3-V4}, we can show that the 
operator $(1-C_{\mathbf{w}_3})^{-1}$ exists and is bounded in $L^2(\Sigma)$, which 
guarantees the existence of solution $\M^{(3)}$.

\begin{lemma}
	For sufficiently large $t>t_0$ , 
	the operator $(1-C_{\mathbf{w}_3})^{-1}$ exists and is bounded in $L^2(\Sigma)$ 
	for all $x\in\mathbb{R}$.
\end{lemma}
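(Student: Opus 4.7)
The plan is to use a second resolvent / Neumann series comparison argument, bootstrapping from the solvable model RHP \ref{rhp-M4} to RHP \ref{rhp-M3}. The key input is Lemma \ref{bdd-V3-V4} specialized to $p=\infty$, which gives
\begin{equation*}
\bigl\|\V^{(3)}-\V^{(4)}\bigr\|_{L^\infty(\Sigma)}\lesssim t^{-1/4},
\end{equation*}
so that, after factorizing $\V^{(4)}=(\mathbb{I}_3-\mathbf{w}_4^-)^{-1}(\mathbb{I}_3+\mathbf{w}_4^+)$ in the same triangular way as for $\V^{(3)}$, one has the corresponding pointwise bound $\|\mathbf{w}_3-\mathbf{w}_4\|_{L^\infty(\Sigma)}\lesssim t^{-1/4}$.

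First I would establish boundedness of $(1-C_{\mathbf{w}_4})^{-1}$ on $L^2(\Sigma)$. Since $\V^{(4)}$ is the standard parabolic-cylinder model jump matrix (with $\lambda$-independent jump data on each ray $\Sigma_k$), RHP \ref{rhp-M4} is explicitly solvable in terms of parabolic cylinder functions, and its associated Beals--Coifman operator is invertible with bound uniform in $x\in\mathbb{R}$ and $t\ge 1$. This is standard and parallels the scalar treatment in \cite{deift2002long}; I would simply cite this and note that the $2\times 2$ vector structure does not change the argument.

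Next I would compare $C_{\mathbf{w}_3}$ to $C_{\mathbf{w}_4}$. Writing
\begin{equation*}
1-C_{\mathbf{w}_3}=(1-C_{\mathbf{w}_4})-\bigl(C_{\mathbf{w}_3}-C_{\mathbf{w}_4}\bigr)
=(1-C_{\mathbf{w}_4})\bigl[\,1-(1-C_{\mathbf{w}_4})^{-1}(C_{\mathbf{w}_3}-C_{\mathbf{w}_4})\,\bigr],
\end{equation*}
and using the $L^2$-boundedness of the Cauchy projections $C_{\mathbb{R}}^{\pm}$ on $\Sigma$, one has
\begin{equation*}
\bigl\|C_{\mathbf{w}_3}-C_{\mathbf{w}_4}\bigr\|_{L^2(\Sigma)\to L^2(\Sigma)}
\lesssim \bigl\|\mathbf{w}_3-\mathbf{w}_4\bigr\|_{L^\infty(\Sigma)}
\lesssim t^{-1/4}.
\end{equation*}
Combined with the uniform bound on $(1-C_{\mathbf{w}_4})^{-1}$, the operator $(1-C_{\mathbf{w}_4})^{-1}(C_{\mathbf{w}_3}-C_{\mathbf{w}_4})$ has operator norm less than $1/2$ for all $t\ge t_0$ with $t_0$ large enough and all $x\in\mathbb{R}$. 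A Neumann series then produces $(1-C_{\mathbf{w}_3})^{-1}$ with
\begin{equation*}
\bigl\|(1-C_{\mathbf{w}_3})^{-1}\bigr\|_{L^2(\Sigma)\to L^2(\Sigma)}\lesssim 1
\end{equation*}
uniformly in $x\in\mathbb{R}$ and $t\ge t_0$.

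The main obstacle is really hidden in the first step, the $L^2$-invertibility of $1-C_{\mathbf{w}_4}$, because the model RHP \ref{rhp-M4} is a $3\times 3$ problem with a vector reflection coefficient rather than a scalar one, so one must verify that the standard parabolic cylinder construction (and the vanishing-lemma argument behind its unique solvability) goes through blockwise. Once this is granted, the comparison via Lemma \ref{bdd-V3-V4} at $p=\infty$ is routine, and the only quantitative input one needs is that $t^{-1/4}\to 0$ so that $t_0$ can be chosen independently of $x$.
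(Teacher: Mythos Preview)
Your proposal is correct and follows essentially the same route as the paper: invoke Lemma \ref{bdd-V3-V4} at $p=\infty$ to bound $\|C_{\mathbf{w}_3}-C_{\mathbf{w}_4}\|_{L^2\to L^2}\lesssim t^{-1/4}$, then use the resolvent identity to transfer invertibility from $1-C_{\mathbf{w}_4}$ to $1-C_{\mathbf{w}_3}$.

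The only noteworthy difference is in how the boundedness of $(1-C_{\mathbf{w}_4})^{-1}$ is established. You appeal to the explicit parabolic-cylinder solution of RHP \ref{rhp-M4} and the general principle that a bounded solution yields a bounded inverse for the associated Beals--Coifman operator. The paper instead undoes the contour deformation: citing \cite[p.~1049]{deift2002long}, it observes that the bound on $(1-C_{\mathbf{w}_4})^{-1}$ in $L^2(\Sigma)$ is equivalent to the bound on $(1-C_{\tilde{\mathbf{w}}})^{-1}$ in $L^2(\mathbb{R})$, where $\tilde{\V}$ is the constant-coefficient jump
\[
\tilde{\V}=\begin{pmatrix}1+|\mathbf{R}(\xi)|^2 & -\mathbf{R}(\xi)\ee^{-2\ii t\theta}\\ -\mathbf{R}^\dagger(\xi)\ee^{2\ii t\theta} & \mathbb{I}_2\end{pmatrix},
\]
and then applies the positive-definiteness argument of Lemma \ref{bdd-w} directly. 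The paper's route is more self-contained and sidesteps the need to verify uniform bounds on the parabolic-cylinder solution in the $3\times 3$ setting (the ``main obstacle'' you flagged); your route is standard but leaves that verification implicit.
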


\begin{proof}
	By \cite[pp.1049]{deift2002long}, the bound on $(1-C_{\mathbf{w}_4})^{-1}$ in $L^2(\Sigma)$ is equivalent to 
	the bound on $(1-C_{\tilde{\mathbf{w}}})^{-1}$ in $L^2(\mathbb{R})$, where
	\begin{equation*}
		\tilde{\V}=
		\begin{pmatrix}
			1+|\mathbf{R}(\xi)|^2&-\mathbf{R}(\xi)\ee^{-2\ii t\theta}\\
			-\mathbf{R}^\dagger(\xi)\ee^{2\ii t\theta}&\mathbb{I}_2
		\end{pmatrix}.
	\end{equation*}
	But the bound on $(1-C_{\tilde{\mathbf{w}}})^{-1}$ can be ensured by the same argument as 
	$(1-C_{\mathbf{w}})^{-1}$ in Lemma \ref{bdd-w}. Hence $(1-C_{\mathbf{w}_4})^{-1}$ is bounded in $L^2(\Sigma)$.
	Set $\mathbf{w}_{3}=(\mathbf{w}^{-}_3=0,\mathbf{w}^{+}_3=\V^{(3)}-\mathbb{I}_3)$ and 
	$\mathbf{w}_{4}=(\mathbf{w}^{-}_4=0,\mathbf{w}^{+}_4=\V^{(4)}-\mathbb{I}_3)$. 
	By Lemma \ref{bdd-V3-V4},
	\begin{equation*}
		\|C_{\mathbf{w}_3}-C_{\mathbf{w}_4}\|_{L^2}
		=\sup_{\|\mathbf{f}\|_{L^2}=1}\|C^-_{\Sigma}\mathbf{f}(\V^{(3)}-\V^{(4)})\|_{L^2}\lesssim 
		\|\V^{(3)}-\V^{(4)}\|_{L^\infty}\lesssim t^{-\frac{1}{4}}.
	\end{equation*}

	Then we conclude by the resolvent identity that $(1-C_{\mathbf{w}_3})^{-1}$ exists in $L^2(\Sigma)$ 
	for sufficiently large $t>t_0$ and that
	\begin{equation*}
		\left\|(1-C_{{\mathbf{w}_{4}}})^{-1}\right\|_{{L^{2}(\Sigma)}}\lesssim1,
		\quad\left\|(1-C_{{\mathbf{w}_3}})^{-1}-(1-C_{{\mathbf{w}_4}})^{-1}\right\|_{{L^{2}}}
		\lesssim t^{-\frac{1}{4}},
	\end{equation*}
	uniformly for all $x\in\mathbb{R}$ and $t>t_0$. By \cite[Proposition 2.6 and Corollary 2.7]{deift2002long}, $(1-C_{\mathbf{w}_3})^{-1}$ exists and is bounded in $L^2(\Sigma)$ for any 
	factorization and hence the result follows immediately.
\end{proof}

From the previous results,  we observe that for sufficiently large $t>t_0$,  
$(1-C_{\mathbf{w}_i})^{-1}$ exists and is bounded in $L^2$, for $i=1,2,3,4$.

We now reverse the orientation
on $(-\infty,\xi)$ of RHP \ref{rhp-M1}, as shown in figure \ref{initial-contour} to obtain a contour 
$\mathbb{R}_\xi=(\ee^{\ii\pi}\mathbb{R}_++\xi)\cup(\mathbb{R}_++\xi)$ with associated jump matrix
$\hat{\V}^{(1)}=\V^{(1)}$ for $\lambda>\xi$ and $\hat{\V}^{(1)}=\left(\V^{(1)}\right)^{-1}$ 
for $\lambda<\xi$. 
From the form of $\hat{\V}^{(1)}$, we see that 
$\hat{\V}^{(1)}=(\mathbb{I}_{3}-\hat{\mathbf{w}}_1^-)^{-1}(\mathbb{I}_{3}+\hat{\mathbf{w}}_1^+)$, where
\begin{equation*}
	\begin{aligned}
		\hat{\mathbf{w}}_1
		&=(\hat{\mathbf{w}}^-_1,\hat{\mathbf{w}}^+_1)\\
		&=\begin{cases}
			\left(
			\begin{pmatrix}
				0&-\det\pmb{\delta} \ee^{-2\ii t\theta}\mathbf{R}(\lambda)\pmb{\delta}\\
				0&0\end{pmatrix},
			\begin{pmatrix}0&0\\
				-(\det\pmb{\delta})^{-1}\ee^{2\ii t\theta}\pmb{\delta}^{-1}\mathbf{R}^\dagger(\lambda)&0
			\end{pmatrix}
			\right),&\lambda>\xi ,\\
			\left(
			\begin{pmatrix}
				0&\det\pmb{\delta}_+\ee^{-2\ii t\theta}
				\frac{\mathbf{R}(\lambda)}{1+\mathbf{R}\mathbf{R}^\dagger}\pmb{\delta}_+\\
				0&0
			\end{pmatrix},
			\begin{pmatrix}
				0&0\\
				(\det\pmb{\delta}_-)^{-1}\ee^{2\ii t\theta}\pmb{\delta}_-^{-1}
				\frac{\mathbf{R}^\dagger(\lambda)}{1+\mathbf{R}\mathbf{R}^\dagger}&0
			\end{pmatrix}
			\right),&\lambda<\xi .
		\end{cases}
	\end{aligned}
\end{equation*}
Observe that if we reverse the orientation as above 
for the RHP \ref{rhp-M2}, we obtain a new RHP with 
contour $\mathbb{R}_\xi$ and jump matrix 
$\hat{\V}^{(2)}=(\mathbb{I}_{3}-\hat{\mathbf{w}}_2^-)^{-1}(\mathbb{I}_{3}+\hat{\mathbf{w}}_2^+)$ where 
$\hat{\mathbf{w}}_2=(\hat{\mathbf{w}}_2^-,\hat{\mathbf{w}}_2^+)$ is the same as $\hat{\mathbf{w}}_1$ except $\mathbf{R}$ 
is replaced by $[\mathbf{R}]$, etc. 
Then we can obtain the following expansion of $\hat{\M}^{(i)}$, for $i=1,2$,
\begin{equation}
	\hat{\M}^{(i)}(\lambda)=\mathbb{I}_3+\frac{\hat{\M}^{(i)}_1}{\lambda}
	+\oo(\lambda^{-2}),\quad \lambda\rightarrow\infty.
\end{equation}
Moreover, the extension $\hat{\M}^{(i)}$ of $\hat{\M}^{(i)}_\pm$ off $\mathbb{R}_\xi$ 
is the same as the extension $\M^{(i)}$ of $\M^{(i)}_\pm$ off $\mathbb{R}$, for $i=1,2$. 
Thus we can deduce
$\hat{\M}^{(i)}_1=\M^{(i)}_1$, for $i=1,2$. 
By Proposition 2.8 in \cite{deift2002long}, we can also obtain 
$C_{\hat{\mathbf{w}}_1}=C_{\mathbf{w}_1}$ and hence 
$\|(1-C_{\hat{\mathbf{w}}_i})^{-1}\|_{L^2(\mathbb{R}_\xi)}\lesssim1$, for $i=1,2$.
Extend $\mathbb{R}_\xi$ to a contour 
$\Gamma_{\xi}=\mathbb{R}\cup(\xi+\ee^{\ii\pi/2}\mathbb{R}_-)\cup(\xi+\ee^{-\ii\pi/2}\mathbb{R}_-)$ 
as shown in figure \ref{extend-contours}. 
As a complete contour, $\Gamma_{\xi}$ has the important property \cite{deift2002long}
\begin{equation}\label{complete-contour}
	C_{\Gamma_{\xi}}^+C_{\Gamma_{\xi}}^-
	=C_{\Gamma_{\xi}}^-C_{\Gamma_{\xi}}^+=0.
\end{equation}

\begin{figure}[h]
	\centering
	\begin{tikzpicture}
		\vspace{0.5cm} % ?��??? ??????��??����?
		% ?��???????ae ?��oe?
		\draw[thick] (-2, 0) -- (-1, 0) ;
		\draw[->,thick] (0, 0) -- (-1, 0) ;
		\draw[->,thick] (0, 0) -- (1, 0) ;
		\draw[thick] (1, 0) -- (2, 0) ;
		
		\draw[thick] (0, 1) -- (0, 0) ;
		\draw[->,thick] (0, 2) -- (0, 1) ;
		\draw[->,thick] (0, -2) -- (0, -1) ;
		\draw[thick] (0, -1) -- (0, 0) ;
		
		\node at (0.3, -0.3)  {$\xi$};
	\end{tikzpicture}
	\caption{\small{The contours $\Gamma_\xi$.}}
	\label{extend-contours}
\end{figure}
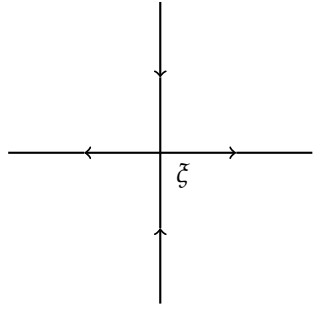
% \begin{center} % ?oe???? center ??��?��?
	% \begin{tikzpicture}
		% 	\vspace{0.5cm} % ?��??? ??????��??����?
		% 	% ?��???????ae ?��oe?
		% 	\draw[thick] (-2, 0) -- (-1, 0) ;
		% 	\draw[->,thick] (0, 0) -- (-1, 0) ;
		% 	\draw[->,thick] (0, 0) -- (1, 0) ;
		% 	\draw[thick] (1, 0) -- (2, 0) ;
		
		% 	\draw[thick] (0, 1) -- (0, 0) ;
		% 	\draw[->,thick] (0, 2) -- (0, 1) ;
		% 	\draw[->,thick] (0, -2) -- (0, -1) ;
		% 	\draw[thick] (0, -1) -- (0, 0) ;
		
		% 	\node at (0.3, -0.3)  {$\xi$};
		% \end{tikzpicture}
	% \end{center}
% \begin{center}
	%   \textbf{Fig 3:} The contours $\Gamma_\xi$.
	% \end{center}

Our goal is to prove that $\hat{\M}^{(1)}_1$ is close to $\hat{\M}^{(2)}_1$ with the above
property \eqref{complete-contour} of complete contour and the following lemmas.

\begin{lemma}\label{lemma1-CNLS}
	Suppose $1\times2$ vector function $\mathbf{f}\in H^{1,1}(\mathbb{R})$. Then for any $2\leq p<\infty$ and for all $t\ge0$,
	\begin{equation*}
		\begin{cases}
			\|C^-_{(\xi,+\infty)\rightarrow\Gamma_\xi}
			(\det\pmb{\delta})^{-1}\ee^{2\ii t\theta}\pmb{\delta}^{-1}\mathbf{f}^\dagger\|_{L^p}
			\lesssim t^{-1/(2p)},\\
			\|C^-_{(-\infty,\xi)\rightarrow\Gamma_\xi}
			(\det\pmb{\delta}_-)^{-1}\ee^{2\ii t\theta}\pmb{\delta}_-^{-1}
			\mathbf{f}^\dagger|_{L^p}
			\lesssim t^{-1/(2p)},\\
			\|C^+_{(\xi,+\infty)\rightarrow\Gamma_\xi}
			\det\pmb{\delta} \ee^{-2\ii t\theta}\mathbf{f}\pmb{\delta}\|_{L^p}
			\lesssim t^{-1/(2p)},\\
			\|C^+_{(-\infty,\xi)\rightarrow\Gamma_\xi}
			\det\pmb{\delta}_+\ee^{-2\ii t\theta}
			\mathbf{f}\pmb{\delta}_+\|_{L^p}
			\lesssim t^{-1/(2p)},
		\end{cases}
	\end{equation*}
	where $\Gamma_{\xi}=\mathbb{R}\cup(\xi+\ee^{\ii\pi/2}\mathbb{R}_-)\cup(\xi+\ee^{-\ii\pi/2}\mathbb{R}_-)$.
\end{lemma}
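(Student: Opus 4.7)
The plan is to prove the first inequality in detail; the other three follow by identical reasoning after interchanging upper/lower half planes and adjusting orientations. The overall strategy is to decompose the non-analytic factor $\mathbf{f}$ into an analytic piece that can be handled by contour deformation and a small remainder that is controlled by boundedness of the Cauchy projection, calibrating the split so both contributions match the asserted bound $t^{-1/(2p)}$.

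First I would exploit the $H^{1,1}(\mathbb{R})$ regularity of $\mathbf{f}$ to produce a Paley--Wiener-type splitting
$\mathbf{f} = \mathbf{f}_a + \mathbf{f}_r$, where $\mathbf{f}_a$ extends analytically into the upper half plane with uniform bound $|\mathbf{f}_a(\lambda)| \lesssim \|\mathbf{f}\|_{H^{1,1}}\,\ee^{-\epsilon(t)\,\mathrm{Im}\,\lambda}$ for a suitable $\epsilon(t) > 0$, and where $\mathbf{f}_r$ is small in $L^2\cap L^\infty$ with $\|\mathbf{f}_r\|_{L^2}$ shrinking as $\epsilon(t) \to 0$. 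Next I would note that the factor $(\det\pmb{\delta})^{-1}\pmb{\delta}^{-1}$ is analytic in the upper half plane (no jump for $\lambda > \xi$), and the uniform bound \eqref{bdd-delta} applies. The exponential $\ee^{2\mathrm{i} t\theta}$ is analytic everywhere, and along the ray $\Sigma_1 = \xi + \ee^{\mathrm{i}\pi/4}\mathbb{R}_+$ one has $|\ee^{2\mathrm{i} t\theta(\lambda)}| = \ee^{-2t|\lambda-\xi|^2}$.

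For the analytic part, I would apply Cauchy's theorem to deform $(\xi, +\infty)$ into $\Sigma_1$. The deformed integral, evaluated on any $z \in \Gamma_\xi$, becomes
\[
\int_{\Sigma_1} \frac{(\det\pmb{\delta}(s))^{-1}\pmb{\delta}(s)^{-1}\,\mathbf{f}_a^\dagger(s)\,\ee^{2\mathrm{i} t\theta(s)}}{s - z}\,\frac{ds}{2\pi\mathrm{i}}.
\]
Using the Gaussian decay of $\ee^{2\mathrm{i} t\theta}$ on $\Sigma_1$ together with the mild $(\lambda-\xi)^{\mathrm{i}\kappa(\xi)}$ behavior of $\delta_0$ (which is purely oscillatory, hence of unit modulus up to bounded factors via the decomposition $\det\pmb{\delta} = \delta_0\delta_1$ already established), a change of variables $u = \sqrt{t}\,|s-\xi|$ shows the integrand, after normalization, lives at the scale $u \sim 1$. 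Computing the $L^p(\Gamma_\xi)$ norm in $z$ then yields the bound $t^{-1/(2p)}$ through straightforward scaling (the endpoint $p=\infty$ gives $t^{-1/2}$ decay, and the $L^2$-boundedness of the Cauchy projection provides the endpoint at $p=2$; the general case follows by interpolation or direct computation).

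For the remainder $\mathbf{f}_r$, I would use that $C^-_{(\xi,+\infty)\to\Gamma_\xi}$ is bounded $L^2 \to L^2$, combined with the uniform bound on $\pmb{\delta}^{-1}$ and $(\det\pmb{\delta})^{-1}$, to get $\|C^- (\cdots \mathbf{f}_r^\dagger)\|_{L^2(\Gamma_\xi)} \lesssim \|\mathbf{f}_r\|_{L^2}$. Interpolation with an $L^\infty$ bound obtained from $H^{1,1} \hookrightarrow L^\infty$ gives the corresponding $L^p$ estimate. Choosing $\epsilon(t) \sim t^{-1/2}$ balances the analytic and remainder contributions to yield the claimed bound $t^{-1/(2p)}$ uniformly in $x \in \mathbb{R}$.

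The main obstacle will be ensuring that the $\pmb{\delta}$ factor, which lacks a closed-form expression, does not destroy either the analyticity needed for the contour deformation or the sharp $t$-scaling. I plan to bypass this by using only three structural facts already established: analyticity in $\mathbb{C}_+$, the uniform bound \eqref{bdd-delta}, and the explicit factorization $\det\pmb{\delta} = \delta_0\delta_1$ through which the only nontrivial behavior near $\xi$ is the oscillatory factor $(\lambda - \xi)^{\mathrm{i}\kappa(\xi)}$. Since $\kappa(\xi) \in \mathbb{R}$, this factor has unit modulus away from $\xi$ and at most a logarithmic singularity in derivatives, which is compatible with all $L^p$ estimates for finite $p$. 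A secondary subtlety is that the contour $\Gamma_\xi$ passes through $\xi$ along vertical rays, so one needs to verify that the pointwise bound on the deformed integral remains integrable as $z \to \xi$ on these vertical segments; this is handled by the fact that the integrand carries the same $(\cdot - \xi)^{\mathrm{i}\kappa(\xi)}$ type behavior, which cancels with the Jacobian in the $u$-scaling argument.
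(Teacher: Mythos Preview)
Your overall strategy---split $\mathbf{f}$ into a piece admitting analytic continuation so that the contour can be pushed to $\Sigma_1$ where $\ee^{2\ii t\theta}$ is Gaussian, plus a small remainder handled by $L^p$-boundedness of the Cauchy projection---is the right intuition. However, the specific Paley--Wiener splitting you describe is internally inconsistent, and this is a genuine gap.

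You ask for $\mathbf{f}_a$ analytic in $\mathbb{C}_+$ with $|\mathbf{f}_a(\lambda)|\lesssim \ee^{-\epsilon(t)\,\mathrm{Im}\,\lambda}$. On the Fourier side this forces $\check{\mathbf{f}}_a^\dagger$ to be supported in $(-\infty,-\epsilon(t)]$, so that $\check{\mathbf{f}}_r^\dagger$ carries \emph{all} frequencies in $[-\epsilon(t),\infty)$, in particular every positive frequency. Hence $\|\mathbf{f}_r\|_{L^2}^2=\int_{-\epsilon(t)}^\infty|\check{\mathbf{f}}^\dagger|^2$ does \emph{not} shrink as $\epsilon(t)\to 0$; it increases toward $\|\mathbf{f}\|_{L^2}^2$. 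The remainder term therefore never becomes small and your balancing argument collapses. If instead you meant a band-limited approximation (Fourier cutoff at $|y|\le N(t)$), then $\mathbf{f}_a$ is entire but has exponential \emph{growth} $\ee^{N(t)|\mathrm{Im}\,\lambda|}$, not decay; this growth must be beaten by the Gaussian $\ee^{-2tu^2}$ on $\Sigma_1$, which imposes $N(t)\lesssim\sqrt{t}$ and changes the bookkeeping. That route can be made to work, but it is not what you wrote.

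The paper does not split $\mathbf{f}$ at all. Following Deift--Zhou (their Lemma~4.3), it writes $\mathbf{f}^\dagger(\lambda)=(2\pi)^{-1/2}\int \ee^{-\ii y\lambda}\check{\mathbf{f}}^\dagger(y)\,\dd y$ and observes that the \emph{combined} phase $\ee^{2\ii t\lambda^2-\ii y\lambda}$ has its stationary point at $\lambda=y/(4t)$. For each fixed $y$ one then splits the $\lambda$-integral at $a=\max(0,y/(4t))$: the piece on $(a,\infty)$ is deformed to $a+\ee^{\ii\pi/4}\mathbb{R}_+$ by Cauchy's theorem (here no growth issue arises because $\ee^{-\ii y\lambda}$ is bounded to the right of the new stationary point), and the short piece on $(0,a)$ is controlled directly by its length $|y|/(4t)$ together with Hardy's inequality for $p=2$. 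The $\pmb{\delta}$ factors enter only through the uniform bound \eqref{bdd-delta}, exactly as you anticipated; no finer structure of $\pmb{\delta}$ is needed for this lemma. The point you are missing is that the positive-frequency part of $\mathbf{f}^\dagger$ cannot be treated as a small remainder---it must be handled by a frequency-dependent contour shift.
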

% A proof of Lemma \ref{lemma1-CNLS} is detailed in Appendix A.

\begin{lemma}\label{lemma2-CNLS}
	Suppose $1\times2$ vector function $\mathbf{f}\in H^{1,1}(\mathbb{R})$, $\mathbf{f}(\xi)=0$ and 
	g is a $2\times2$ vector function in the Hardy space $H^q(\mathbb{C}\setminus\mathbb{R})$. 
	Then for any $2\leq q\le\infty$ and for all $t\ge0$,
	\begin{equation*}
		\begin{cases}
			\|C^-_{(\xi,+\infty)\rightarrow\Gamma_\xi}
			\ee^{2\ii t\theta}g_+\mathbf{f}^\dagger\|_{L^2}
			\lesssim t^{-1/2+1/q}\|g\|_{H^q(\mathbb{C}\setminus\mathbb{R})},\\
			\|C^-_{(-\infty,\xi)\rightarrow\Gamma_\xi}
			\ee^{-2\ii t\theta}fg_-|_{L^2}
			\lesssim t^{-1/2+1/q}\|g\|_{H^q(\mathbb{C}\setminus\mathbb{R})},\\
			\|C^+_{(\xi,+\infty)\rightarrow\Gamma_\xi}
			\ee^{-2\ii t\theta}fg_-\|_{L^2}
			\lesssim t^{-1/2+1/q}\|g\|_{H^q(\mathbb{C}\setminus\mathbb{R})},\\
			\|C^+_{(-\infty,\xi)\rightarrow\Gamma_\xi}
			\ee^{2\ii t\theta}g_+\mathbf{f}^\dagger\|_{L^2}
			\lesssim t^{-1/2+1/q}\|g\|_{H^q(\mathbb{C}\setminus\mathbb{R})}.
		\end{cases}
	\end{equation*}
\end{lemma}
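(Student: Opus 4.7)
The plan is to adapt the scalar stationary-phase template of Deift--Zhou \cite{deift2002long} to the vector case. Three ingredients drive the estimate: (a) Sobolev embedding $H^{1,1}(\mathbb{R})\hookrightarrow C^{1/2}(\mathbb{R})$ combined with $\mathbf{f}(\xi)=0$ gives $|\mathbf{f}(s)|\lesssim|s-\xi|^{1/2}\|\mathbf{f}\|_{H^{1,1}}$; (b) since $g\in H^q(\mathbb{C}\setminus\mathbb{R})$, the boundary values $g_\pm$ extend analytically into $\mathbb{C}_\pm$ with $\|g_\pm(\cdot+\ii y)\|_{L^q(\mathbb{R})}\lesssim\|g\|_{H^q}$ uniformly in $y$; (c) writing $\lambda=\xi+r\ee^{\ii\phi}$, one has $|\ee^{\pm 2\ii t\theta(\lambda)}|=\ee^{\mp 2tr^2\sin 2\phi}$, so $\ee^{2\ii t\theta}$ decays exponentially on the ray $L_+=\xi+\ee^{\ii\pi/4}\mathbb{R}_+\subset\mathbb{C}_+$ and symmetrically in the other sectors. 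Only the first of the four inequalities needs detailed treatment; the others follow by the $\pm$-symmetries of $\theta$ and the half-plane exchange $\mathbb{C}_+\leftrightarrow\mathbb{C}_-$, which swaps $g_+\leftrightarrow g_-$ and sends $(\xi,+\infty)\to(-\infty,\xi)$.

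For the first inequality, fix $\lambda\in\Gamma_\xi$ on its $-$ side and rewrite
\begin{equation*}
\bigl(C^-_{(\xi,+\infty)\to\Gamma_\xi}\ee^{2\ii t\theta}g_+\mathbf{f}^\dagger\bigr)(\lambda)=\frac{1}{2\pi\ii}\int_{\xi}^{+\infty}\frac{\ee^{2\ii t\theta(s)}g_+(s)\mathbf{f}^\dagger(s)}{s-\lambda}\,ds.
\end{equation*}
Because $g_+$ extends analytically to $\mathbb{C}_+$ and $\ee^{2\ii t\theta}/(s-\lambda)$ is meromorphic there with a pole at $s=\lambda$ lying off the deformation region, the path $(\xi,+\infty)$ can be rotated to $L_+$. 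The only obstruction is the non-analyticity of $\mathbf{f}^\dagger$, which I would handle via a smooth almost-analytic extension $\widetilde{\mathbf{f}^\dagger}$ of $\mathbf{f}^\dagger$ into $\mathbb{C}_+$, built by a Fourier cut-off as in \cite{dieng2008long}, satisfying $|\widetilde{\mathbf{f}^\dagger}(\xi+r\ee^{\ii\phi})|\lesssim r^{1/2}$ and $\|\bar\partial\widetilde{\mathbf{f}^\dagger}\|_{L^2}\lesssim\|\mathbf{f}\|_{H^{1,1}}$. After this replacement the original line integral equals an integral over $L_+$ plus a $\bar\partial$-integral over the wedge bounded by $(\xi,+\infty)\cup L_+$, with both integrands majorized by $r^{1/2}\ee^{-2tr^2}|g_+(\xi+r\ee^{\ii\pi/4})|/|s-\lambda|$.

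Taking the $L^2(\Gamma_\xi,d\lambda)$ norm and invoking the $L^2$-boundedness of the Cauchy projection off the smooth curve $L_+$ (and the analogous bound for the wedge $\bar\partial$-operator), the estimate reduces to
\begin{equation*}
\left(\int_0^\infty r\,\ee^{-4tr^2}\bigl|g_+(\xi+r\ee^{\ii\pi/4})\bigr|^2\,dr\right)^{1/2}.
\end{equation*}
Applying Hölder with exponents $(q/(q-2),q/2)$ and inserting (b) along the rotated ray gives a factor of $\|g\|_{H^q}$ times $\bigl(\int_0^\infty r^{q/(q-2)}\ee^{-ctr^2}\,dr\bigr)^{(q-2)/(2q)}\sim t^{-1/2+1/q}$ after the scaling $r\mapsto r/\sqrt{t}$. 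The main technical hurdle is arranging the almost-analytic extension so that the Hölder bound $|\widetilde{\mathbf{f}^\dagger}|\lesssim r^{1/2}$ is preserved on the deformed ray while the $\bar\partial$-correction does not erode the $t^{-1/2+1/q}$ rate; once this balancing is secured, the remaining three inequalities are obtained in parallel by reflecting the sign of $\theta$, swapping $g_+\leftrightarrow g_-$, or replacing $(\xi,+\infty)$ by $(-\infty,\xi)$ as dictated by the position of the non-decaying sectors of $\ee^{\pm 2\ii t\theta}$.
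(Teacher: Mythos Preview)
Your route is genuinely different from the paper's. The paper does \emph{not} extend $\mathbf{f}^\dagger$ off the real axis at all; instead it follows the Fourier-splitting template of \cite[Lemma~4.3]{deift2002long}. After translating to $\xi=0$, it writes $\mathbf{f}^\dagger(\lambda)=(2\pi)^{-1/2}\int(\ee^{-\ii y\lambda}-1)\check{\mathbf{f}}^\dagger(y)\,dy$ (the subtraction of $1$ is exactly where $\mathbf{f}(\xi)=0$ enters, via $\int\check{\mathbf{f}}^\dagger=0$), and splits the $s$-integral at $a=\max(0,y/4t)$, the point where the completed-square phase changes sign. The near-stationary piece (I) is rewritten so that Hardy's inequality applies to $\int_{4t\lambda}^\infty|\check{\mathbf{f}}^\dagger|$, and a H\"older step with the dual exponent $q'$ satisfying $1/q+1/q'=1/2$ produces the dominant rate $t^{-1/2+1/q}$. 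The far piece (II) is handled by rotating the $s$-contour to $\ee^{\ii\pi/4}\mathbb{R}_+$ (or its shift by $y/4t$)---here only the analyticity of $g$ and $\ee^{2\ii t\theta}$ is used---and then estimating $|\ee^{-\ii y\lambda}-1|\le|uy|$ on the rotated ray together with a split of the $y$-integral at $|y|=t^{1/2}$. No almost-analytic extension, no $\bar\partial$ term.

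Your $\bar\partial$-style proposal can in principle be made to work, but the sketch has a real gap at the point you yourself flag as a ``hurdle'', and the sentence ``both integrands majorized by $r^{1/2}\ee^{-2tr^2}|g_+(\xi+r\ee^{\ii\pi/4})|/|s-\lambda|$'' is where it breaks. The line integral on $L_+$ involves $\widetilde{\mathbf{f}^\dagger}$ and can indeed be bounded as you indicate (your H\"older step then actually yields the stronger rate $t^{-1/2+1/(2q)}$, not $t^{-1/2+1/q}$). The area integral, however, involves $\bar\partial\widetilde{\mathbf{f}^\dagger}$, for which you only assert an $L^2$ bound, not a pointwise $r^{1/2}$ bound; the two pieces cannot be majorized by the same expression. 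In the standard Dieng--McLaughlin construction one has instead a pointwise bound of the type $|\bar\partial\widetilde{\mathbf{f}^\dagger}(\xi+r\ee^{\ii\phi})|\lesssim|\mathbf{f}'(\xi+r\cos\phi)|+r^{-1/2}$, and the solid Cauchy transform of $\ee^{2\ii t\theta}\,g\,\bar\partial\widetilde{\mathbf{f}^\dagger}$ over the wedge then requires a separate two-dimensional estimate combining this bound, the Gaussian decay $\ee^{-2tr^2\sin 2\phi}$ (which degenerates as $\phi\to 0$), and the $H^q$ control of $g$ along horizontal slices. This is doable and typically yields $t^{-3/4}$ or better, but it is not captured by the one-line reduction you wrote down. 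The trade-off: the paper's Fourier approach keeps $\mathbf{f}$ on $\mathbb{R}$ and pushes all the analyticity onto $g$ and the phase, at the cost of the Hardy-inequality bookkeeping; your approach is more geometric but must pay for extending $\mathbf{f}^\dagger$ with a genuine area-integral estimate that you have not yet supplied.
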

The proofs of Lemma \ref{lemma1-CNLS} and Lemma \ref{lemma2-CNLS} 
are detailed in Appendix A for the sake of completeness. 
Then we need the following $L^P$ bound on $(1-C_{\mathbf{w}_1})^{-1}$.

\begin{lemma}\label{Lp-w1}\cite[Proposition 4.5]{deift2002long}
	Suppose that $\mathbf{R}$ is a continuous function on
	$\mathbb{R}$ such that
	\begin{equation*}
		\lim_{\lambda\to\infty}\mathbf{R}(\lambda)=0\quad\mathrm{and}
		\quad\|\mathbf{R}\|_{L^\infty(\mathbb{R})}<\infty.
	\end{equation*}
	Then for any $p\ge2$, there exists $t_0$ such that for $t\ge t_0$ and 
	$x\in\mathbb{R}$, $(1-C_{\mathbf{w}})^{-1}$ exists in $L^p(\mathbb{R})$ and 
	\begin{equation*}
		\|(1-C_{\mathbf{w}})^{-1}\|_{L^p(\mathbb{R})\rightarrow L^p(\mathbb{R})}\lesssim1.
	\end{equation*}
\end{lemma}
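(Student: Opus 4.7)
The plan is to bootstrap from the $L^2$ operator bound already established in Lemma \ref{bdd-w} to the desired $L^p$ bound for $p\geq 2$, using the $L^p$-boundedness of the Cauchy projections together with the oscillatory structure built into $\mathbf{w}^\pm$ through the factor $\ee^{\pm 2\ii t\theta}$.

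First I would verify that $C_{\mathbf{w}}$ itself is bounded on $L^p(\mathbb{R})$ for every $p\in(1,\infty)$: the M.\ Riesz theorem gives $C^{\pm}_{\mathbb{R}}:L^p\to L^p$ continuously, and the hypothesis $\|\mathbf{R}\|_{L^\infty}<\infty$ together with the boundedness of $\pmb{\delta}^{\pm 1}$ and $(\det\pmb{\delta})^{\pm 1}$ recorded in \eqref{bdd-delta} gives $\mathbf{w}^\pm\in L^\infty$ with norms uniform in $(x,t)$. This yields $\|C_{\mathbf{w}}\|_{L^p\to L^p}\lesssim 1$, but with a constant that does not decay in $t$, so a direct Neumann series for $(1-C_{\mathbf{w}})^{-1}$ does not converge on $L^p$ just from this. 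The key observation is that after iteration the Cauchy operator picks up gains from the oscillator. Writing
\begin{equation*}
(1-C_{\mathbf{w}})^{-1}=\sum_{k=0}^{N-1}C_{\mathbf{w}}^{k}+C_{\mathbf{w}}^{N}(1-C_{\mathbf{w}})^{-1},
\end{equation*}
for some fixed $N$, the iterated kernel $C_{\mathbf{w}}^{N}$ is a multifold integral involving phases $\ee^{\pm 2\ii t[\theta(\lambda_1)-\theta(\lambda_2)+\cdots]}$ against $L^\infty$ coefficients that admit the same kind of analytic extensions into appropriate half-planes already used in the splitting behind Lemma \ref{bdd-V3-V4}. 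A contour-deformation / stationary-phase argument in the spirit of Lemmas \ref{lemma1-CNLS} and \ref{lemma2-CNLS} then yields $\|C_{\mathbf{w}}^{N}\|_{L^p\to L^p}\lesssim t^{-\alpha}$ for some $\alpha>0$ once $N$ is chosen large enough. For $t\geq t_0$ the operator $\mathbb{I}-C_{\mathbf{w}}^{N}$ is therefore invertible in $L^p$ by Neumann series, and the factorization $(1-C_{\mathbf{w}})^{-1}=(1-C_{\mathbf{w}}^{N})^{-1}\sum_{k=0}^{N-1}C_{\mathbf{w}}^{k}$ delivers the desired $L^p$ bound; compatibility with the $L^2$ resolvent bound from Lemma \ref{bdd-w} is then automatic via the uniqueness of the inverse on $L^p\cap L^2$.

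The main obstacle I anticipate is securing the stationary-phase decay of $C_{\mathbf{w}}^N$ \emph{uniformly} in the stationary phase point $\xi=-x/(2t)$, which may wander anywhere on $\mathbb{R}$ as $x$ varies. This forces one to estimate the iterated Cauchy integrals through the variable phase $\theta(\lambda)=\lambda^2-2\xi\lambda$ without any localization, and ultimately relies on the hypotheses $\lim_{\lambda\to\infty}\mathbf{R}(\lambda)=0$ together with $\mathbf{R}\in L^\infty$ to guarantee that tails of the kernel are uniformly small. Once this uniformity is in hand the argument closes, and the resulting constant in $\|(1-C_{\mathbf{w}})^{-1}\|_{L^p\to L^p}\lesssim 1$ is independent of $x\in\mathbb{R}$ for $t\geq t_0$.
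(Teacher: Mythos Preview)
The paper does not supply its own proof of this lemma; it simply cites Proposition~4.5 of \cite{deift2002long}. So there is no in-paper argument to compare against. That said, your proposal as written has a genuine gap relative to the stated hypotheses.

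You want to obtain $\|C_{\mathbf{w}}^{N}\|_{L^p\to L^p}\lesssim t^{-\alpha}$ by contour deformation and stationary phase ``in the spirit of Lemmas~\ref{lemma1-CNLS} and~\ref{lemma2-CNLS}.'' Those lemmas, however, require the amplitude to lie in $H^{1,1}(\mathbb{R})$, whereas here $\mathbf{R}$ is assumed only continuous, bounded, and vanishing at infinity. Under that hypothesis $\mathbf{R}$ need not admit any analytic extension off $\mathbb{R}$, so there is no contour to deform to; and with no derivative on $\mathbf{R}$ available, the oscillator $\ee^{\pm 2\ii t\theta}$ against a merely $C_0$ amplitude gives no integration-by-parts gain and hence no quantitative $t$-decay. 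Your claim that the $L^\infty$ coefficients ``admit the same kind of analytic extensions into appropriate half-planes'' is therefore unwarranted under the stated assumptions. The obstacle you flag---uniformity in $\xi$---is real but secondary; the primary issue is that the oscillatory mechanism you invoke simply does not fire at this level of regularity.

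The route taken in \cite{deift2002long} closes this regularity gap by a density step rather than by stationary phase: one approximates $\mathbf{R}$ uniformly by a rational function $\mathbf{R}_0$ with poles off $\mathbb{R}$ (possible exactly because $\mathbf{R}\in C_0(\mathbb{R})$), for which the iterated Cauchy operator acquires finite-rank/compact structure by residue calculus (compare the compactness argument in the paper's proof of Lemma~\ref{bdd-w}), and then treats $\mathbf{R}-\mathbf{R}_0$ as an $L^\infty$-small perturbation. Your algebraic identity $(1-C_{\mathbf{w}})^{-1}=(1-C_{\mathbf{w}}^{N})^{-1}\sum_{k=0}^{N-1}C_{\mathbf{w}}^{k}$ is perfectly sound, but the smallness of $C_{\mathbf{w}}^{N}$ has to be fed by such an approximation, not by stationary phase on $\mathbf{R}$ itself.
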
 
By Lemma \ref{bbd-w1}, Lemma \ref{Lp-w1} and Proposition 2.8 in \cite{deift2002long}, 
we conclude that for $\mathbf{R}(\lambda)\in H^{1,1}(\mathbb{R})$ and $2\le p<\infty$,
\begin{equation}\label{lp-hat-w1}
	\|(1-C_{\hat{\mathbf{w}}_1})^{-1}\|_{L^p(\mathbb{R}_\xi)\rightarrow L^p(\mathbb{R}_\xi)}\lesssim1.
\end{equation}
Using equation \eqref{lp-hat-w1}, we obtain the following results.

\begin{lemma}\label{lemma3-CNLS}
	For any $2\leq p<\infty$ and for t sufficiently large,
	\begin{equation*}
		\|\hat{\pmb{\mu}}_1-\mathbb{I}_3\|_{L^p}\lesssim t^{-1/(2p)}.
	\end{equation*}
\end{lemma}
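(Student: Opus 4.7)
The plan is to exploit the Beals--Coifman representation of $\hat{\pmb{\mu}}_1$, combined with the $L^p$-boundedness of the resolvent $(1-C_{\hat{\mathbf{w}}_1})^{-1}$ already established in \eqref{lp-hat-w1} and the single-contour Cauchy estimates from Lemma \ref{lemma1-CNLS}. The Beals--Coifman equation on $\mathbb{R}_\xi$ reads $\hat{\pmb{\mu}}_1=\mathbb{I}_3+C_{\hat{\mathbf{w}}_1}\hat{\pmb{\mu}}_1$, so subtracting $C_{\hat{\mathbf{w}}_1}\mathbb{I}_3$ from both sides and rearranging gives
\begin{equation*}
(1-C_{\hat{\mathbf{w}}_1})(\hat{\pmb{\mu}}_1-\mathbb{I}_3)=C_{\hat{\mathbf{w}}_1}\mathbb{I}_3,
\qquad\text{hence}\qquad
\hat{\pmb{\mu}}_1-\mathbb{I}_3=(1-C_{\hat{\mathbf{w}}_1})^{-1}C_{\hat{\mathbf{w}}_1}\mathbb{I}_3.
\end{equation*}
The first step is therefore to invoke \eqref{lp-hat-w1}, which bounds $(1-C_{\hat{\mathbf{w}}_1})^{-1}$ as an operator on $L^p(\mathbb{R}_\xi)$ uniformly for large $t$, reducing the proof to showing $\|C_{\hat{\mathbf{w}}_1}\mathbb{I}_3\|_{L^p(\mathbb{R}_\xi)}\lesssim t^{-1/(2p)}$.

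The second step unfolds the action of $C_{\hat{\mathbf{w}}_1}$ on the identity. By definition $C_{\hat{\mathbf{w}}_1}\mathbb{I}_3=C^+_{\mathbb{R}_\xi}\hat{\mathbf{w}}_1^-+C^-_{\mathbb{R}_\xi}\hat{\mathbf{w}}_1^+$, and the piecewise expressions for $\hat{\mathbf{w}}_1^\pm$ on $(\xi,+\infty)$ and on $(-\infty,\xi)$ split this into four contributions. Each contribution is exactly of the form controlled by Lemma \ref{lemma1-CNLS}: the kernels on $(\xi,+\infty)$ carry $\det\pmb{\delta}^{\pm1}\pmb{\delta}^{\pm1}\ee^{\pm2\ii t\theta}$ multiplying either $\mathbf{R}$ or $\mathbf{R}^\dagger$, while those on $(-\infty,\xi)$ carry the same phase/dressing factors multiplying $\mathbf{R}/(1+\mathbf{R}\mathbf{R}^\dagger)$ or its conjugate. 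Applying Lemma \ref{lemma1-CNLS} componentwise with $\mathbf{f}=\mathbf{R}$ in the $(\xi,+\infty)$ pieces and $\mathbf{f}=\mathbf{R}/(1+\mathbf{R}\mathbf{R}^\dagger)$ in the $(-\infty,\xi)$ pieces then yields
\begin{equation*}
\|C_{\hat{\mathbf{w}}_1}\mathbb{I}_3\|_{L^p(\mathbb{R}_\xi)}\lesssim t^{-1/(2p)},
\end{equation*}
and combining this with the resolvent bound gives the claimed estimate.

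The only technicality to address is the admissibility of the vector function $\mathbf{R}/(1+\mathbf{R}\mathbf{R}^\dagger)$ as an input to Lemma \ref{lemma1-CNLS}, i.e.\ verifying it lies in $H^{1,1}(\mathbb{R})$. Since Assumption \ref{assumption-q0} together with Proposition 3.1 of \cite{Liu_2019} gives $\mathbf{R}\in H^{1,1}(\mathbb{R})$, and the scalar denominator $1+|\mathbf{R}|^2$ is bounded below by $1$ and has derivative controlled by $|\mathbf{R}||\mathbf{R}'|$, the product/quotient belongs to $H^{1,1}(\mathbb{R})$ by the usual algebra and weighted-derivative checks. I expect the main (though mild) obstacle to be bookkeeping: assembling the four pieces of $C_{\hat{\mathbf{w}}_1}\mathbb{I}_3$ and matching each to the appropriate instance of Lemma \ref{lemma1-CNLS}, rather than any substantive new estimate. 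Everything else reduces to a one-line application of the resolvent bound \eqref{lp-hat-w1}.
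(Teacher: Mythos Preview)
Your proposal is correct and follows essentially the same route as the paper: write $\hat{\pmb{\mu}}_1-\mathbb{I}_3=(1-C_{\hat{\mathbf{w}}_1})^{-1}C_{\hat{\mathbf{w}}_1}\mathbb{I}_3$, invoke the $L^p$ resolvent bound \eqref{lp-hat-w1}, and control $\|C_{\hat{\mathbf{w}}_1}\mathbb{I}_3\|_{L^p}$ via Lemma~\ref{lemma1-CNLS}. Your added verification that $\mathbf{R}/(1+\mathbf{R}\mathbf{R}^\dagger)\in H^{1,1}(\mathbb{R})$ is a detail the paper leaves implicit but is indeed required for the application of Lemma~\ref{lemma1-CNLS}.
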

\begin{proof}
	The result follows from Lemma \ref{lemma1-CNLS} 
	\begin{equation*}
		\begin{aligned}
			\|\hat{\pmb{\mu}}_1-\mathbb{I}_3\|_{L^p}
			&=\|(1-C_{\hat{\mathbf{w}}_1})^{-1}(C_{\hat{\mathbf{w}}_1}\mathbb{I}_3)\|_{L^p}\\
			&\le \|(1-C_{\hat{\mathbf{w}}_1})^{-1}\|_{L^p(\mathbb{R}_\xi)\rightarrow L^p(\mathbb{R}_\xi)}
			\|C_{\hat{\mathbf{w}}_1}\mathbb{I}_3\|_{L^p}\lesssim t^{-1/(2p)}.
		\end{aligned}
	\end{equation*}
\end{proof}

\begin{lemma}\label{lemma4-CNLS}
	For any $2\leq p<\infty$ and for t sufficiently large,
	\begin{equation*}
		\|C^\pm_{\mathbb{R}_\xi}\hat{\pmb{\mu}}_1(\hat{\mathbf{w}}_1^\mp-\hat{\mathbf{w}}_2^\mp)\|_{L^2}
		\lesssim t^{-\frac{1}{2}+\frac{1}{2p}}.
	\end{equation*}
\end{lemma}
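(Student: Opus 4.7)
The plan is to split $\hat{\pmb{\mu}}_1 = \mathbb{I}_3 + (\hat{\pmb{\mu}}_1 - \mathbb{I}_3)$, so that
\begin{equation*}
C^\pm_{\mathbb{R}_\xi}\hat{\pmb{\mu}}_1(\hat{\mathbf{w}}_1^\mp-\hat{\mathbf{w}}_2^\mp)
= C^\pm_{\mathbb{R}_\xi}(\hat{\mathbf{w}}_1^\mp-\hat{\mathbf{w}}_2^\mp)
+ C^\pm_{\mathbb{R}_\xi}\bigl((\hat{\pmb{\mu}}_1-\mathbb{I}_3)(\hat{\mathbf{w}}_1^\mp-\hat{\mathbf{w}}_2^\mp)\bigr),
\end{equation*}
and to bound each term separately. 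Inspecting the formulas for $\hat{\mathbf{w}}_1^\mp$ and $\hat{\mathbf{w}}_2^\mp$, the entries of $\hat{\mathbf{w}}_1^\mp-\hat{\mathbf{w}}_2^\mp$ have the schematic form $(\det\pmb{\delta})^{\pm 1}\pmb{\delta}^{\pm 1}\,\mathbf{f}(\lambda)\,\ee^{\mp 2\ii t\theta(\lambda)}$, where $\mathbf{f}$ stands for $\mathbf{R}-[\mathbf{R}]$ on $(\xi,+\infty)$ or $\tfrac{\mathbf{R}}{1+\mathbf{R}\mathbf{R}^\dagger}-[\tfrac{\mathbf{R}}{1+\mathbf{R}\mathbf{R}^\dagger}]$ on $(-\infty,\xi)$ (possibly daggered). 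By the definition of $[\,\cdot\,]$ one has $\mathbf{f}(\xi)=0$, and $\mathbf{f}\in H^{1,1}(\mathbb{R})$ under Assumption~\ref{assumption-q0}; hence $\mathbf{f}$ is of the type covered by Lemma~\ref{lemma2-CNLS}.

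For the first term I would further decompose the analytic multiplier as $(\det\pmb{\delta})^{\pm 1}\pmb{\delta}^{\pm 1}=\mathbb{I}_2+\mathbf{g}(\lambda)$, where $\mathbf{g}$ is analytic off $(-\infty,\xi)$ and $\mathbf{g}(\lambda)=\oo(\lambda^{-1})$ at infinity, so $\mathbf{g}\in H^q(\mathbb{C}\setminus\mathbb{R})$ for every $q>1$. Choosing $q=2p$, Lemma~\ref{lemma2-CNLS} applied to each half-line piece of $C^\pm_{\mathbb{R}_\xi}(\mathbf{g}\,\mathbf{f}\,\ee^{\mp 2\ii t\theta})$ gives the bound $t^{-1/2+1/(2p)}$. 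The residual $\mathbb{I}_2$-contribution $C^\pm_{\mathbb{R}_\xi}(\mathbf{f}\,\ee^{\mp 2\ii t\theta})$ is handled by the same lemma after replacing $\mathbb{I}_2$ with an auxiliary $H^{2p}$-function whose boundary value coincides with $\mathbb{I}_2$ on the support of $\mathbf{f}$ (equivalently, by repeating the proof of that lemma with the trivial Hardy datum).

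For the second term I would combine the $L^2$-boundedness of $C^\pm_{\mathbb{R}_\xi}$ with H\"older's inequality at conjugate exponents $(p,r)$, $r=2p/(p-2)$:
\begin{equation*}
\left\|C^\pm_{\mathbb{R}_\xi}\bigl((\hat{\pmb{\mu}}_1-\mathbb{I}_3)(\hat{\mathbf{w}}_1^\mp-\hat{\mathbf{w}}_2^\mp)\bigr)\right\|_{L^2}
\lesssim \|\hat{\pmb{\mu}}_1-\mathbb{I}_3\|_{L^p}\,\|\hat{\mathbf{w}}_1^\mp-\hat{\mathbf{w}}_2^\mp\|_{L^{r}}.
\end{equation*}
Lemma~\ref{lemma3-CNLS} controls the first factor by $t^{-1/(2p)}$. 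For the second factor, I would not invoke a naive $L^r$-estimate (which is $t$-independent) but rather exploit the vanishing $\mathbf{f}(\xi)=0$ and the oscillatory factor $\ee^{\mp 2\ii t\theta}$ via an integration by parts against $\dd\ee^{\mp 2\ii t\theta}/(\mp 2\ii t\theta'(s))$, together with the Sobolev bound $|\mathbf{f}(\lambda)|\lesssim|\lambda-\xi|^{1/2}\|\mathbf{f}\|_{H^{1,1}}$, in the same spirit as the proof of Lemma~\ref{bdd-V3-V4}. This produces an additional factor $t^{-1/2+1/p}$, and multiplication with the first factor yields the required $t^{-1/2+1/(2p)}$.

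\textbf{Main obstacle.} The technically delicate point is the decomposition $(\det\pmb{\delta})^{\pm 1}\pmb{\delta}^{\pm 1}=\mathbb{I}_2+\mathbf{g}$ together with the quantitative verification that $\mathbf{g}\in H^{2p}(\mathbb{C}\setminus\mathbb{R})$. Because $\pmb{\delta}$ is available only implicitly as the solution of the matrix RHP~\eqref{RHP-delta}, this step requires the same auxiliary analysis as in the proof of Lemma~\ref{bdd-V3-V4}: one needs sharp control of $\pmb{\delta}-\mathbb{I}_2$ at infinity and of its boundary values on $(-\infty,\xi)$, with the explicit Plemelj formula~\eqref{def-kappa} for $\det\pmb{\delta}$ and the non-homogeneous RHP~\eqref{non-homogeneous-RHP} satisfied by a $\tilde{\pmb{\delta}}$-type remainder serving as the necessary quantitative input. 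A secondary subtlety is that the oscillation gain for the $L^r$-bound in the second term must be produced without assistance from an analytic prefactor, so the vanishing of $\mathbf{f}$ at $\xi$ has to be fully exploited.
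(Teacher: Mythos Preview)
Your decomposition into two terms is the same as the paper's, but both parts of your execution diverge from it, and the second term has a real gap.

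For the first term $C^\pm_{\mathbb{R}_\xi}(\hat{\mathbf{w}}_1^\mp-\hat{\mathbf{w}}_2^\mp)$ you are working too hard. The multiplier $(\det\pmb{\delta})^{\pm1}\pmb{\delta}^{\pm1}$ is analytic in $\mathbb{C}\setminus(-\infty,\xi)$ and uniformly bounded by \eqref{bdd-delta}, hence it lies in $H^\infty(\mathbb{C}\setminus\mathbb{R})$ outright. Lemma~\ref{lemma2-CNLS} with $q=\infty$ then gives $\|\mathrm{I}\|_{L^2}\lesssim t^{-1/2}$ directly; there is no need to split off $\mathbb{I}_2$, to place a remainder in $H^{2p}$, or to invoke the non-homogeneous RHP~\eqref{non-homogeneous-RHP}. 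The ``main obstacle'' you describe is self-imposed.

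For the second term your plan does not work. After H\"older, you need $\|\hat{\mathbf{w}}_1^\mp-\hat{\mathbf{w}}_2^\mp\|_{L^r(\mathbb{R}_\xi)}$ to decay like $t^{-1/2+1/p}$, but on $\mathbb{R}$ one has $|\ee^{\mp2\ii t\theta}|=1$, so this $L^r$-norm is simply $\asymp\|\mathbf{f}\|_{L^r}$, independent of $t$. Integration by parts in the spirit of Lemma~\ref{bdd-V3-V4} gains powers of $t$ only for \emph{integrals} against the oscillatory factor (or on the rays $\Sigma$ where the exponential actually decays), not for $L^r$-norms on the real axis. The missing structural ingredient is the Beals--Coifman equation $\hat{\pmb{\mu}}_1-\mathbb{I}_3=C_{\hat{\mathbf{w}}_1}\hat{\pmb{\mu}}_1=C^+(\hat{\pmb{\mu}}_1\hat{\mathbf{w}}_1^-)+C^-(\hat{\pmb{\mu}}_1\hat{\mathbf{w}}_1^+)$ together with the nilpotent (strictly triangular) structure of $\hat{\mathbf{w}}_1^\pm$: when multiplied against $(\hat{\mathbf{w}}_1^\mp-\hat{\mathbf{w}}_2^\mp)$ one of the two summands vanishes, and the surviving factor $C^\mp(\hat{\pmb{\mu}}_1\hat{\mathbf{w}}_1^\pm)$ is the boundary value of a genuine Hardy-space function. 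One then reapplies Lemma~\ref{lemma2-CNLS} with $q=p$ and $g=C^\mp(\hat{\pmb{\mu}}_1\hat{\mathbf{w}}_1^\pm)$; Lemmas~\ref{lemma1-CNLS} and~\ref{lemma3-CNLS} give $\|g\|_{H^p}\lesssim t^{-1/(2p)}$, and the product $t^{-1/2+1/p}\cdot t^{-1/(2p)}=t^{-1/2+1/(2p)}$ is exactly the claimed rate. Without this step there is no mechanism to extract $t$-decay from the second term.
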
 
\begin{proof}
	By triangularity and $\hat{\pmb{\mu}}_1=\mathbb{I}_3+C_{\hat{\mathbf{w}}_1}\hat{\pmb{\mu}}_1$,
	\begin{equation*}
		\begin{aligned}
			C^\pm_{\mathbb{R}_\xi}\hat{\pmb{\mu}}_1(\hat{\mathbf{w}}_1^\mp-\hat{\mathbf{w}}_2^\mp)
			&=C^\pm_{\mathbb{R}_\xi}(\hat{\mathbf{w}}_1^\mp-\hat{\mathbf{w}}_2^\mp)
			C^\pm_{\mathbb{R}_\xi}(C_{\hat{\mathbf{w}}_1}\hat{\pmb{\mu}}_1)(\hat{\mathbf{w}}_1^\mp-\hat{\mathbf{w}}_2^\mp)\\
			&=C^\pm_{\mathbb{R}_\xi}(\hat{\mathbf{w}}_1^\mp-\hat{\mathbf{w}}_2^\mp)
			+C^\pm_{\mathbb{R}_\xi}(C^\mp\pmb{\mu}_1\mathbf{w}_1^\pm)(\hat{\mathbf{w}}_1^\mp-\hat{\mathbf{w}}_2^\mp)
			\equiv\mathrm{I}+\mathrm{II}.
		\end{aligned}
	\end{equation*}
	%  and $(\det\pmb{\delta})\pmb{\delta}\in H^\infty(\mathbb{C}\setminus\mathbb{R})$
	By Lemma \ref{lemma2-CNLS}, 
	we obtain that $\|\mathrm{I}\|_{L^2}\lesssim t^{-1/2}$. For $\mathrm{II}$, 
	by Lemma \ref{lemma3-CNLS} and Lemma \ref{lemma1-CNLS},
	\begin{equation*}
		\begin{aligned}
			\|C^\mp_{\mathbb{R}_\xi}\hat{\pmb{\mu}}_1\hat{\mathbf{w}}_1^\pm\|_{L^2}
			&\le\|C^\mp_{\mathbb{R}_\xi}(\hat{\pmb{\mu}}_1-\mathbb{I}_3)\hat{\mathbf{w}}_1^\pm\|_{L^2}+
			\|C^\mp_{\mathbb{R}_\xi} \hat{\mathbf{w}}_1^\pm\|_{L^2}\\
			&\lesssim \|\hat{\pmb{\mu}}_1-\mathbb{I}_3\|_{L^2}+t^{-1/(2p)}
			\lesssim t^{-1/(2p)}.
		\end{aligned}
	\end{equation*}
	Together with Lemma \ref{lemma2-CNLS}, we obtain that 
	$\|\mathrm{II}\|_{L^2}\lesssim t^{-1/2+1/p}t^{-1/2p}\lesssim t^{-\frac{1}{2}+\frac{1}{(2p)}}$. 
	The result follows immediately.
\end{proof}
Next we can show that $\int_{\mathbb{R}_\xi}\hat{\pmb{\mu}}_2 \hat{\mathbf{w}}_2$ is a good enough 
approximation to $\int_{\mathbb{R}_\xi}\hat{\pmb{\mu}}_1 \hat{\mathbf{w}}_1$ for large $t$.

\begin{lemma}\label{bdd-M1-M2}
	For any $2\leq p<\infty$ and for t sufficiently large,
	\begin{equation*}
		\left|\int_{\mathbb{R}_\xi}\hat{\pmb{\mu}}_1 \hat{\mathbf{w}}_1-
		\int_{\mathbb{R}_\xi}\hat{\pmb{\mu}}_2 \hat{\mathbf{w}}_2\right|
		\lesssim t^{{-\frac{3}{4}+\frac{1}{(2p)}}}.
	\end{equation*}
\end{lemma}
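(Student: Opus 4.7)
The plan is to mirror the scheme of Deift--Zhou \cite{deift2002long}, Proposition 4.6, transposed to the $3\times3$ matrix setting of RHP \ref{rhp-M1}. Starting from the algebraic identity
$$\hat{\pmb{\mu}}_1\hat{\mathbf{w}}_1-\hat{\pmb{\mu}}_2\hat{\mathbf{w}}_2 = (\hat{\mathbf{w}}_1-\hat{\mathbf{w}}_2)+(\hat{\pmb{\mu}}_2-\mathbb{I}_3)(\hat{\mathbf{w}}_1-\hat{\mathbf{w}}_2)+(\hat{\pmb{\mu}}_1-\hat{\pmb{\mu}}_2)\hat{\mathbf{w}}_1$$
and integrating over $\mathbb{R}_\xi$, the task splits into three pieces $\mathrm{I}_1,\mathrm{I}_2,\mathrm{I}_3$. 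The key feature throughout is that $\mathbf{R}-[\mathbf{R}]$ vanishes at the stationary point $\xi$ of $\theta$ and still lies in $H^{1,1}(\mathbb{R})$; this is what supplies the extra stationary-phase decay beyond the naive $t^{-1/(2p)}$ rate of Lemma \ref{lemma3-CNLS}.

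For $\mathrm{I}_1$, every block entry of $\hat{\mathbf{w}}_1-\hat{\mathbf{w}}_2$ is of the form $(\mathbf{R}-[\mathbf{R}])(\lambda)\ee^{\pm 2\ii t\theta(\lambda)}$ multiplied by the analytic bounded factors $(\det\pmb{\delta})^{\pm1}$ and $\pmb{\delta}^{\pm1}$. Rescaling $u=\sqrt{t}(\lambda-\xi)$ and using the pointwise bound $|\mathbf{R}(\lambda)-[\mathbf{R}](\lambda)|\lesssim|\lambda-\xi|^{1/2}\|\mathbf{R}\|_{H^{1,1}}$ then yields the classical stationary-phase estimate $|\mathrm{I}_1|\lesssim t^{-3/4}$, which already meets the target exponent. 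For $\mathrm{I}_2$ and $\mathrm{I}_3$, I would extend $\mathbb{R}_\xi$ to the complete contour $\Gamma_\xi$ and decompose each $\hat{\mathbf{w}}_i^\pm$ block-by-block via Plemelj, combined with the complete-contour identity $C_{\Gamma_\xi}^\pm C_{\Gamma_\xi}^\mp=0$ from equation \eqref{complete-contour}. For $\mathrm{I}_3$, the resolvent identity $\hat{\pmb{\mu}}_1-\hat{\pmb{\mu}}_2 = (1-C_{\hat{\mathbf{w}}_1})^{-1}(C_{\hat{\mathbf{w}}_1}-C_{\hat{\mathbf{w}}_2})\hat{\pmb{\mu}}_2$, together with Lemma \ref{lemma4-CNLS} transplanted to $\hat{\pmb{\mu}}_2$ and the $L^p$-resolvent bound \eqref{lp-hat-w1}, gives $\|\hat{\pmb{\mu}}_1-\hat{\pmb{\mu}}_2\|_{L^2}\lesssim t^{-1/2+1/(2p)}$; pairing with $C^\mp\hat{\mathbf{w}}_1^\pm$, which is $O(t^{-1/(2p')})$ in $L^2$ by Lemma \ref{lemma1-CNLS} for the conjugate exponent $p'$, then delivers $|\mathrm{I}_3|\lesssim t^{-3/4+1/(2p)}$. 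For $\mathrm{I}_2$, the analogous argument with $\|\hat{\pmb{\mu}}_2-\mathbb{I}_3\|_{L^2}\lesssim t^{-1/4}$ from Lemma \ref{lemma3-CNLS} and $\|C^\mp(\hat{\mathbf{w}}_1^\pm-\hat{\mathbf{w}}_2^\pm)\|_{L^2}\lesssim t^{-1/2}$ from Lemma \ref{lemma2-CNLS} produces the even better bound $|\mathrm{I}_2|\lesssim t^{-3/4}$.

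The main obstacle is orienting the Plemelj splittings and duality pairings correctly. The triangular blocks $\hat{\mathbf{w}}_i^-$ carry the oscillatory factor $\ee^{-2\ii t\theta}$, while $\hat{\mathbf{w}}_i^+$ carries $\ee^{+2\ii t\theta}$, and only one of $C^+$ or $C^-$ applied to each produces a genuinely decaying Cauchy transform as in Lemma \ref{lemma1-CNLS}; the other side returns only an $O(1)$ bound in $t$ and forfeits all stationary-phase gain. Each of the four cross terms in the bilinear integrals must therefore be steered through its correct dual, and the block index bookkeeping for the $3\times 3$ matrices must keep track of which block of $\hat{\mathbf{w}}_i^\pm$ pairs with which Cauchy transform so that Lemmas \ref{lemma1-CNLS}--\ref{lemma4-CNLS} apply in exactly the right form to yield $t^{-3/4+1/(2p)}$ rather than the weaker $t^{-1/2+1/(2p)}$ that a direct Cauchy--Schwarz would produce.
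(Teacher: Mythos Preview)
Your plan is essentially the paper's proof: the same three-term decomposition $\mathrm{I}+\mathrm{II}+\mathrm{III}$, the same stationary-phase integration-by-parts for the first term, and the same use of triangularity together with the complete-contour identity \eqref{complete-contour} and Lemmas \ref{lemma1-CNLS}--\ref{lemma4-CNLS} for the other two. The only cosmetic difference is that you write the resolvent identity as $\hat{\pmb{\mu}}_1-\hat{\pmb{\mu}}_2=(1-C_{\hat{\mathbf{w}}_1})^{-1}C_{\hat{\mathbf{w}}_1-\hat{\mathbf{w}}_2}\hat{\pmb{\mu}}_2$, whereas the paper uses $(1-C_{\hat{\mathbf{w}}_2})^{-1}C_{\hat{\mathbf{w}}_1-\hat{\mathbf{w}}_2}\hat{\pmb{\mu}}_1$; both are valid and lead to the same estimates once you expand one step further and split into $C^{\pm}$ pieces before pairing.

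One small inaccuracy: in your $\mathrm{I}_3$ step you invoke Lemma \ref{lemma1-CNLS} ``for the conjugate exponent $p'$'' to get $\|C^\mp\hat{\mathbf{w}}_1^\pm\|_{L^2}=O(t^{-1/(2p')})$. For $p>2$ the H\"older conjugate $p'<2$ falls outside the range of Lemma \ref{lemma1-CNLS}, and in any case no conjugate exponent is needed here. The paper simply takes Lemma \ref{lemma1-CNLS} at $p=2$, which gives $\|C^\mp\hat{\mathbf{w}}_1^\pm\|_{L^2}\lesssim t^{-1/4}$; paired with the $t^{-1/2+1/(2p)}$ bound from Lemma \ref{lemma4-CNLS} this yields exactly $t^{-3/4+1/(2p)}$. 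With that correction your argument is identical to the paper's.
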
 
A proof of Lemma \ref{bdd-M1-M2} is provided in Appendix A for the sake of completeness.

Set $\mathbf{w}_{3}=(\mathbf{w}^{-}_3=0,\mathbf{w}^{+}_3=\V^{(3)}-\mathbb{I}_3)$ and 
$\mathbf{w}_{4}=(\mathbf{w}^{-}_4=0,\mathbf{w}^{+}_4=\V^{(4)}-\mathbb{I}_3)$ 
on $\Sigma$. Then we will show that $\int_\Sigma\pmb{\mu}_4\mathbf{w}_4$ is a good enough 
approximation to $\int_\Sigma\pmb{\mu}_3\mathbf{w}_3$ for large $t$.

\begin{lemma}\label{bdd-M3-M4}
	For t sufficiently large,
	\begin{equation*}
		\left|\int_{\Sigma}\pmb{\mu}_3 \pmb{w}_3-\int_{\Sigma}\pmb{\mu}_4 \pmb{w}_4\right|
		\lesssim t^{{-\frac{3}{4}}}.
	\end{equation*}
\end{lemma}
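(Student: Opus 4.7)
The plan is to split the difference as
\begin{equation*}
\pmb{\mu}_3\mathbf{w}_3-\pmb{\mu}_4\mathbf{w}_4=(\pmb{\mu}_3-\pmb{\mu}_4)\mathbf{w}_3+\pmb{\mu}_4(\mathbf{w}_3-\mathbf{w}_4),
\end{equation*}
and bound each piece in $L^1(\Sigma)$. The key analytic input is Lemma~\ref{bdd-V3-V4}, which at $p=1,2,\infty$ yields
\begin{equation*}
\|\mathbf{w}_3-\mathbf{w}_4\|_{L^1(\Sigma)}\lesssim t^{-3/4},\quad \|\mathbf{w}_3-\mathbf{w}_4\|_{L^2(\Sigma)}\lesssim t^{-1/2},\quad \|\mathbf{w}_3-\mathbf{w}_4\|_{L^\infty(\Sigma)}\lesssim t^{-1/4}.
\end{equation*}
I will also use the uniform $L^2\to L^2$ boundedness of $(1-C_{\mathbf{w}_3})^{-1}$ and $(1-C_{\mathbf{w}_4})^{-1}$ established above, together with the Gaussian decay $|\ee^{\pm 2\ii t\theta(\lambda)}|=\ee^{-2t|\lambda-\xi|^2}$ along each ray of $\Sigma$ and the boundedness of $\det\pmb{\delta}$, $\pmb{\delta}$, and $\pmb{\delta}_0^{\pm 1}$ on $\Sigma$, which give $\|\mathbf{w}_3\|_{L^2(\Sigma)}\lesssim t^{-1/4}$ and
\begin{equation*}
\|\pmb{\mu}_4-\mathbb{I}_3\|_{L^2(\Sigma)}\le\|(1-C_{\mathbf{w}_4})^{-1}\|_{L^2\to L^2}\|\mathbf{w}_4\|_{L^2(\Sigma)}\lesssim t^{-1/4}.
\end{equation*}

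For the second piece I would write $\pmb{\mu}_4(\mathbf{w}_3-\mathbf{w}_4)=(\mathbf{w}_3-\mathbf{w}_4)+(\pmb{\mu}_4-\mathbb{I}_3)(\mathbf{w}_3-\mathbf{w}_4)$. The first summand integrates to at most $\|\mathbf{w}_3-\mathbf{w}_4\|_{L^1}\lesssim t^{-3/4}$, while Cauchy--Schwarz on the second gives
\begin{equation*}
\left|\int_\Sigma(\pmb{\mu}_4-\mathbb{I}_3)(\mathbf{w}_3-\mathbf{w}_4)\right|\le\|\pmb{\mu}_4-\mathbb{I}_3\|_{L^2}\|\mathbf{w}_3-\mathbf{w}_4\|_{L^2}\lesssim t^{-1/4}\cdot t^{-1/2}=t^{-3/4}.
\end{equation*}

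For the first piece the resolvent identity produces $\pmb{\mu}_3-\pmb{\mu}_4=(1-C_{\mathbf{w}_3})^{-1}C_{\mathbf{w}_3-\mathbf{w}_4}\pmb{\mu}_4$. Expanding $\pmb{\mu}_4=\mathbb{I}_3+(\pmb{\mu}_4-\mathbb{I}_3)$ inside $C_{\mathbf{w}_3-\mathbf{w}_4}$ and applying the uniform boundedness of the Cauchy operators $C^\pm$ together with the estimates above yields
\begin{equation*}
\|\pmb{\mu}_3-\pmb{\mu}_4\|_{L^2(\Sigma)}\lesssim\|\mathbf{w}_3-\mathbf{w}_4\|_{L^2}+\|\pmb{\mu}_4-\mathbb{I}_3\|_{L^2}\|\mathbf{w}_3-\mathbf{w}_4\|_{L^\infty}\lesssim t^{-1/2}.
\end{equation*}
One final Cauchy--Schwarz against $\|\mathbf{w}_3\|_{L^2}\lesssim t^{-1/4}$ then yields a contribution of size $t^{-3/4}$ from the first piece, completing the bound.

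The real obstacle in this scheme is not the present bookkeeping but Lemma~\ref{bdd-V3-V4} itself, which demands control of $\pmb{\delta}$ without a closed-form Plemelj expression and so has to invoke the non-homogeneous RHP for $\tilde{\pmb{\delta}}$. Once that jump-matrix estimate is in hand, the resolvent-identity argument sketched above is routine and the powers of $t$ add up to give exactly the announced rate $t^{-3/4}$.
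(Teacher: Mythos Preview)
Your proposal is correct and follows essentially the same approach as the paper: the same three-term decomposition $(\mathbf{w}_3-\mathbf{w}_4)+(\pmb{\mu}_4-\mathbb{I}_3)(\mathbf{w}_3-\mathbf{w}_4)+(\pmb{\mu}_3-\pmb{\mu}_4)\mathbf{w}_3$, the same use of Lemma~\ref{bdd-V3-V4} at $p=1,2,\infty$, the same resolvent identity for $\pmb{\mu}_3-\pmb{\mu}_4$, and the same Cauchy--Schwarz bookkeeping. The only cosmetic difference is that the paper bounds $\|\pmb{\mu}_4-\mathbb{I}_3\|_{L^2}$ directly via $\|C^-_\Sigma(\V^{(4)}-\mathbb{I}_3)\|_{L^2}$ rather than invoking $(1-C_{\mathbf{w}_4})^{-1}$, but the outcome is identical.
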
 

\begin{proof}
	We decompose the estimate into three parts as follows:
	\begin{equation}
		\begin{aligned}
			\int_{\Sigma}\pmb{\mu}_3 \pmb{w}_3-\int_{\Sigma}\pmb{\mu}_4 \pmb{w}_4
			& =\int_{\Sigma} \pmb{w}_3-\pmb{w}_4+
			\int_{\Sigma}(\pmb{\mu}_4-\mathbb{I}_{3})(\pmb{w}_3-\pmb{w}_4)+\int_{\Sigma}(\pmb{\mu}_3-\pmb{\mu}_4)\pmb{w}_3 \\
			&\equiv\mathrm{I}+\mathrm{II}+\mathrm{III}.
		\end{aligned}
	\end{equation}
	For $\mathrm{I}$, by the Lemma \ref{bdd-V3-V4},
	\begin{equation*}
		|\mathrm{I}|\le\left\|\V^{(3)}-\V^{(4)}\right\|_{L^1(\Sigma)}\lesssim t^{-3/4}.
	\end{equation*}
	For $\mathrm{II}$, we have
	\begin{equation*}
		\|\pmb{\mu}_4-\mathbb{I}_{3}\|_{L^2(\Sigma)}=\|C^-_{\Sigma}(\V^{(4)}-\mathbb{I}_3)\|_{L^2(\Sigma)}
		\lesssim\|\V^{(4)}-\mathbb{I}_3\|_{L^2(\Sigma)}.
	\end{equation*}
	we consider $\lambda=\xi+ue^{-\ii\pi/4}\in\Sigma_4$ only, the situation 
	on the other rays in $\Sigma$ is similar. 
	By $|\pmb{\delta}_0|=|\ee^{\ii\kappa(\xi)\ln(\lambda-\xi)}|=|\ee^{\kappa(\xi)\pi/4}|\lesssim1$, 
	\begin{equation*}
		\|\V^{(4)}-\mathbb{I}_3\|_{L^2(\Sigma_4)}\lesssim(\int_{0}^{+\infty}\ee^{-2tu^2}\dd u)^{1/2}
		\lesssim t^{-1/4}.
	\end{equation*}
	Therefore 
	$\|\pmb{\mu}_4-\mathbb{I}_{3}\|_{L^2(\Sigma)}\lesssim\|\V^{(4)}-\mathbb{I}_3\|_{L^2(\Sigma)}\lesssim t^{-1/4}$. 
	Then 
	\begin{equation*}
		\begin{aligned}
			|\mathrm{II}|
			&\le\|\pmb{\mu}_4-\mathbb{I}_{3}\|_{L^2(\Sigma)}\|\pmb{w}_3-\pmb{w}_4\|_{L^2(\Sigma)}\\
			&\lesssim t^{-1/4}\left\|\V^{(3)}-\V^{(4)}\right\|_{L^2(\Sigma)}\\
			&\lesssim t^{-3/4}.
		\end{aligned}
	\end{equation*}
	For $\mathrm{III}$, we compute
	\begin{equation*}
		\begin{aligned}
			\pmb{\mu}_3-\pmb{\mu}_4&=(1-C_{\pmb{w}_3})^{-1}C_{\pmb{w}_3-\pmb{w}_4}\pmb{\mu}_4\\
			&=(1-C_{\pmb{w}_3})^{-1}C_{\Sigma}^-\pmb{\mu}_4(\V^{(3)}-\V^{(4)}),
		\end{aligned}
	\end{equation*}
	then by Lemma \ref{bdd-V3-V4} 
	\begin{equation*}
		\begin{aligned}
			\|\pmb{\mu}_3-\pmb{\mu}_4\|_{L^2(\Sigma)}
			&\lesssim\|\pmb{\mu}_4(\V^{(3)}-\V^{(4)})\|_{L^2(\Sigma)}\\
			&\lesssim\|(\pmb{\mu}_4-\mathbb{I}_3)(\V^{(3)}-\V^{(4)})\|_{L^2(\Sigma)}
			+\|\V^{(3)}-\V^{(4)}\|_{L^2(\Sigma)}\\
			&\lesssim\|(\pmb{\mu}_4-\mathbb{I}_3)\|_{L^2(\Sigma)}
			\|\V^{(3)}-\V^{(4)}\|_{L^\infty(\Sigma)}+t^{-1/2}\\
			&\lesssim t^{-1/4}t^{-1/4}+t^{-1/2}\lesssim t^{-1/2}.
		\end{aligned}
	\end{equation*}
	Therefore by $\|\V^{(3)}-\mathbb{I}_3\|_{L^2(\Sigma_4)}\lesssim t^{-1/4}$ 
	as $\|\V^{(4)}-\mathbb{I}_3\|_{L^2(\Sigma_4)}$,
	\begin{equation*}
		|\mathrm{III}|\le\|\pmb{\mu}_3-\pmb{\mu}_4\|_{L^2(\Sigma)}\|\pmb{w}_3\|_{L^2(\Sigma)}
		\lesssim t^{-1/2}t^{-1/4}\lesssim t^{-3/4}.
	\end{equation*}
	Adding the estimates for $\mathrm{I}$, $\mathrm{II}$, and $\mathrm{III}$, the result follows.
\end{proof}

By Beals and Coifman \cite{beals1984scattering}, we know that 
\begin{equation*}
	\M^{(4)}(x,t)=\mathbb{I}_{3}+C_{\Sigma}(\pmb{\mu}_4(\pmb{w}^-_4 + \pmb{w}^+_4))
\end{equation*}
solves RHP \ref{rhp-M4}, where $\pmb{\mu}_4$ is the solution of 
$(1-C_{\pmb{w}_4})\pmb{\mu}_4=\mathbb{I}_{3}$. Thus, we can obtain the expansion of 
$\M^{(4)}(\lambda;x,t)$
\begin{equation}\label{residue-M4}
	\begin{aligned}
		&\M^{(4)}(\lambda;x,t)=\mathbb{I}_3+
		\frac{\M_1^{(4)}(x,t)}{\lambda}+\oo(\lambda^{-2}),\ \ \lambda\to\infty\\
		&\M_1^{(4)}(x,t)=\frac{\ii}{2\pi}\int_{\Sigma}\pmb{\mu}_4 \pmb{w}_4.
	\end{aligned}
\end{equation}
But the RHP \ref{rhp-M4} can be solved explicitly in the terms of parabolic 
cylinder function, which appears frequently in the literature of long-time asymptotic calculations for 
integrable nonlinear waves \cite{dieng2008long,grunert2009long,jenkins2017global,liujiaqi2018long}. The jump matrices \eqref{jump-parabolic-cylinder-model} 
are exactly the jump matrices of the parabolic cylinder model problem 
which was first introduced and solved in \cite{its1981asymptotics}, and later applied to the mKdV equation in \cite{deift1993steepest}.
Set $y=\varphi(\lambda)=2\sqrt{t}(\lambda-\xi)$ and 
$\M^{(5)}(y;x,t)=\M^{(4)}(\varphi^{-1}(y);x,t)$, 
it follows RHP 3.4 that $\M^{(5)}(y):=\M^{(5)}(y;x,t)$ satisfies the following RHP.

\begin{rhp}\label{solvable-RHP-model}
	Find an analytic function $\M^{(5)}$: 
	$\mathbb{C}\setminus\Sigma^{\prime}\rightarrow SL_{2}(\mathbb{C}),\Sigma^{\prime}=\cup_{k=1}^{4}\ee^{\frac{(2k-1)\pi\ii}{4}}\mathbb{R}_+$ 
	with the following properties
	\begin{itemize}
		\item[1.] $\M^{(5)}(y)=\mathbb{I}_{3}+\mathcal{O}\left(y^{-1}\right)$ as $y\rightarrow\infty.$
		\item[2.] For each $y\in\Sigma^\prime$ ,
		$\M^{(5)}$ takes continuous boundary values $\M^{(5)}_{\pm}(y)$ which satisfy 
		the jump relation: $\M_+^{(5)}(y)=\M_-^{(5)}(y)\V^{(5)}(y)$ where
		\begin{equation*}
			\V^{(5)}(y)=
			\begin{cases}
				\begin{pmatrix}
					1 & 0\\
					-\eta^{-2} y^{-2\ii\kappa(\xi)} 
					\ee^{\frac{1}{2}\ii y^2} 
					\pmb{A}^\dagger(\xi) & \mathbb{I}_2
				\end{pmatrix},  
				&y \in \ee^{\frac{\ii\pi}{4}}\mathbb{R}_+,\\
				\begin{pmatrix}
					1 & -\eta^2 y^{2\ii\kappa(\xi)} \ee^{-\frac{1}{2}\ii y^2}
					\frac{\pmb{A}(\xi)}{1+\mathbf{R}\mathbf{R}^\dagger(\xi)}\\
					0 & \mathbb{I}_2
				\end{pmatrix},
				&y \in \ee^{-\frac{\ii\pi}{4}}\mathbb{R}_-,\\
				\begin{pmatrix}
					1 & 0\\
					-\eta^{-2}y^{-2\ii\kappa(\xi)} \ee^{\frac{1}{2}\ii y^2}
					\frac{\pmb{A}^\dagger(\xi)}{1+\mathbf{R}\mathbf{R}^\dagger(\xi)} & \mathbb{I}_2
				\end{pmatrix},
				&y \in \ee^{\frac{\ii\pi}{4}}\mathbb{R}_-,\\
				\begin{pmatrix}
					1 & -\eta^2 y^{2\ii\kappa(\xi)} \ee^{-\frac{1}{2}\ii y^2}
					\pmb{A}(\xi)\\
					0 & \mathbb{I}_2
				\end{pmatrix},  
				&y \in \ee^{-\frac{\ii\pi}{4}}\mathbb{R}_+,
			\end{cases}
		\end{equation*}
		where $\eta=(2\sqrt{t})^{-\ii\kappa(\xi)}\ee^{\ii t\xi^2}\ee^{-\beta(\xi,\xi)-\ii\kappa(\xi)}$.
	\end{itemize}
\end{rhp}

\subsection{Solving model problem}
Since it is a standardized procedure to solve a 
$2\times2$ model RHP \cite{deift1993steepest}, we consider reducing RHP \ref{solvable-RHP-model} to 
a $2\times2$ model RHP. We set 
$-\pmb{A}^\dagger(\xi)=[a,b]^{\top}$ and then define
\begin{equation}\label{decrease-dimension-weber-equ-B}
	\mathbf{A}=\begin{pmatrix}
		a^{*} & b^{*}\\
		-b & a
	\end{pmatrix}|\pmb{A}(\xi)|^{-1},\quad
	\mathbf{B}=\begin{pmatrix}
		1 & 0\\
		0 & \mathbf{A}
	\end{pmatrix}.
\end{equation}
Clearly, $\mathbf{A}\mathbf{A}^\dagger=\mathbb{I}_2$ and 
$\mathbf{B}\mathbf{B}^\dagger=\mathbb{I}_3$. Let 
$\M^{(6)}(y):=\M^{(6)}(y;x,t)=\mathbf{B}\M^{(5)}(y)\mathbf{B}^{-1}$ and then 
$\M^{(6)}(y)$ satisfies the following RHP.

\begin{rhp}\label{rhp-M6}
	Find an analytic function $\M^{(6)}$: 
	$\mathbb{C}\setminus\Sigma^{\prime}\rightarrow SL_{2}(\mathbb{C}),\Sigma^{\prime}=\cup_{k=1}^{4}\ee^{\frac{(2k-1)\pi}{4}}\mathbb{R}_+$ 
	with the following properties
	\begin{itemize}
		\item[1.] $\M^{(6)}(y)=\mathbb{I}_{3}+\mathcal{O}\left(y^{-1}\right)$ as $y\rightarrow\infty.$
		\item[2.] For each $y\in\Sigma^\prime$ ,
		$\M^{(6)}$ takes continuous boundary values $\M^{(6)}_{\pm}(y)$ which satisfy 
		the jump relation: $\M_+^{(6)}(y)=\M_-^{(6)}(y)\V^{(6)}(y)$ where
		\begin{equation*}
			\V^{(6)}(y)=
			\begin{cases}
				\begin{pmatrix}
					1 & 0 & 0\\
					\eta^{-2} y^{-2\ii\kappa(\xi)} \ee^{\frac{1}{2}\ii y^2} 
					|\pmb{A}^\dagger(\xi)| & 1 & 0\\
					0 & 0 & 1
				\end{pmatrix},  
				&y \in \ee^{\frac{\ii\pi}{4}}\mathbb{R}_+,\\
				\begin{pmatrix}
					1 & \eta^2 y^{2\ii\kappa(\xi)} 
					\ee^{-\frac{1}{2}\ii y^2} 
					\frac{|\pmb{A}(\xi)|}{1+|\mathbf{R}^\dagger(\xi)|^2} & 0\\
					0 & 1 & 0\\
					0 & 0 & 1
				\end{pmatrix}, 
				&y \in \ee^{-\frac{\ii\pi}{4}}\mathbb{R}_-,\\
				\begin{pmatrix}
					1 & 0 & 0\\
					\eta^{-2} y^{-2\ii\kappa(\xi)} \ee^{\frac{1}{2}
						\ii y^2} \frac{|\pmb{A}^\dagger(\xi)|}{1+|\mathbf{R}^\dagger(\xi)|^2} & 1 & 0\\
					0 & 0 & 1
				\end{pmatrix},
				&y \in \ee^{\frac{\ii\pi}{4}}\mathbb{R}_-,\\
				\begin{pmatrix}
					1 & \eta^2 y^{2\ii\kappa(\xi)} \ee^{-\frac{1}{2}\ii y^2} 
					|\pmb{A}(\xi)| & 0\\
					0 & 1 & 0\\
					0 & 0 & 1
				\end{pmatrix}, 
				&y \in \ee^{-\frac{\ii\pi}{4}}\mathbb{R}_+,
			\end{cases}
		\end{equation*}
		where $\eta=(2\sqrt{t})^{-\ii\kappa(\xi)}\ee^{\ii t\xi^2}\ee^{-\beta(\xi,\xi)-\ii\kappa(\xi)}$.
	\end{itemize}
\end{rhp}

Next, we can transform RHP \ref{rhp-M6} into a $2\times2$ model RHP and 
$\M^{(6)}(y)$ needs to be rewritten in the following block form
\begin{equation*}
	\M^{(6)}(y)=
	\begin{pmatrix}
		\mathbf{N}(y) & \M^{(6)}_{12}(y)\\
		\M^{(6)}_{21}(y) & \M^{(6)}_{22}(y)
	\end{pmatrix},
\end{equation*}
where $\M^{(6)}_{22}(y)$ is scalar.
It follows from RHP \ref{rhp-M6} that 
$\M^{(6)}_{12}(y)=0$, $\M^{(6)}_{21}(y)=0$, 
$\M^{(6)}_{22}(y)=1$ and $\mathbf{N}(y)$ satisfies the following RHP.

\begin{rhp}\label{rhp-N}
	Find an analytic function $N$: 
	$\mathbb{C}\setminus\Sigma^{\prime}\rightarrow SL_{2}(\mathbb{C}),\Sigma^{\prime}=\bigcup_{k=1}^{4}\ee^{\frac{(2k-1)\pi\ii}{4}}\mathbb{R}_+$ 
	with the following properties
	\begin{itemize}
		\item[1.] $\mathbf{N}(y)=\mathbb{I}_{2}+\mathcal{O}\left(y^{-1}\right)$ as $y\rightarrow\infty.$
		\item[2.] For each $y\in\Sigma^\prime$ ,
		$N$ takes continuous boundary values $N_{\pm}(y)$ which satisfy 
		the jump relation: $N_+^{(6)}(y)=N_-^{(6)}(y)\V_N(y)$ where
		\begin{equation*}
			\V_N(y)=
			\begin{cases}
				\begin{pmatrix}
					1 & 0 \\
					\eta^{-2} y^{-2\ii\kappa(\xi)} \ee^{\frac{1}{2}\ii y^2}
					 |\pmb{A}^\dagger(\xi)| & 1 
				\end{pmatrix},  
				&y \in \ee^{\frac{\ii\pi}{4}}\mathbb{R}_+,\\
				\begin{pmatrix}
					1 & \eta^2 y^{2\ii\kappa(\xi)} 
					\ee^{-\frac{1}{2}\ii y^2} \frac{|\pmb{A}(\xi)|}{1+|\mathbf{R}^\dagger(\xi)|^2} \\
					0 & 1 
				\end{pmatrix}, 
				&y \in \ee^{-\frac{\ii\pi}{4}}\mathbb{R}_-,\\
				\begin{pmatrix}
					1 & 0 \\
					\eta^{-2} y^{-2\ii\kappa(\xi)} 
					\ee^{\frac{1}{2}\ii y^2} \frac{|\pmb{A}^\dagger(\xi)|}{1+|\mathbf{R}^\dagger(\xi)|^2} & 1
				\end{pmatrix},
				&y \in \ee^{\frac{\ii\pi}{4}}\mathbb{R}_-,\\
				\begin{pmatrix}
					1 & \eta^2 y^{2\ii\kappa(\xi)} 
					\ee^{-\frac{1}{2}\ii y^2} |\pmb{A}(\xi)| \\
					0 & 1
				\end{pmatrix}, 
				&y \in \ee^{-\frac{\ii\pi}{4}}\mathbb{R}_+,
			\end{cases}
		\end{equation*}
		where $\eta=(2\sqrt{t})^{-\ii\kappa(\xi)}\ee^{\ii t\xi^2}\ee^{-\beta(\xi,\xi)-\ii\kappa(\xi)}$.
	\end{itemize}
\end{rhp}
The solution $\mathbf{N}$ is described in Appendix B and we can obtain the expansion of $\mathbf{N}$ 
\begin{equation}
	\mathbf{N}(y)=\mathbb{I}_2+\frac{\mathbf{N}_1}{y}+\oo(y^{-2}),
\end{equation}
where, using equations \eqref{relation-beta-N}, \eqref{def-beta} and \eqref{relation-b12-b21}, 
we have 
\begin{equation}
	\mathbf{N}_1=\begin{pmatrix}
		0&-\ii\eta^{2}\beta_{12}\\
		\ii\eta^{-2}\beta_{21}&0
	\end{pmatrix},\,
	\beta_{12}=
	\frac{|\pmb{A}(\xi)|}{\sqrt{2\pi}}\ee^{\frac{\pi\kappa_{\xi}}{2}+
		\frac{5\pi \ii}{4}}\kappa_{\xi}\Gamma(\ii\kappa_{\xi}),\,
	\beta_{21}=-\beta_{12}^*.
\end{equation}
Then we can also obtain the expansion of $\M^{(6)}$
\begin{equation}
	\begin{aligned}
		&\M^{(6)}(y)=\mathbb{I}_3+\frac{\M^{(6)}_{1}}{y}+\oo(y^{-2}),
		&\M^{(6)}_{1}=\begin{pmatrix}
			\mathbf{N}_1 & 0\\
			0 & 0
		\end{pmatrix}.
	\end{aligned}
\end{equation}
It follows from $\M^{(6)}(y)=\mathbf{B}\M^{(5)}(y)\mathbf{B}^{-1}$ that 
we can deduce the expansion of $\M^{(5)}$
\begin{equation}\label{M5_1}
	\M^{(5)}(y)=\mathbb{I}_3+\frac{\M^{(5)}_{1}}{y}+\oo(y^{-2}),
	\quad \M^{(5)}_{1}=\mathbf{B}^{-1}\M^{(6)}_1\mathbf{B}.
\end{equation}
Actually, our ultimate goal is to recover the soluiton $\mathbf{q}(x,t)$ for 
CNLS equation \eqref{CNLS} by the recovery formula \eqref{matrix-form-T1}. Therefore, 
we focus on the behavior of $\left(\M^{(5)}_{1}\right)_{12}$
\begin{equation}
	\left(\M^{(5)}_{1}\right)_{12}=[-\ii\eta^2\beta_{12},0]\mathbf{A}=
	\frac{\eta^2e^{\frac{\pi\kappa_{\xi}}{2}-
			\frac{\pi \ii}{4}}\kappa_{\xi}\Gamma
			(\ii\kappa_{\xi})}{\sqrt{2\pi}}\pmb{A}(\xi).
\end{equation}
By the definition of rescaled variable $y=2\sqrt{t}(\lambda-\xi)$, we can obtain the expansion 
of $\M^{(4)}(\lambda)$
\begin{equation}
	\begin{aligned}
		\M^{(4)}(\lambda)=\M^{(5)}(y)
		&=\mathbb{I}_3+\frac{\M^{(5)}_1}{y}+\mathcal{O}(y^{-2}),&y\rightarrow\infty,\\
		&=\mathbb{I}_3+\frac{\M^{(5)}_1}{2\sqrt{t}\lambda}+\mathcal{O}(\lambda^{-2})
		,&\lambda\rightarrow\infty.
	\end{aligned}
\end{equation}
Together with equation \eqref{residue-M4}, we find
\begin{equation}\label{M4_1-M5_1}
	\M_1^{(4)}(x,t)=\frac{\ii}{2\pi}\int_{\Sigma}\pmb{\mu}_4 \pmb{w}_4=\frac{1}{2\sqrt{t}}\M^{(5)}_1(x,t).
\end{equation}
Inverting the above sequence of transformations, we have 
\begin{equation}\label{inverse-rhp}
	\M(\lambda)=\M^{(1)}(\lambda)\pmb{\Delta}(\lambda),\quad
	\M^{(2)}(\lambda)=\M^{(3)}(\lambda)\mathbf{\Xi}(\lambda),\quad
	\M^{(4)}(\lambda)=\mathbf{B}^{-1}
	\begin{pmatrix}
		\mathbf{N}(\varphi(\lambda))&0\\
		0&1
	\end{pmatrix}\mathbf{B},
\end{equation}
and then, by Lemma \ref{bdd-M1-M2}, 
\begin{equation}
	\begin{aligned}
		2(\M_1)_{12}&=2\left(\hat{\M}^{(1)}_1\right)_{12}
		=\left(\frac{\ii}{\pi}\int_{\mathbb{R}_\xi}\hat{\pmb{\mu}}_1 \hat{w}_1\right)_{12}\\
		&=\left(\frac{\ii}{\pi}\int_{\mathbb{R}_\xi}\hat{\pmb{\mu}}_2 \hat{w}_2\right)_{12}+\mathcal{O}(t^{-\frac{3}{4}+\frac{1}{(2p)}})\\
		&=2\left(\M^{(2)}_1\right)_{12}+\mathcal{O}(t^{-\frac{3}{4}+\frac{1}{(2p)}}).
	\end{aligned}
\end{equation}
Since $\mathbf{\Xi}(\lambda)=\mathbb{I}_3$ in region $\Omega_2$, we obtain 
$\M^{(3)}=\M^{(2)}\mathbf{\Xi}=\M^{(2)}=\hat{\M}^{(2)}$ which implies 
\begin{equation}
	\begin{aligned}
		2(\M_1)_{12}
		&=2\left(\M^{(3)}_1\right)_{12}+\mathcal{O}(t^{-\frac{3}{4}+\frac{1}{(2p)}})\\
		&=\left(\frac{\ii}{\pi}\int_{\Sigma}\pmb{\mu}_3 \pmb{w}_3\right)_{12}+\mathcal{O}(t^{-\frac{3}{4}+\frac{1}{(2p)}})\\
		&=\left(\frac{\ii}{\pi}\int_{\Sigma}\pmb{\mu}_4 \pmb{w}_4\right)_{12}
		+\mathcal{O}(t^{-\frac{3}{4}+\frac{1}{(2p)}}),
	\end{aligned}
\end{equation}
by Lemma \ref{bdd-M3-M4}.
Together with equations \eqref{residue-M4}, \eqref{M5_1} and \eqref{M4_1-M5_1}, we obtain
\begin{equation}
	2(\M_1)_{12}
	=\frac{\ee^{\frac{\pi\kappa_{\xi}}{2}-
			\frac{\pi \ii}{4}}\eta^2\kappa_{\xi}
			\Gamma(\ii\kappa_{\xi})\pmb{A}(\xi)}{\sqrt{2\pi t}}+\mathcal{O}(t^{-\frac{3}{4}+\frac{1}{(2p)}}).
\end{equation}
Hence, by recovery formula \eqref{matrix-form-T1}, we have 
\begin{equation}\label{recovery-formula2}
	\mathbf{q}(x,t)
	=\frac{\ee^{\frac{\pi\kappa_{\xi}}{2}-
			\frac{\pi \ii}{4}}\eta^2\kappa_{\xi}\Gamma
			(\ii\kappa_{\xi})\pmb{A}(\xi)}{\sqrt{2\pi t}}
	-2\mathbf{X}_1\mathbf{G}^{-1}(\mathbf{X}_2)^\dagger+\mathcal{O}(t^{-\frac{3}{4}+\frac{1}{(2p)}}).
\end{equation}
But for matrices $\mathbf{X}$ and $\mathbf{G}$, we must obtain $\M(\lambda;x,t)$ evaluated at $\lambda=\lambda_i$.
It follows from equation \eqref{inverse-rhp} that
\begin{equation}
	\begin{aligned}
		&\M(\lambda_i)=\M^{(1)}(\lambda_i)\pmb{\Delta}(\lambda_i),\quad
		\M^{(2)}(\lambda_i)=\M^{(3)}(\lambda_i)\mathbf{\Xi}(\lambda_i),\quad\\
		&\M^{(4)}(\lambda_i)=B^{-1}
		\begin{pmatrix}
			\mathbf{N}(y_i)&0\\
			0&1
		\end{pmatrix}B.
	\end{aligned}
\end{equation}
where $y_i=2\sqrt{t}(\lambda_i-\xi)$. The solution $\mathbf{N}(y)$ can be  given by
\begin{equation}
	\mathbf{N}(y)=\eta^{\pmb{\sigma}_3}\pmb{\Psi}(y)
	y^{-\ii\kappa_\xi\pmb{\sigma}_3}
	\mathrm{\ee}^{\frac14\ii y^2\pmb{\sigma}_3}
	\eta^{-\pmb{\sigma}_3}.
\end{equation}
By equation \eqref{asymptotic-Da} and equation \eqref{weber-equ-Psi}, we obtain 
the long-time asymptotic behavior of $\mathbf{N}(y_i)$
\begin{equation}
	\mathbf{N}(y_i)=\mathbb{I}_2+t^{-1/2}\mathbf{P}(y_i)+\oo(t^{-1}),
\end{equation}
\begin{equation}
	\mathbf{P}(y_i)=
	\begin{cases}
		\begin{pmatrix}
			0&\frac{\eta^2\beta_{12}\ee^{-\frac34\pi \ii}}{2(\lambda_i-\xi)}\\
			\frac{\eta^{-2}\beta_{21} \ee^{\frac34\pi \ii}}{2(\lambda_i-\xi)}&0
		\end{pmatrix},
		&\mathrm{arg}(y_i)\in[0,\frac{\pi}{4}],\\
		\begin{pmatrix}
			0&\frac{\eta^2\beta_{12}\ee^{-\frac34\pi \ii}}{2(\lambda_i-\xi)}\\
			\frac{\eta^{-2}\beta_{21}\ee^{-\frac14\pi \ii}}{2(\lambda_i-\xi)}&0
		\end{pmatrix},
		&\mathrm{arg}(y_i)\in[\frac{\pi}{4},\frac{3\pi}{4}],\\
		\begin{pmatrix}
			0&\frac{\eta^2\beta_{12}\ee^{\frac14\pi \ii}}{2(\lambda_i-\xi)}\\
			\frac{\eta^{-2}\beta_{21}\ee^{-\frac14\pi \ii}}{2(\lambda_i-\xi)}&0
		\end{pmatrix},
		&\mathrm{arg}(y_i)\in[\frac{3\pi}{4},\pi].\\
	\end{cases}
\end{equation}
Then we have 
\begin{equation*}
	\M^{(4)}(\lambda_i)=\mathbb{I}_3+
	t^{-1/2}\tilde{\mathbf{P}}_i+\oo(t^{-1}),
\end{equation*}
\begin{equation}\label{def-tilde-P}
	\tilde{\mathbf{P}}_i=
	\begin{pmatrix}
		0&\frac{-\mathbf{P}_{12}(\varphi(\lambda_i))
		\pmb{A}(\xi)}{\left|\pmb{A}(\xi)\right|}\\
		\frac{-\mathbf{P}_{21}(\varphi(\lambda_i))\pmb{A}^\dagger(\xi)}
		{\left|\pmb{A}(\xi)\right|}&0
	\end{pmatrix}.
\end{equation}
% By Lemma \ref{bdd-M3-M4}, we have 
% \begin{equation*}
	%   \begin{aligned}
		%     \left|\M^{(3)}(\lambda_i;x,t)-\M^{(4)}(\lambda_i;x,t)\right|&=
		%   \left|\frac{1}{2\pi \ii}\int_{\Sigma}\frac{\pmb{\mu}_3\pmb{w}_3-\pmb{\mu}_4\pmb{w}_4}{s-\lambda_i}ds\right|\\
		%   &\lesssim\min_{s\in\Sigma}\frac{1}{|s-\lambda_i|}
		%   \left|\int_{\Sigma}\pmb{\mu}_3\pmb{w}_3-\pmb{\mu}_4\pmb{w}_4\right|
		%   \lesssim \oo(t^{-\frac{3}{4}}).
		%   \end{aligned}
	% \end{equation*}
It follows from equation \eqref{analysis-expand-Phi} that
\begin{equation}
	\mathbf{\Xi}(\lambda_i)=\mathbb{I}_3+\oo(\ee^{-ct}).
\end{equation}
By \cite[Lemma 5.18]{deift2011long}, we can deduce 
\begin{equation}
	\begin{aligned}
		\left|\M^{(3)}(\lambda_i;x,t)-\M^{(4)}(\lambda_i;x,t)\right|&
		\lesssim \oo(t^{-\frac{3}{4}}),\\
		\left|\M^{(1)}(\lambda_i;x,t)-\M^{(2)}(\lambda_i;x,t)\right|&
		\lesssim \oo(t^{-\frac{3}{4}+\frac{1}{2p}})
	\end{aligned}
\end{equation}
from Lemma \ref{bdd-M1-M2} and Lemma \ref{bdd-M3-M4}.
% By Lemma \ref{bdd-M1-M2}, we have
% \begin{equation*}
	%   \begin{aligned}
		%     \left|\M^{(1)}(\lambda_i;x,t)-\M^{(2)}(\lambda_i;x,t)\right|&=
		%   \left|\frac{1}{2\pi \ii}\int_{\mathbb{R}}\frac{\pmb{\mu}_1\pmb{w}_1-\pmb{\mu}_2\pmb{w}_2}{s-\lambda_i}ds\right|\\
		%   &\lesssim\min_{s\in\mathbb{R}}\frac{1}{|s-\lambda_i|}
		%   \left|\int_{\mathbb{R}}\pmb{\mu}_1\pmb{w}_1-\pmb{\mu}_2\pmb{w}_2\right|
		%   \lesssim \oo(t^{-\frac{3}{4}+\frac{1}{(2p)}}).
		%   \end{aligned}
	% \end{equation*}
Consequently, we obtain the long-time asymptotic behavior of 
$\M(\lambda_i;x,t)$, 
\begin{equation}\label{estimate-M-lambda-i}
	\begin{aligned}
		\M(\lambda_i;x,t)=\left(\mathbb{I}_3+
		t^{-1/2}\tilde{\mathbf{P}}_i\right)\pmb{\Delta}(\lambda_i)
		+\oo(t^{-\frac{3}{4}+\frac{1}{(2p)}}).
	\end{aligned}
\end{equation}
In conclusion, Theorem \ref{main-result} follows from equations 
\eqref{recovery-formula2} and \eqref{estimate-M-lambda-i}.

\section*{Funding}
Liming Ling is supported by the National Natural Science Foundation of China (Grant No. 12471236), Guangzhou Science and Technology Plan (Grant No. 2024A04J6245), and
Guangdong Natural Science Foundation (Grant No. 2025A1515011868); Xiaoen Zhang is supported by the National Natural Science Foundation of China (Grant No. 12471237), the Science and Technology Support Plan for Youth Innovation of Colleges and Universities of Shandong Province of China (No. 2023KJ090).

\section*{Conflict of interest}
The authors declare that they have no conflict of interest with other people or organizations that may inappropriately influence the author's work.

\section*{Competing interests}
The authors declare that they have no known competing financial interests or personal relationships that could have appeared to influence the work reported in this paper.
\section*{Data availability statement}
We do not analyse or generate any datasets, because our work proceeds within a theoretical and mathematical approach. One can obtain the relevant materials from the references below.
%\section*{CRediT authorship contribution statement}
%All authors contributed to the study conception and design. The first draft of the manuscript was written by Yubin Huang, and all authors commented on previous versions of the manuscript. All authors read and approved the final manuscript.

\section*{Appendix A}\label{trivial-extend}
\setcounter{equation}{0} % 重置方程编号
\renewcommand{\theequation}{A.\arabic{equation}} % 设置方程编号格式
The scalar version of the following lemmas has already been proposed in previous literature \cite{deift2002long,deift2011long}. 
We extend these lemmas trivially to the vector case and provide a more detailed proof for the sake of completeness.

\begin{proof}[Proof of Lemma \ref{bdd-w}]
	Our proof follows that found in \cite[Lemma 5.2]{deift2011long}.
	Define $\mathbf{w}=(\mathbf{w}_-,\mathbf{w}_+)=(0,\V-\mathbb{I}_3)$ and $\hat{\mathbf{w}}=(\hat{\mathbf{w}}_-,\hat{\mathbf{w}}_+)=(0,\V^{-1}-\mathbb{I}_3)$, then we will show 
	that $1-C_{\mathbf{w}}$ is a Fredholm operator with index 0. For $\mathbf{f}\in L^2(\mathbb{R})$, 
	\begin{equation*}
		\begin{aligned}
			C_{\mathbf{w}}(C_{\hat{\mathbf{w}}}\mathbf{f}) 
			&=C^{-}_{\mathbb{R}}[C^{-}_{\mathbb{R}}(\mathbf{f}(\V^{-1}-\mathbb{I}_3))(\V-\mathbb{I}_3)] \\
			&=C^{-}_{\mathbb{R}}[(C^{+}_{\mathbb{R}}(\mathbf{f}(\V^{-1}-\mathbb{I}_3))-\mathbf{f}(\V^{-1}-\mathbb{I}_3))(\V-\mathbb{I}_3)] \\
			&=C^{-}_{\mathbb{R}}[C^{+}_{\mathbb{R}}(\mathbf{f}(\V^{-1}-\mathbb{I}_3))(\V-\mathbb{I}_3)+
			\mathbf{f}(\V-\mathbb{I}_3+\V^{-1}-\mathbb{I}_3)] \\
			&=P\mathbf{f}+C_{\mathbf{w}}\mathbf{f}+C_{\hat{\mathbf{w}}}\mathbf{f}.
		\end{aligned}
	\end{equation*}
	where $P\mathbf{f}=C^{-}_{\mathbb{R}}[C^{+}_{\mathbb{R}}(\mathbf{f}(\V^{-1}-\mathbb{I}_3))(\V-\mathbb{I}_3)]$. 
	As $\V$ is continuous and $\V-\mathbb{I}_3\rightarrow0$ as 
	$\lambda\rightarrow\infty$, we can approximate $\V-\mathbb{I}_3$ by a sequence 
	of rational functions of the following form 
	\begin{equation*}
		\sum_{i=1}^n\frac{{\bf A}_i}{z-a_i},\quad a_i\in\mathbb{C}\setminus\mathbb{R}.
	\end{equation*}
	By the Cauchy theorem, 
	\begin{equation*}
		\begin{aligned}
			P_i \mathbf{f}(z)
			&=C^{-}_{\mathbb{R}}\left(C^{+}_{\mathbb{R}}(\mathbf{f}(\V^{-1}-\mathbb{I}_3))(s)\frac{1}{s-a_i}\right)(z)\\
			&=\lim_{\varepsilon\downarrow0}\int_{\mathbb{R}}
			\frac{C^{+}_{\mathbb{R}}(\mathbf{f}(\V^{-1}-\mathbb{I}_3))(s)}{(s-a_i)(s-(z-\ii\varepsilon))}
			\frac{\dd s}{2\pi \ii}\\
			&=
			\begin{cases}
				0,& a_i\in\mathbb{C}^-,\\
				\frac{C^{+}(\mathbf{f}(\V^{-1}-\mathbb{I}_3))(a_i)}{a_i-z},&a_i\in\mathbb{C}^+.
			\end{cases}
		\end{aligned}
	\end{equation*}
	This implies operator $P_i$ has either rank 0 or 1 and hence is compact in 
	$L^2(\mathbb{R})$. Together with 
	$P=\displaystyle\lim_{n\rightarrow\infty}\sum_{i=1}^n{\bf A}_iP_i$, $P$ is operator limit of 
	compact operators and hence is compact. It follows from 
	$(1-C_\mathbf{w})(1-C_{\hat{\mathbf{w}}})=1+P$ that $(1-C_\mathbf{w})(1-C_{\hat{\mathbf{w}}})-1$ is compact. 
	Similarly, we can obain that $(1-C_{\hat{\mathbf{w}}})(1-C_\mathbf{w})-1$ is also compact. 
	Therefore, $1-C_\mathbf{w}$ is a Fredholm operator. For $0\le\alpha\le1$, set 
	\begin{equation*}
		\V_\alpha=\begin{pmatrix}
			1+\alpha^2|\mathbf{R}|^2&-\alpha\mathbf{R}\ee^{-2\ii t\theta}\\
			-\alpha\mathbf{R}^\dagger \ee^{2\ii t\theta}&\mathbb{I}_2
		\end{pmatrix}.
	\end{equation*}
	Clearly $\V_0=\mathbb{I}_3$ and $\V_1=\V$. Then we can show that 
	$1-C_{\mathbf{w}_\alpha}$ is a Fredholm operator as same as $1-C_\mathbf{w}$ 
	and conclude that ${\rm ind}(1-C_{\mathbf{w}_1})=ind(1-C_{\mathbf{w}_0})=0$. Hence 
	$1-C_{\mathbf{w}}$ is a Fredholm operator with index 0. Then we want to 
	show that $(1-C_{\mathbf{w}})\mathbf{f}=0$, $\mathbf{f}\in L^2(\mathbb{R})$ implies that $\mathbf{f}=0$. 
	Set ${\bf m}_\pm=C^\pm_{\mathbb{R}}(\mathbf{f}(\V-\mathbb{I}_3))$. Since $(1-C_{\mathbf{w}})\mathbf{f}=0$, we have 
	\begin{equation*}
		\mathbf{f}=C_{\mathbf{w}}\mathbf{f}=C^-_{\mathbb{R}}(\mathbf{f}(\V-\mathbb{I}_3))={\bf m}_-,
	\end{equation*}
	and 
	\begin{equation*}
		\begin{aligned}
			{\bf m}_+&
			=C^+_{\mathbb{R}}(\mathbf{f}(\V-\mathbb{I}_3))=C^-_{\mathbb{R}}
			(\mathbf{f}(\V-\mathbb{I}_3))+\mathbf{f}(\V-\mathbb{I}_3)\\
			&={\bf m}_-+{\bf m}_-(\V-\mathbb{I}_3)={\bf m}_-\V.
		\end{aligned}
	\end{equation*}
	By the Cauchy theorem,
	\begin{equation*}
		0=\int_\mathbb{R}{\bf m}_+{\bf m}_-^\dagger =\int_\mathbb{R}{\bf m}_-\V {\bf m}_-^\dagger.
	\end{equation*}
	But $\V$ is clearly strictly positive and hence $\mathbf{f}={\bf m}_-=0$. Thus 
	$1-C_\mathbf{w}$ is injective. It follow form 
	$1-C_{\mathbf{w}}$ is a Fredholm operator with index 0 that 
	$1-C_{\mathbf{w}}$ is bijective and $(1-C_{\mathbf{w}})^{-1}$ exists. 
	By Corollary 2.7 in \cite{deift2002long}, we conclude that 
	$1-C_{\mathbf{w}}$ is invertible in $L^2(\mathbb{R})$ for any factorization. 
	To complete the proof, we must show that $(1-C_{\mathbf{w}})^{-1}$ is bounded 
	in $L^2(\mathbb{R})$.
	
	By proposition 2.6 in \cite{deift2002long}, 
	we suppose that $\M_\pm\in\partial C(L^2)$ solve $IRHP2_{L^2}$ defined in \cite[pp.1035]{deift2002long} 
	which implies $\M_+=\M_-\V+\mathbf{f}$ for $\mathbf{f}\in L^2(\mathbb{R})$. As above, we have 
	\begin{equation*}
		0=\int_\mathbb{R}\M_+\M_-^\dagger =\int_\mathbb{R}\M_-\V\M_-^\dagger
		+\int_\mathbb{R}\mathbf{f}\M_-^\dagger.
	\end{equation*}
	Since $\V$ is clearly strictly positive, set $\lambda_0>0$ be the smallest eigenvalue of $\V$. 
	Then we have 
	\begin{equation*}
		\lambda_0\|\M_-\|^2_{L^2}\le
		\left|\int_\mathbb{R}\M_-\V\M_-^\dagger\right|
		=\left|\int_\mathbb{R}\mathbf{f}\M_-^\dagger\right|\le
		\|\mathbf{f}\|_{L^2}\|\M_-\|_{L^2}.
	\end{equation*}
	Hence $\|\M_\pm\|_{L^2}\lesssim\|\mathbf{f}\|_{L^2}$, which implies 
	$\|(1-C_\mathbf{w})^{-1}\|_{L^2(\mathbb{R})}\lesssim 1$ for any factorization of $\V$ by 
	proposition 2.6 and corollary 2.7 in \cite{deift2002long}. 
	This completes the proof.
\end{proof}

\begin{proof}[Proof of Lemma \ref{lemma1-CNLS}]
	Our proof follows that found in \cite[Lemma 4.3]{deift2002long}.
	Consider the first inequality. The other cases are similar. For convenience, 
	we assume that $x=0$. We only need to show
	\begin{equation*}
		\|C^-_{\mathbb{R}_+\rightarrow\Gamma}
		(\det\pmb{\delta})^{-1}\ee^{2\ii t\lambda^2}\pmb{\delta}^{-1}\mathbf{f}^\dagger\|_{L^p},
	\end{equation*}
	where $\Gamma=\mathbb{R}\cup(\ee^{\ii\pi/2}\mathbb{R}_-)\cup(\ee^{-\ii\pi/2}\mathbb{R}_-)$. 
	Then the result follows immediately by the translation of 
	$\lambda\rightarrow\lambda-\xi$ since the $H^{1,1}$ and $L^\infty$ norm of 
	$\mathbf{f}$ is invariant under the translation. By Fourier theory, 
	$\mathbf{f}^\dagger(\lambda)=1/\sqrt{2\pi}\int \ee^{-iy\lambda}\check{\mathbf{f}}^\dagger(y)\dd y$. For any 
	$\varepsilon>0$, 
	\begin{equation*}
		\begin{aligned}
			C^-_{\mathbb{R}_+\rightarrow\Gamma}\ee^{-\varepsilon\lambda}
			(\det\pmb{\delta})^{-1}\ee^{2\ii t\lambda^2}\pmb{\delta}^{-1}\mathbf{f}^\dagger
			&=\frac1{\sqrt{2\pi}}\int \ee^{-\ii\frac{y^2}{8t}}F_1\check{\mathbf{f}}^\dagger(y)\dd y
			+\frac1{\sqrt{2\pi}}\int \ee^{-\ii\frac{y^2}{8t}}F_2\check{\mathbf{f}}^\dagger(y)\dd y\\
			&\equiv \mathrm{I}+\mathrm{II},
		\end{aligned}
	\end{equation*}
	where 
	\begin{equation*}
		\begin{aligned}
			&F_1(y)=C^-_{\mathbb{R}_+\rightarrow\Gamma}
			(\ee^{-\varepsilon\lambda}
			(\det\pmb{\delta})^{-1}\ee^{2\ii t(\lambda-\frac{y}{4t})^2}
			\chi_{(0,a)}\pmb{\delta}^{-1}),\\
			&F_2(y)=C^-_{\mathbb{R}_+\rightarrow\Gamma}
			(\ee^{-\varepsilon\lambda}
			(\det\pmb{\delta})^{-1}\ee^{2\ii t(\lambda-\frac{y}{4t})^2}
			\chi_{(a,\infty)}\pmb{\delta}^{-1}),
		\end{aligned}
	\end{equation*}
	and $a=\max(0,\frac{y}{4t})$. The factor $\ee^{-\varepsilon\lambda}$ is included to 
	ensure that $\|F_2\|_{L^p}\lesssim\|\ee^{-\varepsilon\lambda}\|_{L^p(a,\infty)}$ and 
	hence $F_2(y)$ exists in $L^p$.
	
	Assume first that $p>2$. For $y>0$, then $a=\frac{y}{4t}$, 
	\begin{equation*}
		\|F_1\|_{L^p(\Gamma)}\lesssim \|(\det\pmb{\delta})^{-1}\|_{L^\infty(\mathbb{R}_+)}
		\|\pmb{\delta}^{-1}\|_{L^\infty(\mathbb{R}_+)}
		\|\chi_{(0,a)}\|_{L^p}
		\lesssim \frac{|y|^{1/p}}{t^{1/p}},
	\end{equation*}
	and hence 
	\begin{equation*}
		\begin{aligned}
			\|\mathrm{I}\|_{L^p(\Gamma)}
			&\lesssim t^{-1/p}
			\int_{0}^{\infty}y^{1/p}|\check{\mathbf{f}}^\dagger(y)|\dd y\\
			&\lesssim t^{-1/p}
			(\int_{0}^{1}y^{1/p}|\check{\mathbf{f}}^\dagger(y)|\dd y+
			\int_{1}^{\infty}y|\check{\mathbf{f}}^\dagger(y)|y^{1/p-1}\dd y)\\
			&\lesssim t^{-1/p}
			(\|\check{\mathbf{f}}^\dagger\|_{L^2}\|y^{1/p}\|_{L^2(0,1)}+
			\|y\check{\mathbf{f}}^\dagger\|_{L^2}\|y^{1/p-1}\|_{L^2(1,\infty)})\\
			&\lesssim t^{-1/p}\|\mathbf{f}\|_{H^{1,1}}.
		\end{aligned}
	\end{equation*}
	For $p=2$, the part $\mathrm{I}$ can be rewritten as 
	\begin{equation*}
		\frac1{\sqrt{2\pi}}C^-_{\mathbb{R}_+\rightarrow\Gamma}
		(\int_{4t\lambda}^{\infty} \ee^{-\varepsilon\lambda}
		(\det\pmb{\delta})^{-1}\ee^{2\ii t\lambda^2-iy\lambda}
		\pmb{\delta}^{-1}\check{\mathbf{f}}^\dagger\dd y).
	\end{equation*}
	By Hardy inequality, we get 
	\begin{equation*}
		\begin{aligned}
			\|\int_{4t\lambda}^{\infty}| \check{\mathbf{f}}^\dagger|\dd y\|_{L^2(\mathbb{R}_+)}
			&=\|\int_{\lambda}^{\infty}\frac{| 4ty\check{\mathbf{f}}^\dagger(4ty)|}{y}\dd y\|_{L^2(\mathbb{R}_+)}\\
			&\le 2\|| \check{\mathbf{f}}^\dagger(4ty)|4ty\|_{L^2(\mathbb{R}_+)}\\
			&\le 2\frac{1}{\sqrt{4t}}\|| y\check{\mathbf{f}}^\dagger|\|_{L^2(\mathbb{R}_+)}\lesssim t^{-1/2},
		\end{aligned}
	\end{equation*}
	and hence 
	\begin{equation*}
		\|\mathrm{I}\|_{L^2(\Gamma)}\lesssim
		\|\int_{4t\lambda}^{\infty} \ee^{-\varepsilon\lambda}
		(\det\pmb{\delta})^{-1}\ee^{2\ii t\lambda^2-iy\lambda}
		\pmb{\delta}^{-1}\check{\mathbf{f}}^\dagger\dd y\|_{L^2(\mathbb{R}_+)}\lesssim
		\|\int_{4t\lambda}^{\infty} | \check{\mathbf{f}}^\dagger|\dd y\|_{L^2(\mathbb{R}_+)}
		\lesssim t^{-1/2}.
	\end{equation*}
	For $F_2$, first consider the case when $y<0$, and hence $a=0$. Then for $p\ge2$, by 
	Cauchy theorem,
	\begin{equation*}
		\begin{aligned}
			\|F_2\|_{L^p(\Gamma)}
			&=\|C_{\ee^{\ii\pi/4}\mathbb{R}_+\rightarrow\Gamma}
			(\ee^{-\varepsilon\lambda}
			(\det\pmb{\delta})^{-1}\ee^{2\ii t(\lambda-\frac{y}{4t})^2}
			\pmb{\delta}^{-1})\|_{L^p}\\
			&\lesssim\|\ee^{-\varepsilon\lambda}
			(\det\pmb{\delta})^{-1}\ee^{2\ii t(\lambda-\frac{y}{4t})^2}
			\pmb{\delta}^{-1}\|_{L^p(\ee^{\ii\pi/4}\mathbb{R}_+)}\\
			&\lesssim \|\ee^{2\ii t\lambda^2}
			\ee^{-\ii y\lambda}\|_{L^p(\ee^{\ii\pi/4}\mathbb{R}_+)}.
		\end{aligned}
	\end{equation*}
	Since $y<0$ and $\lambda\in \ee^{\ii\pi/4}\mathbb{R}_+$, it follows from 
	${\rm Re}(-\ii y\lambda)<0$ that 
	\begin{equation*}
		\|F_2\|_{L^p(\Gamma)}\lesssim\|\ee^{2\ii t\lambda^2}
		\|_{L^p(\ee^{\ii\pi/4}\mathbb{R}_+)}\lesssim t^{-1/(2p)}.
	\end{equation*}
	When $y>0$, then $a=\frac{y}{4t}$ and for $p\ge2$, again by Cauchy theorem,
	\begin{equation*}
		\begin{aligned}
			\|F_2\|_{L^p(\Gamma)}
			&=\|C^-_{(y/4t,\infty)\rightarrow\Gamma}
			(\ee^{-\varepsilon\lambda}
			(\det\pmb{\delta})^{-1}\ee^{2\ii t(\lambda-\frac{y}{4t})^2}
			\pmb{\delta}^{-1})\|_{L^p}\\
			&=\|C_{(y/4t+\ee^{\ii\pi/4}\mathbb{R}_+)\rightarrow\Gamma}
			(\ee^{-\varepsilon\lambda}
			(\det\pmb{\delta})^{-1}\ee^{2\ii t(\lambda-\frac{y}{4t})^2}
			\pmb{\delta}^{-1})\|_{L^p}\\
			&\lesssim\|\ee^{2\ii t(\lambda-\frac{y}{4t})^2}
			\|_{L^p(y/4t+\ee^{\ii\pi/4}\mathbb{R}_+)}\\
			&\lesssim t^{-1/(2p)}.
		\end{aligned}
	\end{equation*}
	Thus for $p\ge2$, we have $\|F_2\|_{L^p(\Gamma)}\lesssim t^{-1/(2p)}$. Then
	\begin{equation*}
		\begin{aligned}
			\|\mathrm{II}\|_{L^p(\Gamma)}
			&\lesssim t^{-1/(2p)}\int_{-\infty}^{\infty}|\check{\mathbf{f}}^\dagger(y)|\dd y\\
			&\lesssim t^{-1/(2p)}(\int_{|y|\le1}|\check{\mathbf{f}}^\dagger(y)|+\int_{|y|>1}|y|^{-1}|y\check{\mathbf{f}}^\dagger(y)|)\dd y\\
			&\lesssim t^{-1/(2p)}\|\mathbf{f}\|_{H^{1,1}}.
		\end{aligned}
	\end{equation*}
	Letting $\varepsilon\downarrow0$ and adding the estimates for $\mathrm{I}$ and $\mathrm{II}$, 
	the result follows immediately.
\end{proof}

\begin{proof}[Proof of Lemma \ref{lemma2-CNLS}]
	Our proof follows that found in \cite[Lemma 4.3 ]{deift2002long}.
	We only consider the first inequality, the other cases are similar. Again, 
	we assume that $x=0$. We only need to show
	\begin{equation*}
		\|C^-_{\mathbb{R}_+\rightarrow\Gamma}
		\ee^{2\ii t\lambda}g_+\mathbf{f}^\dagger\|_{L^2}
		\lesssim t^{-1/2+1/q}\|g\|_{H^q(\mathbb{C}\setminus\mathbb{R})}.
	\end{equation*}
	It follows from $\mathbf{f}(0)=0$ and Fourier theorem that $\int\check{\mathbf{f}}^\dagger\dd y=0$ and 
	\begin{equation*}
		\begin{aligned}
			C^-_{\mathbb{R}_+\rightarrow\Gamma}
			\ee^{-\varepsilon\lambda}\ee^{2\ii t\lambda^2}g_+\mathbf{f}^\dagger
			&=\frac1{\sqrt{2\pi}}C^-_{\mathbb{R}_+\rightarrow\Gamma}
			\ee^{-\varepsilon\lambda}\ee^{2\ii t\lambda^2}g_+\int (\ee^{-iy\lambda}-1)\check{\mathbf{f}}^\dagger\dd y\\
			&=\frac1{\sqrt{2\pi}}\int \ee^{-\ii\frac{y^2}{8t}}F_1\check{\mathbf{f}}^\dagger(y)\dd y
			+\frac1{\sqrt{2\pi}}\int \ee^{-\ii\frac{y^2}{8t}}F_2\check{\mathbf{f}}^\dagger(y)\dd y\\
			&\equiv \mathrm{I}+\mathrm{II},
		\end{aligned}
	\end{equation*}
	where 
	\begin{equation*}
		\begin{aligned}
			&F_1(y)=C^-_{\mathbb{R}_+\rightarrow\Gamma}
			(\ee^{-\varepsilon\lambda}
			\ee^{2\ii t(\lambda-\frac{y}{4t})^2}
			\chi_{(0,a)}(1-\ee^{iy\lambda})g_+),\\
			&F_2(y)=C^-_{\mathbb{R}_+\rightarrow\Gamma}
			(\ee^{-\varepsilon\lambda}
			\ee^{2\ii t(\lambda-\frac{y}{4t})^2}
			\chi_{(a,\infty)}(1-\ee^{iy\lambda})g_+),
		\end{aligned}
	\end{equation*}
	and $a=\max(0,\frac{y}{4t})$. For $y>0$, $\mathrm{I}$ can be rewritten as 
	\begin{equation*}
		\frac1{\sqrt{2\pi}}C^-_{\mathbb{R}_+\rightarrow\Gamma}
		(g_+\int_{4t\lambda}^{\infty} \ee^{-\varepsilon\lambda}\ee^{2\ii t\lambda^2}
		(\ee^{-iy\lambda}-1)\check{\mathbf{f}}^\dagger(y)\dd y).
	\end{equation*}
	Then for $q^{'}\ge2$ which satisfies $\frac{1}{q}+\frac{1}{q^{'}}=\frac{1}{2}$, we have
	\begin{equation*}
		\begin{aligned}
			\|\mathrm{I}\|_{L^2(\Gamma)}
			&\lesssim \|g_+\int_{4t\lambda}^{\infty} \check{\mathbf{f}}^\dagger(y)\dd y\|_{L^2(\mathbb{R}_+)}\\
			&\lesssim \|g_+\|_{L^q(\mathbb{R}_+)}
			\|\int_{4t\lambda}^{\infty} \check{\mathbf{f}}^\dagger(y)\dd y\|_{L^{q^{'}}(\mathbb{R}_+)}\\
			&\lesssim \|g\|_{H^q(\mathbb{C}\setminus\mathbb{R})}
			\left(\int_{0}^{\infty}\left( \int_0^\infty|\check{\mathbf{f}}^\dagger(y)|\dd y\right)^{q^\prime-2}
			\left(\int_{4t\lambda}^\infty|\check{\mathbf{f}}^\dagger(y)|\dd y\right)^2d\lambda\right)^{1/q^{\prime}}\\
			&\lesssim \|g\|_{H^q(\mathbb{C}\setminus\mathbb{R})}
			\left\|\int_{4t\lambda}^\infty|\check{\mathbf{f}}^\dagger(y)|\dd y\right\|_{L^2(\mathbb{R}_+)}^{2/q^{\prime}}.
		\end{aligned}
	\end{equation*}
	Together with Hardy inequality
	\begin{equation*}
		\left\|\int_{4t\lambda}^\infty|\check{\mathbf{f}}^\dagger(y)|\dd y\right\|_{L^2(\mathbb{R}_+)}\lesssim
		t^{-1/2},
	\end{equation*}
	then we get
	\begin{equation*}
		\|\mathrm{I}\|_{L^2(\Gamma)}\lesssim\|g\|_{H^q(\mathbb{C}\setminus\mathbb{R})}t^{-1/q^{'}}
		\lesssim \|g\|_{H^q(\mathbb{C}\setminus\mathbb{R})}t^{-1/2+1/q}.
	\end{equation*}
	For $F_2$, first consider the case when $y<0$, then $a=0$ and 
	\begin{equation*}
		\begin{aligned}
			\|F_2\|_{L^2(\Gamma)}
			&=\|C_{(\ee^{\ii \pi/4}\mathbb{R}_+)\rightarrow\Gamma}
			(\ee^{-\varepsilon\lambda}
			\ee^{2\ii t(\lambda-\frac{y}{4t})^2}
			(1-\ee^{iy\lambda})g_+)\|_{L^2(\Gamma)}\\
			&\lesssim\|\ee^{2\ii t\lambda^2}
			(\ee^{-iy\lambda}-1)g_+\|_{L^2(\ee^{\ii \pi/4}\mathbb{R}_+)}.
		\end{aligned}
	\end{equation*}
	Let $\lambda=ue^{\ii \pi/4},u\ge0$, then
	\begin{equation*}
		\begin{aligned}
			\|\mathrm{II}\|_{L^2(\Gamma)}
			&=\|\frac1{\sqrt{2\pi}}\int_{-\infty}^{0} 
			\ee^{-\ii\frac{y^2}{8t}}F_2\check{\mathbf{f}}^\dagger(y)\dd y\|_{L^2(\Gamma)}\\
			&\lesssim\int_{-\infty}^{0} 
			\left(\int_0^\infty du\left|g(\ee^{\ii\frac\pi4}u)\right|^2
			\ee^{-4tu^2}\left|\ee^{-ie^{\ii\frac\pi4}u y}-1\right|^2\right)
			^{\frac12}|\check{\mathbf{f}}^\dagger(y)|\dd y.
		\end{aligned}
	\end{equation*}
	Since ${\rm Re}(-ie^{\ii\frac\pi4}u y)<0$, then
	\begin{equation*}
		\left|\ee^{-ie^{\ii\frac\pi4}u y}-1\right|
		=\left|\int_{0}^{-ie^{\ii\frac\pi4}u y}\ee^zdz\right|\le|uy|.
	\end{equation*}
	Hence 
	\begin{equation*}
		\begin{aligned}
			\|\mathrm{II}\|_{L^2(\Gamma)}
			&\lesssim\left(\int_0^\infty du\left|g(\ee^{\ii\frac\pi4}u)\right|^2
			\ee^{-4tu^2}\right)
			^{\frac12}\int_{-\infty}^{-t^{\frac{1}{2}}} 
			|\check{\mathbf{f}}^\dagger(y)|\dd y\\
			&+\left(\int_0^\infty du\left|g(\ee^{\ii\frac\pi4}u)\right|^2
			\ee^{-4tu^2}u^2\right)
			^{\frac12}\int_{-t^{\frac{1}{2}}}^{0} 
			|y\check{\mathbf{f}}^\dagger(y)|\dd y\\
			&\lesssim \left(\int_0^\infty dx\left|g(\ee^{\ii\frac\pi4}\frac{x}{\sqrt{t}})
			\right|^2e^{-4x^2}\right)
			^{\frac12}t^{-\frac{1}{4}}t^{-\frac{1}{4}}
			+\left(\int_0^\infty dx\left|g(\ee^{\ii\frac\pi4}\frac{x}{\sqrt{t}})
			\right|^2e^{-4x^2}x^2\right)
			^{\frac12}t^{-\frac{3}{4}}t^{\frac{1}{4}}\\
			&\lesssim t^{-\frac{1}{2}}
			\|g(\frac{x}{\sqrt{t}})\|_{L^q(\ee^{\ii\frac\pi4}\mathbb{R}_+)}
			\lesssim t^{-\frac{1}{2}+\frac{1}{2q}}\|g\|_{L^q(\ee^{\ii\frac\pi4}\mathbb{R}_+)}
			\lesssim t^{-\frac{1}{2}+\frac{1}{2q}}
			\|g\|_{H^q(\mathbb{C}\setminus\mathbb{R})}.
		\end{aligned}
	\end{equation*}
	When $y>0$, then $a=y/4t$ and 
	\begin{equation*}
		\begin{aligned}
			\|F_2\|_{L^2(\Gamma)}
			&=\|C^-_{(y/4t,\infty)\rightarrow\Gamma}
			(\ee^{-\varepsilon\lambda}
			\ee^{2\ii t(\lambda-\frac{y}{4t})^2}
			(1-\ee^{iy\lambda})g_+)\|_{L^2}\\
			&\lesssim\|C_{(y/4t+\ee^{\ii \pi/4}\mathbb{R}_+)\rightarrow\Gamma}
			(\ee^{2\ii t(\lambda-\frac{y}{4t})^2}
			(1-\ee^{iy\lambda})g_+)\|_{L^2}\\
			&\lesssim\|(\ee^{2\ii t(\lambda-\frac{y}{4t})^2}
			(1-\ee^{iy\lambda})g_+)\|_{L^2(y/4t+\ee^{\ii \pi/4}\mathbb{R}_+)}.
		\end{aligned}
	\end{equation*}
	Let $\lambda=\frac{y}{4t}+ue^{\ii \pi/4},u\ge0$, then
	\begin{equation*}
		\begin{aligned}
			\|\mathrm{II}\|_{L^2(\Gamma)}
			&\lesssim\int_{0}^{\infty} 
			\left(\int_0^\infty du\left|g(\frac{y}{4t}+ue^{\ii \pi/4})\right|^2
			\ee^{-4tu^2}\left|1-\ee^{iy(\frac{y}{4t}+ue^{\ii \pi/4})}\right|^2\right)
			^{\frac12}|\check{\mathbf{f}}^\dagger(y)|\dd y\\
			&\lesssim\int_{t^{\frac12}}^{\infty} 
			\left(\int_{0}^\infty du\left|g(\frac{y}{4t}+ue^{\ii \pi/4})\right|^2
			\ee^{-4tu^2}\right)^{\frac12}|\check{\mathbf{f}}^\dagger(y)|\dd y\\
			&+\int_{0}^{t^{\frac12}} 
			\left(\int_{0}^\infty du\left|g(\frac{y}{4t}+ue^{\ii \pi/4})\right|^2
			\ee^{-4tu^2}
			(|\ee^{-\ii\frac{y^2}{4t}}-1|^2+|1-\ee^{iyue^{\ii \pi/4}}|)
			\right)
			^{\frac12}|\check{\mathbf{f}}^\dagger(y)|\dd y.
		\end{aligned}
	\end{equation*}
	Together with $|\ee^{-\ii\frac{y^2}{4t}}-1|\lesssim t^{-1}y^2$,
	\begin{equation*}
		\begin{aligned}
			\|\mathrm{II}\|_{L^2(\Gamma)}
			&\lesssim t^{-\frac14}
			\left(\int_{0}^\infty dx\left|g(\frac{y}{4t}+\frac{x}{\sqrt{t}}\ee^{\ii \pi/4})\right|^2
			\ee^{-4tx^2}\right)^{\frac12}
			\int_{t^{\frac12}}^{\infty}y^{-1} |y\check{\mathbf{f}}^\dagger(y)|\dd y\\
			&+t^{-1}t^{-\frac14}
			\left(\int_{0}^\infty dx\left|g(\frac{y}{4t}+\frac{x}{\sqrt{t}}\ee^{\ii \pi/4})\right|^2
			\ee^{-4tx^2}\right)^{\frac12}
			\int_{0}^{t^{\frac12}} y |y\check{\mathbf{f}}^\dagger(y)|\dd y\\
			&+t^{-\frac34}
			\left(\int_{0}^\infty dx\left|g(\frac{y}{4t}+\frac{x}{\sqrt{t}}\ee^{\ii \pi/4})\right|^2
			\ee^{-4tx^2}x^2\right)^{\frac12}
			\int_{0}^{t^{\frac12}} |y\check{\mathbf{f}}^\dagger(y)|\dd y\\
			&\lesssim t^{-\frac12}
			\|g(\frac{y}{4t}+\frac{x}{\sqrt{t}}\ee^{\ii \pi/4})\|_{L^q(\mathbb{R}_+)}
			\lesssim t^{-\frac12+\frac{1}{2q}}
			\|g\|_{L^q(\frac{y}{4t}+\ee^{\frac{\ii\pi}4}\mathbb{R}_+)}
			\lesssim t^{-1/2+1/2q}\|g\|_{H^q(\mathbb{C}\setminus\mathbb{R})}.
		\end{aligned}
	\end{equation*}
	Letting $\varepsilon\downarrow0$ and adding the estimates for $\mathrm{I}$ and $\mathrm{II}$, 
	the result follows immediately.
\end{proof}

\begin{proof}[Proof of Lemma \ref{bdd-M1-M2}]
	Our proof follows that found in \cite[Lemma 4.8]{deift2002long}.
	\begin{equation}
		\begin{aligned}
			\int\hat{\pmb{\mu}}_1 \hat{\mathbf{w}}_1-\int\hat{\pmb{\mu}}_2 \hat{\mathbf{w}}_2
			& =\int \hat{\mathbf{w}}_1-\hat{\mathbf{w}}_2+
			\int(\hat{\pmb{\mu}}_2-\mathbb{I}_{3})(\hat{\mathbf{w}}_1-\hat{\mathbf{w}}_2)+\int(\hat{\pmb{\mu}}_1-\hat{\pmb{\mu}}_2)\hat{\mathbf{w}}_1 \\
			&\equiv\mathrm{I}+\mathrm{II}+\mathrm{III}.
		\end{aligned}
	\end{equation}
	For $\mathrm{I}$, it can be decomposed into four parts 
	which have the same bound of $t^{-\frac{3}{4}}$:
	\begin{equation*}
		\begin{aligned}
			\int \hat{\mathbf{w}}_1-\hat{\mathbf{w}}_2
			&=\int_{\xi}^{+\infty}\hat{\mathbf{w}}_1^--\hat{\mathbf{w}}_2^--\int_{-\infty}^{\xi}\hat{\mathbf{w}}_1^--\hat{\mathbf{w}}_2^-\\
			&+\int_{\xi}^{+\infty}\hat{\mathbf{w}}_1^+-\hat{\mathbf{w}}_2^+-\int_{-\infty}^{\xi}\hat{\mathbf{w}}_1^+-\hat{\mathbf{w}}_2^+,
		\end{aligned}
	\end{equation*}
	we only give the rest of the details for the first term, the other terms 
	are easily inferred.
	
	Set $\mathbf{f}(\lambda)=\mathbf{R}(\lambda)-[\mathbf{R}](\lambda)\in H^{1,1}(\mathbb{R})$, 
	then 
	\begin{equation}
		\left|\int_{\xi}^{+\infty}\hat{\mathbf{w}}_1^--\hat{\mathbf{w}}_2^-\right|
		\leq\left|\int_{\xi}^{\xi+t^{-1/2}}
		\ee^{-2\ii t\theta}\det\pmb{\delta}_-\mathbf{f}\pmb{\delta}_-\right|
		+\left|\int_{\xi+t^{-1/2}}^{+\infty}
		\ee^{-2\ii t\theta}\det\pmb{\delta}_-\mathbf{f}\pmb{\delta}_-\right|.
	\end{equation}
	Since $\mathbf{f}(\lambda)\in H^{1,1}(\mathbb{R})$ and $\mathbf{f}(\xi)=0$, we have that 
	$\left|\mathbf{f}(\lambda)\right|\leq\left|\lambda-\xi\right|^{1/2}\|\mathbf{f}\|_{H^{1,1}}$. Then 
	\begin{equation*}
		\begin{aligned}
			\left|\int_{\xi}^{\xi+t^{-1/2}}
			\ee^{-2\ii t\theta}\det\pmb{\delta}_-\mathbf{f}\pmb{\delta}_-\dd\lambda\right|
			&=\left|\int_{0}^{1}\ee^{-2i\lambda_1^2+2\ii t\xi^2}[\det\pmb{\delta}_-\mathbf{f}\pmb{\delta}_-]
			(\frac{\lambda_1}{\sqrt{t}}+\xi)\dd\lambda_1t^{-\frac{1}{2}}\right|\\   
			&\leq t^{-\frac{1}{2}}\|\det\pmb{\delta}_-\|_{L^{\infty}}\|\pmb{\delta}_-\|_{L^{\infty}}
			\int_{0}^{1}\left|\mathbf{f}(\frac{\lambda_1}{\sqrt{t}}+\xi)\right|\dd\lambda_1\\
			&\lesssim t^{-\frac{1}{2}}\int_{0}^{1}\left|\lambda_1t^{-1/2}\right|^{1/2}\dd\lambda_1\\
			&\lesssim t^{-\frac{3}{4}}.
		\end{aligned}
	\end{equation*}
	For the second term, 
	\begin{equation*}
		\begin{aligned}
			&\quad\int_{\xi+t^{-1/2}}^{+\infty}\ee^{-2\ii t\theta}\det\pmb{\delta}_-\mathbf{f}\pmb{\delta}_-\dd\lambda\\
			&=\int_{\xi+t^{-1/2}}^{+\infty}\det\pmb{\delta}_-\mathbf{f}\pmb{\delta}_-\frac{\dd\ee^{-2\ii t\theta}}{-4\ii t(\lambda-\xi)}\\
			&=\frac{1}{4i\sqrt{t}}[\det\pmb{\delta}_-\mathbf{f}\pmb{\delta}_-](\xi+t^{-1/2})\\
			&+
			\frac{1}{4\ii t}\int_{\xi+t^{-1/2}}^{+\infty}\ee^{-2\ii t\theta}
			(\frac{(\det\pmb{\delta}_-)^{\prime}\mathbf{f}\pmb{\delta}_-}{\lambda-\xi}+
			\frac{\det\pmb{\delta}_-\mathbf{f}^{\prime}\pmb{\delta}_-}{\lambda-\xi}+
			\frac{\det\pmb{\delta}_-\mathbf{f}(\pmb{\delta}_-)^{\prime}}{\lambda-\xi}-
			\frac{\det\pmb{\delta}_-\mathbf{f}\pmb{\delta}_-}{(\lambda-\xi)^2})\dd\lambda\\
			&\equiv\mathrm{II}_1+\mathrm{II}_2+\mathrm{II}_3+\mathrm{II}_4+\mathrm{II}_5.
		\end{aligned}
	\end{equation*}
	For $\mathrm{II}_1$, 
	\begin{equation*}
		\left|\mathrm{II}_1\right|\lesssim 
		t^{-1/2}\|\det\pmb{\delta}_-\|_{L^{\infty}}\|\pmb{\delta}_-\|_{L^{\infty}}\left|\mathbf{f}(\xi+t^{-1/2})\right|
		\lesssim t^{-\frac{3}{4}}.
	\end{equation*}
	For $\mathrm{II}_2$, we consider the boundedness of $(\det\pmb{\delta}_-)^{\prime}$ 
	as $(X^{-1}_+)^{\prime}$ in Lemma \ref{bdd-V3-V4}, then we have 
	\begin{equation*}
		\begin{aligned}
			&|(\det\pmb{\delta}_-)^{\prime}|\le\mathrm{II}^{(1)}+\mathrm{II}^{(2)},\\
			&|\mathrm{II}^{(1)}|\lesssim\frac{1}{|s-\xi|},\quad
			\|\mathrm{II}^{(2)}\|_{L^2}\lesssim1.
		\end{aligned}
	\end{equation*}
	
	Therefore
	\begin{equation*}
		\begin{aligned}
			\left|\mathrm{II}_2\right|
			&\lesssim t^{-1}\|\pmb{\delta}_-\|_{L^{\infty}}
			\int_{\xi+t^{-1/2}}^{+\infty}
			\frac{(\left|\mathrm{II}^{(1)}\right|+\left|\mathrm{II}^{(2)}\right|)\left|\mathbf{f}\right|}
			{\left|\lambda-\xi\right|}\dd\lambda\\
			&\lesssim t^{-1}(\int_{\xi+t^{-1/2}}^{+\infty}
			\frac{\left|\lambda-\xi\right|^{1/2}}{\left|\lambda-\xi\right|^2}d\lambda
			+\|\mathrm{II}^{(2)}\|_{L^{2}}\|\mathbf{f}\|_{L^{\infty}}
			(\int_{\xi+t^{-1/2}}^{+\infty}\frac{1}{\left|\lambda-\xi\right|^2}\dd\lambda)^{1/2})\\
			&\lesssim t^{-\frac{3}{4}}.
		\end{aligned}
	\end{equation*}
	For $\mathrm{II}_3$, 
	\begin{equation*}
		\left|\mathrm{II}_3\right|\lesssim t^{-1}
		\|\mathbf{f}^{\prime}\|_{L^{2}}
		(\int_{\xi+t^{-1/2}}^{+\infty}\frac{1}{\left|\lambda-\xi\right|^2}\dd\lambda)^{1/2}
		\lesssim t^{-\frac{3}{4}}.
	\end{equation*}
	For $\mathrm{II}_4$, 
	by the boundedness of 
	$\|\pmb{\delta}_-^{\prime}\|_{L^2}$ in Lemma \ref{bdd-V3-V4},
	we have 
	\begin{equation*}
		\left|\mathrm{II}_4\right|
		\lesssim t^{-1}\|\pmb{\delta}^{\prime}_-\|_{L^{2}}
		(\int_{\xi+t^{-1/2}}^{+\infty}\frac{1}{\left|\lambda-\xi\right|^2}\dd\lambda)^{1/2}
		\lesssim t^{-\frac{3}{4}}.
	\end{equation*}
	For $\mathrm{II}_5$, 
	\begin{equation*}
		\left|\mathrm{II}_5\right|\lesssim t^{-1}
		\int_{\xi+t^{-1/2}}^{+\infty}
		\frac{\left|\lambda-\xi\right|^{1/2}}{\left|\lambda-\xi\right|^2}\dd\lambda
		\lesssim t^{-\frac{3}{4}}.
	\end{equation*}
	By the above calculations,
	\begin{equation*}
		\left|\int_{\xi+t^{-1/2}}^{+\infty}
		\ee^{-2\ii t\theta}\det\pmb{\delta}_-\mathbf{f}\pmb{\delta}_-\right|\lesssim t^{-\frac{3}{4}}.
	\end{equation*}
	Therefore $\left|\mathrm{I}\right|\lesssim t^{-\frac{3}{4}}$.
	
	For $\mathrm{II}$, by triangularity,
	\begin{equation*}
		\begin{aligned}
			\mathrm{II}
			&=\int(C_{\hat{\mathbf{w}}_2}\hat{\pmb{\mu}}_2)(\hat{\mathbf{w}}_1-\hat{\mathbf{w}}_2)\\
			&=\int (C^+_{\mathbb{R}_{\xi}} \hat{\pmb{\mu}}_2 \hat{\mathbf{w}}_2^-)(\hat{\mathbf{w}}_1^+-\hat{\mathbf{w}}_2^+)+
			\int (C^- _{\mathbb{R}_{\xi}}\hat{\pmb{\mu}}_2 \hat{\mathbf{w}}_2^+)(\hat{\mathbf{w}}_1^--\hat{w}_2^-)\\
			&\equiv\mathrm{II}_++\mathrm{II}_- .
		\end{aligned}
	\end{equation*}
	Extending the integrand trivially from $\mathbb{R}_\xi$ to a complete contour
	$\Gamma_{\xi}=\mathbb{R}\cup(\xi+\ee^{\ii\pi/2}\mathbb{R}_-)\cup(\xi+\ee^{-\ii\pi/2}\mathbb{R}_-)$ 
	and using the following properties
	\begin{equation}\label{property-C}
		C_{\Gamma_{\xi}}^+C_{\Gamma_{\xi}}^-
		=C_{\Gamma_{\xi}}^-C_{\Gamma_{\xi}}^+=0,\quad
		C^+_{\Gamma_{\xi}}-C^-_{\Gamma_{\xi}}=1,
	\end{equation}
	we can have 
	\begin{equation*}
		\begin{aligned}
			|\mathrm{II}_+|
			&=\left|\int (C^+_{\Gamma_{\xi}} \hat{\pmb{\mu}}_2 \hat{\mathbf{w}}_2^-)(C^+_{\Gamma_{\xi}}-C^-_{\Gamma_{\xi}})(\hat{\mathbf{w}}_1^+-\hat{\mathbf{w}}_2^+)\right|\\
			&=\left|-\int (\hat{\mathbf{w}}_1^+-\hat{\mathbf{w}}_2^+)C^-_{\Gamma_{\xi}}C^+_{\Gamma_{\xi}} (\hat{\pmb{\mu}}_2 \hat{\mathbf{w}}_2^-)
			-\int (C^+_{\Gamma_{\xi}} \hat{\pmb{\mu}}_2 \hat{\mathbf{w}}_2^-)C^-_{\Gamma_{\xi}}(\hat{\mathbf{w}}_1^+-\hat{\mathbf{w}}_2^+)\right|\\
			&=\left|\int (C^+_{\Gamma_{\xi}} \hat{\pmb{\mu}}_2 \hat{\mathbf{w}}_2^-)C^-_{\Gamma_{\xi}}(\hat{\mathbf{w}}_1^+-\hat{\mathbf{w}}_2^+)\right|.
		\end{aligned}
	\end{equation*}
	By Lemma \ref{lemma3-CNLS} and Lemma \ref{lemma1-CNLS}, 
	\begin{equation*}
		\begin{aligned}
			\|C^+ _{\Gamma_{\xi}}\hat{\pmb{\mu}}_2 \hat{\mathbf{w}}_2^-\|_{L^2}
			&\leq\|C^+_{\Gamma_{\xi}} (\hat{\pmb{\mu}}_2-\mathbb{I}_3) \hat{\mathbf{w}}_2^-\|_{L^2}+
			\|C^+ _{\Gamma_{\xi}} \hat{\mathbf{w}}_2^-\|_{L^2}\\
			&\lesssim\|\hat{\pmb{\mu}}_2-\mathbb{I}_3\|_{L^2}+t^{-\frac{1}{4}}\\
			&\lesssim t^{-\frac{1}{4}}.
		\end{aligned}
	\end{equation*}
	By Lemma \ref{lemma2-CNLS},
	\begin{equation*}
		\|C^-_{\Gamma_{\xi}}(\hat{\mathbf{w}}_1^+-\hat{\mathbf{w}}_2^+)\|_{L^2}
		\lesssim t^{-\frac{1}{2}}.
	\end{equation*}

	Hence 
	\begin{equation*}
		\|\mathrm{II}_+\|_{L^2}\leq\|C^+_{\Gamma_{\xi}} \hat{\pmb{\mu}}_2 \hat{\mathbf{w}}_2^-\|_{L^2}
		\|C^-_{\Gamma_{\xi}}(\hat{\mathbf{w}}_1^+-\hat{\mathbf{w}}_2^+)\|_{L^2}
		\lesssim t^{-\frac{3}{4}}.
	\end{equation*}
	
	The estimate for $\mathrm{II}_-$ is similar and therefore 
	$\left|\mathrm{II}\right|\lesssim t^{-\frac{3}{4}}$. Then we compute
	\begin{equation*}
		\hat{\pmb{\mu}}_1-\hat{\pmb{\mu}}_2
		=(1-C_{\hat{\mathbf{w}}_2})^{-1}C_{\hat{\mathbf{w}}_1-\hat{\mathbf{w}}_2}\hat{\pmb{\mu}}_1
		=C_{\hat{\mathbf{w}}_1-\hat{\mathbf{w}}_2}\hat{\pmb{\mu}}_1+C_{\hat{\mathbf{w}}_2}h ,
	\end{equation*}
	where $\pmb{h}=(1-C_{\hat{\mathbf{w}}_2})^{-1}C_{\hat{\mathbf{w}}_1-\hat{\mathbf{w}}_2}\hat{\pmb{\mu}}_1$. 
	Again by triangularity and equation \eqref{property-C},
	\begin{equation*}
		\begin{aligned}
			\mathrm{III}
			&=\int(C_{\hat{\mathbf{w}}_1-\hat{\mathbf{w}}_2}\hat{\pmb{\mu}}_1)\hat{\mathbf{w}}_1+\int(C_{\hat{\mathbf{w}}_2}\pmb{h})\hat{\mathbf{w}}_1 \\
			&=-\int\left(C^+_{\mathbb{R}_\xi}\hat{\pmb{\mu}}_1(\hat{\mathbf{w}}_1^--\hat{\mathbf{w}}_2^-)\right)C^-_{\mathbb{R}_\xi}
			\hat{\mathbf{w}}_1^++
			\int\left(C^-_{\mathbb{R}_\xi}\hat{\pmb{\mu}}_1(\hat{\mathbf{w}}_1^+-\hat{\mathbf{w}}_2^+)\right)C^+_{\mathbb{R}_\xi}\hat{\mathbf{w}}_1^- \\
			&\quad-\int\left(C^+_{\mathbb{R}_\xi}\pmb{h}\mathbf{w}_2^-\right)C^-_{\mathbb{R}_\xi}\hat{\mathbf{w}}_1^++
			\int\left(C^-_{\mathbb{R}_\xi}\pmb{h}\mathbf{w}_2^+\right)C^+_{\mathbb{R}_\xi}\hat{\mathbf{w}}_1^-\\
			&\equiv \hat{\mathrm{III}}_++\hat{\mathrm{III}}_-
			+\tilde{\mathrm{III}}_++\tilde{\mathrm{III}}_- .
		\end{aligned}
	\end{equation*}
	By Lemma \ref{lemma1-CNLS} and Lemma \ref{lemma4-CNLS},
	\begin{equation*}
		\left|\hat{\mathrm{III}}_+\right|\leq\|C^+_{\mathbb{R}_\xi}\hat{\pmb{\mu}}_1(\hat{\mathbf{w}}_1^--\hat{\mathbf{w}}_2^-)\|_{L^2}
		\|C^-_{\mathbb{R}_\xi}\hat{\mathbf{w}}_1^+\|_{L^2}\lesssim t^{-\frac{1}{2}+\frac{1}{(2p)}}t^{-\frac{1}{4}}
		\lesssim t^{-\frac{3}{4}+\frac{1}{(2p)}},
	\end{equation*}
	with a similar estimate for $\mathrm{III}_-^{'}$. Again by Lemma \ref{lemma4-CNLS}, 
	$\|C^{\pm}_{\mathbb{R}_\xi}\hat{\pmb{\mu}}_1(\hat{\mathbf{w}}_1^{\mp}-\hat{\mathbf{w}}_2^{\mp})\|_{L^2}\lesssim t^{-\frac{1}{2}+\frac{1}{(2p)}}$ 
	and hence $\|\pmb{h}\|_{L^2}\lesssim t^{-\frac{1}{2}+\frac{1}{(2p)}}$. Together with 
	Lemma \ref{lemma1-CNLS}, this implies that 
	$\left|\tilde{\mathrm{III}}_{\pm}\right|\lesssim t^{-\frac{3}{4}+\frac{1}{(2p)}}$ and 
	therefore $\left|\mathrm{III}\right|\lesssim t^{-\frac{3}{4}+\frac{1}{(2p)}}$. 
	Adding the estimates for $\mathrm{I}$, $\mathrm{II}$, and $\mathrm{III}$, the result follows immediately.
\end{proof}

\section*{Appendix B}
\setcounter{equation}{0} % 重置方程编号
\renewcommand{\theequation}{B.\arabic{equation}} % 设置方程编号格式
In the Appendix B, we will provide a detailed procedure for solving RHP \ref{rhp-N} and 
obtaining the long-time asymptotic behavior of the solution $\mathbf{N}(y)$. 
Set $\pmb{\sigma}_3=
\begin{pmatrix}
	1&0\\
	0&-1
\end{pmatrix}$, $\kappa_\xi=\kappa(\xi)$  and 
$\pmb{\Psi}(y)=\eta^{-\pmb{\sigma}_3}\mathbf{N}(y)\eta^{\pmb{\sigma}_3}y^{\ii\kappa_\xi
	\pmb{\sigma}_3}\mathrm{\ee}^{-\frac14\ii y^2\pmb{\sigma}_3}$, 
then we have 
\begin{equation*}
	\pmb{\Psi}_+(y)=\pmb{\Psi}_-(y)\mathbf{v}_0,\quad \mathbf{v}_0=y^{-\ii\kappa_\xi\hat{\pmb{\sigma}}_3}
	\mathrm{e}^{\frac14\ii y^2\hat{\pmb{\sigma}}_3}
	\eta^{-\hat{\pmb{\sigma}}_3}\V_{N}(y).
\end{equation*}
Clearly, the jump matrix $\mathbf{v}_0$ is constant along each ray. Thus
\begin{equation*}
	\frac{\mathrm{d}\pmb{\Psi}_+(y)}{\mathrm{d}y}=
	\frac{\mathrm{d}\pmb{\Psi}_-(y)}{\mathrm{d}y}\mathbf{v}_0,
\end{equation*}
and $\frac{\mathrm{d}\pmb{\Psi}(y)}{\mathrm{d}y}\pmb{\Psi}^{-1}(y)$ has no jump discontinuity 
along any of the rays. Moreover, it follows from the expansion
$\mathbf{N}(y)=\mathbb{I}_3+\frac{\mathbf{N}_1}{y}+O(y^{-2}),\quad y\rightarrow\infty$ that 
\begin{equation*}
	\begin{aligned}
		\frac{\mathrm{d}\pmb{\Psi}}{\mathrm{d}y}\pmb{\Psi}^{-1}
		&=\eta^{-\hat{\pmb{\sigma}}_3}\left(\frac{\dd \mathbf{N}}{\dd y}\mathbf{N}^{-1}-
		\frac{1}{2}\ii y\mathbf{N}\pmb{\sigma}_3\mathbf{N}^{-1}+
		\ii\kappa_\xi y^{-1}\mathbf{N}\pmb{\sigma}_3\mathbf{N}^{-1}\right)\\
		&=\oo(y^{-1})-\frac{1}{2}\ii y\pmb{\sigma}_3-\frac12\ii\eta^{-\hat{\pmb{\sigma}}_3}[\mathbf{N}_1,\pmb{\sigma}_3].
	\end{aligned}
\end{equation*}
By the Liouville theorem, we can obtain 
\begin{equation}\label{2*2Psi-equation}
	\frac{\mathrm{d}\pmb{\Psi}(y)}{\mathrm{d}y}+\frac12\ii y\pmb{\sigma}_3\pmb{\Psi}(y)
	=\beta\pmb{\Psi}(y),
\end{equation}
where 
\begin{equation}\label{relation-beta-N}
	\beta=-\frac12\ii\eta^{-\hat{\pmb{\sigma}}_3}[\mathbf{N}_1,\pmb{\sigma}_3]
	=\begin{pmatrix}
		0&\ii\eta^{-2}[\mathbf{N}_1]_{12}\\
		-\ii\eta^{2}[\mathbf{N}_1]_{21}&0
	\end{pmatrix}=
	\begin{pmatrix}
		0&\beta_{12}\\
		\beta_{21}&0
	\end{pmatrix}.
\end{equation}
We can further show that the solution $\mathbf{N}(y)$ for RHP \ref{rhp-N} is unique, 
and then obtain the symmetry of $\mathbf{N}(y)$ 
\begin{equation}
	\mathbf{N}^{-1}(y)=\mathbf{N}^\dagger(y^*),
\end{equation}
which implies 
\begin{equation}\label{relation-b12-b21}
	\beta_{12}=-\beta^*_{21}.
\end{equation}

From equation \eqref{2*2Psi-equation} we obain
\begin{equation}\label{relation-Psi-ij}
	\begin{aligned}
		\frac{\mathrm{d}\pmb{\Psi}_{11}(y)}{\mathrm{d}y}+\frac \ii2y\pmb{\Psi}_{11}& =\beta_{12}\pmb{\Psi}_{21},\\
		\frac{\mathrm{d}\pmb{\Psi}_{12}(y)}{\mathrm{d}y}+\frac \ii2y\pmb{\Psi}_{12}& =\beta_{12}\pmb{\Psi}_{22},\\
		\frac{\mathrm{d}\pmb{\Psi}_{21}(y)}{\mathrm{d}y}-\frac \ii2y\pmb{\Psi}_{21}& =\beta_{21}\pmb{\Psi}_{11},\\
		\frac{\mathrm{d}\pmb{\Psi}_{22}(y)}{\mathrm{d}y}-\frac \ii2y\pmb{\Psi}_{22}& =\beta_{21}\pmb{\Psi}_{12},
	\end{aligned}
\end{equation}
and
\begin{equation}\label{psi11-psi22}
	\begin{aligned}
		\frac{\mathrm{d}^2\pmb{\Psi}_{11}(y)}{\mathrm{d}y^2}& =
		\left(-\frac \ii2-\frac14y^2+\beta_{12}\beta_{21}\right)\pmb{\Psi}_{11}(y), \\
		\frac{\mathrm{d}^2\pmb{\Psi}_{22}(y)}{\mathrm{d}y^2}& 
		=\left(\frac \ii2-\frac14y^2+\beta_{12}\beta_{21}\right)\pmb{\Psi}_{22}(y).
	\end{aligned}
\end{equation}

As is well known, the Weber equation
\begin{equation}\label{weber-eq}
	\frac{\mathrm{d}^2g(\zeta)}{\mathrm{d}\zeta^2}+
	\left(a+\frac12-\frac{\zeta^2}4\right)g(\zeta)=0
\end{equation}
has the solution
\begin{equation}\label{soliton-weber}
	g(\zeta)=c_1D_a(\zeta)+c_2D_a(-\zeta),
\end{equation}
where $D_a(\cdot)$ denotes the standard parabolic-cylinder function \cite{NIST_DLMF_2016} and satisfies
\begin{equation}\label{parabolic-cylinder-function}
	\begin{aligned}
		&\frac{\mathrm{d}D_a(\zeta)}{\mathrm{d}\zeta}+\frac\zeta2D_a(\zeta)-aD_{a-1}(\zeta)=0,\\
		&D_a(\pm\zeta)=\frac{\Gamma(a+1)\mathrm{e}^{\frac{\ii\pi a}2}}
		{\sqrt{2\pi}}D_{-a-1}(\pm \ii\zeta)+\frac{\Gamma(a+1)\mathrm{e}^{-\frac{\ii\pi a}2}}
		{\sqrt{2\pi}}D_{-a-1}(\mp \ii\zeta).
	\end{aligned}
\end{equation}
From Whittaker and Watson \cite{Whittaker_Watson_1927}, we obain the asymptotic properties of $D_a(\cdot)$
\begin{equation}\label{asymptotic-Da}
	D_a(\zeta)=
	\begin{cases}
		\zeta^a \ee^{-\frac{\zeta^2}{4}}(1+\mathcal{O}(\zeta^{-2})),&\quad|\arg(\zeta)|<\frac{3\pi}{4},\\
		\zeta^a \ee^{-\frac{\zeta^2}{4}}(1+\mathcal{O}(\zeta^{-2}))
		-\frac{\sqrt{2\pi}}{\Gamma(-a)}\mathrm e^{a\pi \ii
			+\frac{\zeta^2}{4}}\zeta^{-a-1}(1+\mathcal{O}(\zeta^{-2})),&\quad\frac{\pi}{4}<\arg(\zeta)<\frac{5\pi}{4},\\
		\zeta^a\mathrm e^{-\frac{\zeta^2}{4}}(1+\mathcal{O}(\zeta^{-2}))
		-\frac{\sqrt{2\pi}}{\Gamma(-a)}\mathrm e^{-a\pi \ii
			+\frac{\zeta^2}{4}}\zeta^{-a-1}(1+\mathcal{O}(\zeta^{-2})),&\quad-\frac{5\pi}{4}<\arg(\zeta)<-\frac{\pi}{4},
	\end{cases}
\end{equation}
where $\Gamma$ is the Gamma function.
Set $a=-\ii\beta_{12}\beta_{21}$, it follows from equation \eqref{psi11-psi22} and 
equation \eqref{soliton-weber} that 
\begin{equation}
	\begin{aligned}
		&\pmb{\Psi}_{11}(y) =c_1D_{-a}\left(\mathrm{e}^{-\frac{3\pi \ii}{4}}y\right)
		+c_2D_{-a}\left(\mathrm{e}^{\frac{\pi \ii}{4}}y\right), \\
		&\pmb{\Psi}_{22}(y) =c_3D_a\left(\mathrm{e}^{\frac{3\pi \ii}4}y\right)+
		c_4D_a\left(\mathrm{e}^{-\frac{\pi \ii}4}y\right) .
	\end{aligned}
\end{equation}

As $y\rightarrow\infty$, 
$\pmb{\Psi}(y)y^{-\ii\kappa_\xi\pmb{\sigma}_3}\mathrm{e}^{\frac14\ii y^2\pmb{\sigma}_3}=\eta^{-\pmb{\sigma}_3}\mathbf{N}(y)\eta^{\pmb{\sigma}_3}\rightarrow\mathbb{I}_2$ and hence
\begin{equation}\label{asymptotic-Psi}
	\pmb{\Psi}_{11}(y)\rightarrow y^{\ii\kappa_\xi}\mathrm{e}^{-\frac14\ii y^2},\quad
	\pmb{\Psi}_{22}(y)\rightarrow y^{-\ii\kappa_\xi}\mathrm{e}^{\frac14\ii y^2}.
\end{equation}
If $\mathrm{arg}(y)\in[-\frac{\pi}{4},\frac{3\pi}{4}]$, we have
\begin{equation}
	D_a(y\ee^{-\frac{\pi}{4}\ii})\to 
	\ee^{-\frac{\pi}{4}\ii a} y^a \ee^{\frac14\ii y^2},
\end{equation}
and then $c_3=0$, $c_4=\ee^{\frac{\pi}{4}k_\xi}$, hence
\begin{equation*}
	\pmb{\Psi}_{22}(y)=\ee^{\frac{\pi\kappa_\xi}{4}}
	D_a(y\ee^{-\frac{\pi}{4}\ii}),\quad a=-\ii\kappa_\xi,\quad 
	\kappa_\xi=\beta_{12}\beta_{21},
\end{equation*}
by equation \eqref{asymptotic-Da} and equation \eqref{asymptotic-Psi}. Consequently, 
it follows from equation \eqref{relation-Psi-ij} and equation \eqref{parabolic-cylinder-function} that
\begin{equation*}
	\pmb{\Psi}_{12}(y)=\beta_{12}\ee^{\frac{\pi}{4}\kappa_\xi-\frac34\pi \ii}D_{a-1}(y\ee^{-\frac{\pi}{4}\ii}).
\end{equation*}
If $\arg(y)\in[\frac{3\pi}{4},\frac{7\pi}{4}]$, we can obain similar conclusions
\begin{equation*}
	\pmb{\Psi}_{22}(y)=\ee^{\frac{-3\pi\kappa_\xi}{4}}D_a(y\ee^{\frac{3\pi}{4}\ii}),\quad
	\pmb{\Psi}_{12}(y)=\beta_{12}\ee^{-\frac{3\pi\kappa_\xi}{4}+\frac14\pi \ii}D_{a-1}(y\ee^{\frac{3\pi}{4}\ii}).
\end{equation*}
If $\arg(y)\in[-\frac{3\pi}{4},\frac{\pi}{4}]$, we have
\begin{equation}
	D_{-a}(y\ee^{\frac{\pi}{4}\ii})\to 
	\ee^{-\frac{\pi}{4}ia} y^{-a} \ee^{-\frac14iy^2},
\end{equation}
and then $c_1=0$, $c_2=\ee^{\frac{\pi}{4}k_\xi}$, hence
\begin{equation*}
	\pmb{\Psi}_{11}(y)=\ee^{\frac{\pi\kappa_\xi}{4}}D_{-a}(y\ee^{\frac{\pi}{4}\ii}).
\end{equation*}
Consequently, 
it follows from equation \eqref{relation-Psi-ij} and equation \eqref{parabolic-cylinder-function} that
\begin{equation*}
	\pmb{\Psi}_{21}(y)=\beta_{21} \ee^{\frac{\pi}{4}\kappa_\xi+\frac34\pi \ii}D_{-a-1}(y\ee^{\frac{\pi}{4}\ii}).
\end{equation*}
If $\arg(y)\in[\frac{\pi}{4},\frac{5\pi}{4}]$, we can also obain similar conclusions
\begin{equation*}
	\pmb{\Psi}_{11}(y)=
	\ee^{-\frac{3\pi}{4}\kappa_\xi}D_{-a}(y\ee^{-\frac{3\pi \ii}{4}}),\quad
	\pmb{\Psi}_{21}(y)=\beta_{21}\ee^{-\frac{3\pi}{4}\kappa_\xi-\frac14\pi \ii}D_{-a-1}(y\ee^{-\frac{3\pi}{4}\ii}).
\end{equation*}

Hence, the unique solution $\pmb{\Psi}(y)$ to problem \eqref{2*2Psi-equation} can be given by
\begin{equation}\label{weber-equ-Psi}
	\pmb{\Psi}(y)=
	\begin{cases}
		\begin{bmatrix}
			\ee^{\frac{\pi\kappa_\xi}{4}}D_{-a}(y\ee^{\frac{\pi}{4}\ii})&\beta_{12}\ee^{\frac{\pi}{4}\kappa_\xi-\frac34\pi \ii}D_{a-1}(y\ee^{-\frac{\pi}{4}\ii})\\
			\beta_{21} \ee^{\frac{\pi}{4}\kappa_\xi+\frac34\pi \ii}D_{-a-1}(y\ee^{\frac{\pi}{4}\ii})&\ee^{\frac{\pi\kappa_\xi}{4}}D_a(y\ee^{-\frac{\pi}{4}\ii})
		\end{bmatrix},
		&\mathrm{arg}(y)\in[-\frac{\pi}{4},\frac{\pi}{4}],\\
		\begin{bmatrix}
			\ee^{-\frac{3\pi}{4}\kappa_\xi}D_{-a}(y\ee^{-\frac{3\pi \ii}{4}})&\beta_{12}\ee^{\frac{\pi}{4}\kappa_\xi-\frac34\pi \ii}D_{a-1}(y\ee^{-\frac{\pi}{4}\ii})\\
			\beta_{21}\ee^{-\frac{3\pi}{4}\kappa_\xi-\frac14\pi \ii}D_{-a-1}(y\ee^{-\frac{3\pi}{4}\ii})&\ee^{\frac{\pi\kappa_\xi}{4}}D_a(y\ee^{-\frac{\pi}{4}\ii})
		\end{bmatrix},
		&\mathrm{arg}(y)\in[\frac{\pi}{4},\frac{3\pi}{4}],\\
		\begin{bmatrix}
			\ee^{-\frac{3\pi}{4}\kappa_\xi}D_{-a}(y\ee^{-\frac{3\pi \ii}{4}})&\beta_{12}\ee^{-\frac{3\pi\kappa_\xi}{4}+\frac14\pi \ii}D_{a-1}(y\ee^{\frac{3\pi}{4}\ii})\\
			\beta_{21}\ee^{-\frac{3\pi}{4}\kappa_\xi-\frac14\pi \ii}D_{-a-1}(y\ee^{-\frac{3\pi}{4}\ii})&\ee^{\frac{-3\pi\kappa_\xi}{4}}D_a(y\ee^{\frac{3\pi}{4}\ii})
		\end{bmatrix},
		&\mathrm{arg}(y)\in[\frac{3\pi}{4},\frac{5\pi}{4}],\\
		\begin{bmatrix}
			\ee^{\frac{\pi\kappa_\xi}{4}}D_{-a}(y\ee^{\frac{\pi}{4}\ii})&\beta_{12}\ee^{-\frac{3\pi\kappa_\xi}{4}+\frac14\pi \ii}D_{a-1}(y\ee^{\frac{3\pi}{4}\ii})\\
			\beta_{21} \ee^{\frac{\pi}{4}\kappa_\xi+\frac34\pi \ii}D_{-a-1}(y\ee^{\frac{\pi}{4}\ii})&\ee^{\frac{-3\pi\kappa_\xi}{4}}D_a(y\ee^{\frac{3\pi}{4}\ii})
		\end{bmatrix},
		&\mathrm{arg}(y)\in[-\frac{3\pi}{4},-\frac{\pi}{4}].\\
	\end{cases}
\end{equation}
Along the ray $\arg(y)=-\frac\pi4$, we have 
$\pmb{\Psi}_{12}^+=|\pmb{A}(\xi)|\pmb{\Psi}_{11}^-+\pmb{\Psi}_{12}^-$ and hence
\begin{equation}\label{coefficient1}
	|\pmb{A}(\xi)|\ee^{\frac{\pi}{4}\kappa}D_{-a}(y\ee^{\frac{\pi}{4}\ii})=
	-\beta_{12}\ee^{\frac{\pi}{4}\kappa-\frac34\pi \ii}D_{a-1}(y\ee^{-\frac{\pi}{4}\ii})
	+\beta_{12}\ee^{-\frac{3\pi}{4}\kappa+\frac14\pi \ii}D_{a-1}(y\ee^{\frac{3\pi}{4}\ii}).
\end{equation}
It follows from equation \eqref{parabolic-cylinder-function} that 
\begin{equation}\label{coefficient2}
	\begin{aligned}
		|\pmb{A}(\xi)|\ee^{\frac{\pi}{4}\kappa_\xi}D_{-a}(y\ee^{\frac{\pi}{4}\ii})
		=&|\pmb{A}(\xi)|\ee^{\frac{\pi}{4}\kappa_\xi}
		\frac{1}{\sqrt{2\pi}}\Gamma(-a+1)\ee^{-\frac\pi2\ii a}D_{a-1}(y\ee^{\frac{3\pi}{4}\ii})\\
		&+|\pmb{A}(\xi)|\ee^{\frac{\pi}{4}\kappa_\xi}
		\frac{1}{\sqrt{2\pi}}\Gamma(-a+1)\ee^{\frac\pi2\ii a}D_{a-1}(y\ee^{-\frac{\pi}{4}\ii}).
	\end{aligned}
\end{equation}
Comparing the coefficient in equation \eqref{coefficient1} and equation \eqref{coefficient2}, we have
\begin{equation}\label{def-beta}
	\beta_{12}=
	\frac{|\pmb{A}(\xi)|}{\sqrt{2\pi}}\ee^{\frac{\pi\kappa_{\xi}}{2}+
		\frac{5\pi \ii}{4}}\kappa_{\xi}\Gamma(\ii\kappa_{\xi}).
\end{equation}
\newpage
\bibliographystyle{unsrt}
\bibliography{reference}
\end{document}